\newtheorem{lemma}{Lemma}
\newtheorem{theorem}{Theorem}
\newtheorem{example}{Example}
\newtheorem{corollary}{Corollary}
\newcommand\rk{\normalfont{\mbox{rk}}}
\newcommand\bU{{\mbox{\bf U}}}
\newcommand\bV{{\mbox{\bf V}}}
\newcommand\bB{{\mbox{\bf B}}}
\title{The Capacity of $3$ User\\ Linear Computation Broadcast}
\author{Yuhang Yao, Syed A. Jafar\\
{\small Center for Pervasive Communications and Computing (CPCC)}\\
{\small University of California Irvine, Irvine, CA 92697}\\
{\small \it Email: \{yuhangy5, syed\}@uci.edu}
}
\date{}      
\begin{document}
\maketitle

\begin{abstract}
The  $K$ User Linear Computation Broadcast (LCBC) problem is comprised of $d$ dimensional data (from $\mathbb{F}_q$), that is fully available to a central server, and $K$ users, who require various linear computations of the data, and have prior knowledge of various linear functions of the data as side-information. The optimal broadcast cost is the minimum number of $q$-ary symbols to be broadcast by the server per computation instance, for every user to retrieve its desired computation. The reciprocal of the optimal broadcast cost is called the capacity. The main contribution of this paper is the exact capacity characterization for the $K=3$ user LCBC for all cases, i.e., for arbitrary finite fields $\mathbb{F}_q$, arbitrary data dimension $d$, and arbitrary linear side-informations and  demands at each user. A remarkable aspect of the converse (impossibility result) is that unlike the $2$ user LCBC whose capacity was determined previously, the entropic formulation (where the entropies of demands and side-informations are specified, but not their functional forms) is insufficient to obtain a tight converse for the $3$ user LCBC. Instead, the converse exploits functional submodularity. Notable aspects of achievability include sufficiency of vector linear coding schemes,  subspace decompositions that parallel those found previously by Yao Wang  in degrees of freedom (DoF) studies of wireless broadcast networks, and efficiency tradeoffs that lead to a constrained waterfilling solution. Random coding arguments are invoked to resolve compatibility issues that arise as each user  has a different view of the subspace decomposition, conditioned on its own side-information. 
\end{abstract}
{\let\thefootnote\relax\footnote{{Presented in part at the 2022 IEEE International Symposium on Information Theory  (ISIT 2022) \cite{Yao_Jafar_ISIT22}.}}\addtocounter{footnote}{-1}}
 
\allowdisplaybreaks

\section{Introduction}
Recent years have seen explosive growth both in the number of devices connected to communication networks, as well as in the amount of data generated, shared, and collaboratively processed by these devices. With machine communication expected to dominate human communication, future communication networks will increasingly be used in the service of computation tasks \cite{Samsung_6G}. Along with the processing power of connected devices, a key determining factor of the potential of these `\emph{computation networks}' will be the fundamental limit of their communication-efficiency. Despite a multitude of advances spanning several decades \cite{Witsenhausen, Korner_Marton_sum, Ahlswede_Cai_CC, Orlitsky_Roche, Doshi_Shah_Medard_Effros, Rai_Dey, Giridhar_Kumar, Kowshik_Kumar, Ramamoorthy_Langberg, Feizi_Medard, Appuswamy1, Appuswamy2, Appuswamy3, Guang_Yeung_Yang_Li}, the capacity limits of computation networks remain largely unknown. Remarkably, this is the case even in the most basic of scenarios such as computational multiple access and broadcast, the presumptive starting points for developing a cohesive theory of computation networks. It is also noteworthy that many applications of recent interest, such as coded caching \cite{Maddah_Ali_Niesen, Yu_Maddah_Avestimehr_exact, Caching_D2D}, private information retrieval \cite{PIRfirstjournal,  PIR_tutorial}, coded MapReduce \cite{CompCommTradeoff}, distributed storage exact repair \cite{Dimakis_survey, DSS_overview, toni}, index coding \cite{Birk_Kol_Trans, Young_Han_FnT}, coded computing \cite{CodedComputing_Survey, Coding_Fog_Computing}, data shuffling \cite{Lee_Lam_Pedarsani}, federated learning \cite{FLsurvey1} and secure aggregation \cite{FastSecAgg}, are essentially linear computation multiple access (LCMAC) or broadcast (LCBC) settings with additional application-specific constraints. Future developments, say in networked VR/AR technology \cite{Networked_VR, Saad_VR1,Medard_VR, Zhao_Cui_Yang_Shamai, Saad_6G,Samsung_6G}, will similarly need linear broadcast and multiple access computational networks for coordination and synchronization \cite{HUYGENS, Fei_Chen_Wang_Jafar, Cuff_coordination} of users' perspectives across space, typically computed as linear projections of real-world coordinates. Evidently, beyond their significance as building blocks, LCMAC/LCBC networks are  important in and of themselves.

The collaborative, task-oriented, and interactive character of computation networks manifests in \emph{data dependencies}, and an abundance of \emph{side-information} accumulated at each node from prior computations on overlapping datasets. Both data dependencies and side-information significantly impact the capacity of computation networks. Furthermore,  because of the inherently algorithmic character of machine communication, the underlying structures of data dependencies and side-information are often predictable, and may be exploited in principled ways to improve communication efficiency. Indeed, both of these aspects are central to the computation broadcast (CBC) problem, an elemental one-to-many computation network  studied recently in \cite{Sun_Jafar_CBC}. The CBC setting is comprised of data stored at a central server, and multiple users, each of whom is given some function of the data as side-information and wishes to retrieve some other function of the data. The goal  is to find $\Delta^*$,  which is the least amount of information per computation that the server must broadcast such that all the users are able to compute their respective desired functions.  The capacity of CBC is defined as $C=1/\Delta^*$.

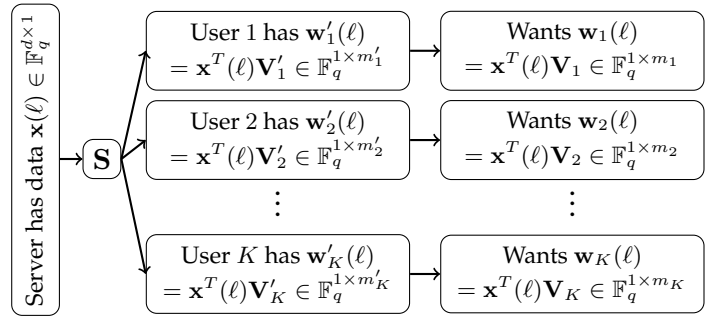
\begin{figure}[!h]
\center
\def\colh{white}
\def\colw{white}
\begin{tikzpicture}[yscale=0.9, xscale=0.9]
\def\h{1.1}
\node at (-1.6,-1.6)  [draw=black, rounded corners, fill=\colw] (S)  { ${\bf S}$};
\node[draw=black, fill=\colh,  rounded corners,minimum width=3cm, rotate=90, above left=1.8cm and 0.3cm of S] (T)  [align=center]  {\footnotesize Server has data  ${\bf x}(\ell)\in\mathbb{F}_q^{d\times 1}$};
\node[draw=black, fill=\colh, rounded corners, align=center, minimum width=3.5cm] (U1) at (1,0*\h) [align=center] {\footnotesize User $1$ has ${\bf w}_1'(\ell)$\\[-0.1ex] \footnotesize $={\bf x}^T(\ell){\bf V}_1'\in\mathbb{F}_q^{1\times m_1'}$};
\node[draw=black, fill=\colh, rounded corners, align=center, minimum width=3.5cm] (U2) at (1,-1.2*\h)  [align=center] {\footnotesize User $2$ has ${\bf w}_2'(\ell)$\\[-0.1ex] \footnotesize $={\bf x}^T(\ell){\bf V}_2'\in\mathbb{F}_q^{1\times m_2'}$};
\node[draw=black, fill=\colh, rounded corners, align=center, minimum width=3.5cm] (U3) at (1,-3*\h)  [align=center] {\footnotesize User $K$ has ${\bf w}_K'(\ell)$\\[-0.1ex] \footnotesize $={\bf x}^T(\ell){\bf V}_K'\in\mathbb{F}_q^{1\times m_K'}$};
\node[draw=black, right=0.4cm of U1, align=center, minimum width=3.5cm,  rounded corners, fill=\colw](W1)  {\footnotesize Wants ${\bf w}_1(\ell)$\\[-0.1ex] \footnotesize $={\bf x}^T(\ell){\bf V}_1\in\mathbb{F}_q^{1\times m_1}$};
\node[draw=black, right=0.4cm of U2, align=center,  minimum width=3.5cm, rounded corners, fill=\colw](W2) {\footnotesize Wants ${\bf w}_2(\ell)$\\[-0.1ex] \footnotesize $={\bf x}^T(\ell){\bf V}_2\in\mathbb{F}_q^{1\times m_2}$};
\node[draw=black, right=0.4cm of U3, align=center, minimum width=3.5cm, rounded corners, fill=\colw](W3)  {\footnotesize Wants ${\bf w}_K(\ell)$\\[-0.1ex] \footnotesize $={\bf x}^T(\ell){\bf V}_K\in\mathbb{F}_q^{1\times m_K}$};
\node[below=-0.2cm of W2] {\large $\vdots$};
\node[below=-0.2cm of U2] {\large $\vdots$};
\draw[thick, ->] (S.west-|T.south) -- (S.west);
\draw[thick, ->] (U1) -- (W1);
\draw[thick, ->] (U2) -- (W2);
\draw[thick, ->] (U3) -- (W3);
\draw[thick, ->] (S.east) -- (U1.west);
\draw[thick, ->] (S.east) -- (U2.west);
\draw[thick, ->] (S.east) -- (U3.west);
\end{tikzpicture}

\caption{\small The $\ell^{th}$  instance of the LCBC$\Big(\mathbb{F}_q, K,d,(m_k,m_k')_{k\in[K]},({\bf V}_k,{\bf V}_k')_{k\in[K]}\Big)$, $\ell\in\mathbb{N}$.}\label{fig:LCBC}
\end{figure}

The main result of \cite{Sun_Jafar_CBC} is an exact  capacity characterization for $K=2$ user \emph{linear} computation broadcast (LCBC), where each user's demand and side-information are linear functions of the data. The $K$ user LCBC problem, illustrated in Figure \ref{fig:LCBC}, is specified by the parameters $\big(\mathbb{F}_q, K,d,(m_k,m_k')_{k\in[K]},({\bf V}_k,{\bf V}_k')_{k\in[K]}\big)$, namely the finite field $\mathbb{F}_q$, the number of users $K$, the data dimension $d$, matrices ${\bf V}_k\in\mathbb{F}_q^{d\times m_k}$ that identify the $m_k$ dimensions of User $k$'s demand, and matrices ${\bf V}_k'\in\mathbb{F}_q^{d\times m_k'}$ that identify the $m_k'$ dimensions of User $k$'s side-information, for all $k\in[K]$. The index $\ell\in\mathbb{N}$ in Figure \ref{fig:LCBC} identifies the $\ell^{th}$ computation instance, corresponding to the $\ell^{th}$ instance of the data vector, ${\bf x}(\ell)\in\mathbb{F}_q^{d\times 1}$, such that User $k$, $k\in[K]$, wants ${\bf w}_k(\ell)={\bf x}^T(\ell){\bf V}_k$ and has ${\bf w}_k'(\ell)={\bf x}^T(\ell){\bf V}_k'$ as side-information.  Following a typical information theoretic formulation, multiple (say $L$)  instances may be considered jointly by a coding scheme for potential gains in efficiency. $L$ is called the batch size and may be chosen freely by a coding scheme. A coding scheme that satisfies all the users' demands across $L$ computation instances by broadcasting a total of $N$ $q$-ary symbols, achieves rate $R=L/N$, and broadcast cost per computation $\Delta = N/L=1/R$. The goal is to find the supremum of achievable rates (capacity $C$), or equivalently, the infimum of achievable broadcast costs per computation ($\Delta^*=1/C$) across all feasible coding schemes.
 We refer the reader to Section \ref{sec:notation} to clarify notational aspects, and to Section \ref{sec:probstat} for details of the  problem formulation. For $K=2$ users, the optimal  broadcast cost  is found in \cite{Sun_Jafar_CBC} to be  $\Delta^*= \max\Big(\rk_q([{\bf V}_i, {\bf V}_i'])-\rk_q({\bf V}_i')+ \rk_q([{\bf V}_1, {\bf V}_2, {\bf V}_1', {\bf V}_2'])-\rk_q([{\bf V}_i, {\bf V}_i', {\bf V}_j'])\Big)$, where $\rk_q()$ is the matrix rank function  over $\mathbb{F}_q$, and the $\max$ is over { $(i,j)\in\{(1,2), (2,1)\}$}.

The scope of LCBC includes problems such as index coding \cite{Yossef_Birk_Jayram_Kol_Trans, Birk_Kol_Trans, Rouayheb_Sprintson_Georghiades} that have been extensively studied and yet remain open in general. While many instances of index coding have been solved from a variety of perspectives \cite{Young_Han_FnT, Berliner_Langberg, Ong_single_uniprior, Jafar_TIM, Agarwal_Flodin_Mazumdar,Blasiak_Kleinberg_Lubetzky_2010 }, little is known about the optimal broadcast cost for the general index coding problem. It is shown in \cite{Birk_Kol_Trans} that for \emph{scalar linear}  index coding, the optimal broadcast cost  can be found in general by solving a min-rank problem. The min-rank solution has been extended to index coding with coded side-information in \cite{Lee_Dimakis_Heath} and is not difficult to further generalize to LCBC. However, on top of the difficulty of matrix rank minimizations (known to be NP-hard \cite[Thm. 3.1]{Peeters_min_rank}, \cite{chawin2023improved, langberg2011hardness}), scalar linear coding is only one of many possible coding schemes, and it is well known that capacity achieving schemes need not be scalar or linear, even for index coding \cite{Lubetzky_Stav, Blasiak_Kleinberg_Lubetzky_2011, Maleki_Cadambe_Jafar}. Thus, finding the capacity of LCBC in general is at least as hard as solving the general index coding problem.

On the other hand,  index coding problems constitute only a small subset of all possible LCBC instances. The special cases of LCBC that yield index coding problems are precisely those where all the columns of ${\bf V}_k,{\bf V}_k'$ can be represented as standard basis vectors. Evidently, LCBC allows a  significantly richer research space for developing new insights. This is why for LCBC, even settings with only $2,3$ users are interesting and insightful, whereas such settings would be trivial for index coding. The richer space of LCBC problems is particularly valuable if it  is  amenable to information theoretic analysis. Intrigued by this possibility, in this work we explore what new technical challenges  might emerge in the LCBC setting when we go from $2$ to $3$ users.  

The main result of this work is the exact capacity of the $3$ user LCBC for all cases, i.e., for arbitrary $\mathbb{F}_q$, arbitrary data dimension $d$, and arbitrary demands and side-informations $\bV_k, \bV_k'$ for each user, $k\in\{1,2,3\}$. An explicit expression for the capacity, $C$, is presented in Theorem \ref{thm:main}, and depends on the dimensions (ranks) of various unions and intersections of subspaces corresponding to the users' desired computations and side-information. The intuition behind the explicit form becomes more transparent when it is viewed as the solution to a linear program, in an alternative formulation of the capacity result, presented in Theorem \ref{thm:LP}. The linear program sheds light on the key ideas behind the optimal coding scheme. One of these ideas is a decomposition of the collective signal spaces of the three users (column spans of the $[{\bf V}_k',\bV_k]$ matrices) into distinct subspaces that allow different levels of communication efficiency. Remarkably, this decomposition,  which is formalized in Lemma \ref{lemma:decomposition}, closely parallels (see Appendix \ref{app:comparison}) a corresponding decomposition previously obtained in degrees of freedom (DoF) studies of the $3$ user MIMO broadcast channel in \cite{wang2017degrees},  underscoring its fundamental significance. Facilitated by the subspace decomposition, the linear program formulation of Theorem \ref{thm:LP} reveals a non-trivial tradeoff between the number of dimensions of broadcast that are drawn from each subspace, and leads to a constrained waterfilling solution in Section \ref{sec:solveLP}. What makes the achievability especially challenging is that the users have different (seemingly incompatible) views of the useful information within each subspace depending on their respective side-informations. Random coding arguments are invoked to find broadcast dimensions  for the optimal scheme that are useful across the different perspectives. Another remarkable aspect of the capacity result is that non-linear schemes are not needed for the $3$ user LCBC. While our optimal schemes  make use of both field size extensions (Section \ref{subsec:field_size}) and  matrix extensions (Section \ref{sec:matex}), they are still vector linear schemes over $\mathbb{F}_q$. In contrast, \emph{scalar} linear codes were found to be sufficient for the $2$ user LCBC in \cite{Sun_Jafar_CBC}. 
In terms of the converse bound,\footnote{\label{fn:converse}A \emph{converse} bound refers to an impossibility result, i.e., a lower bound on broadcast cost per computation, or equivalently, an upper bound on capacity.} an interesting insight emerges from this work regarding the entropic formulation of the LCBC problem that was considered in \cite{Sun_Jafar_CBC}. In the entropic formulation of \cite{Sun_Jafar_CBC}, the data is assumed i.i.d. uniform, and the entropies of all subsets of demand and side-information random-variables are specified as constraints, but  their functional forms are not specified. It was shown in \cite{Sun_Jafar_CBC} that for the $K=2$ user LCBC, these entropic constraints combined with standard Shannon entropic inequalities  produce a tight converse bound on the download cost per computation. In contrast, in this work we show (see \ref{rem:funcsub} in Section \ref{sec:examples}) by a counterexample that the same approach cannot work for the $K=3$ user LCBC, even if \emph{all} Shannon and non-Shannon information  inequalities are applied. Instead, in this work a tight converse for the $3$ user LCBC is obtained  based on functional submodularity (cf. \cite{Tao_FS, Kontoyiannis_Madiman}, also Lemma \ref{lemma:tao}  in this work) that additionally takes into account the functional forms of the users' demands and side-informations.

\section{Notation}\label{sec:notation}
$\mathbb{F}_q$ is a finite field with $q=p^n$ a power of a prime. The elements of the prime field $\mathbb{F}_p$ are represented as $\mathbb{Z}/p\mathbb{Z}$, i.e.,  integers modulo $p$. The notation $\mathbb{F}_q^{n_1\times n_2}$ represents the set of $n_1\times n_2$ matrices with elements in $\mathbb{F}_q$. For a matrix $M$, let $\langle M\rangle_q$ denote the $\mathbb{F}_q$-linear vector space spanned by the columns of $M$. The subscript $q$ will often be suppressed to simplify notation when it is clear from the context. The notation $M_1\cap M_2$ represents a matrix whose columns form a basis of $\langle M_1\rangle\cap\langle M_2\rangle$. $[M_1, M_2]$ represents a concatenated matrix which can be partitioned column-wise into $M_1$ and $M_2$.  The rank of $M$ over $\mathbb{F}_q$ is denoted by $\rk_{q}(M)$, and when written as $\rk(M)$ for simplicity, the subscript $q$ is assumed by default. If $\rk(M)$ is equal to the number of columns of $M$, i.e., $M$ has full column rank, then we say that $M$ is a basis of $\langle M\rangle$. Define a `conditional-rank' notation as $\rk(M_1|M_2) \triangleq \rk([M_1,M_2])-\rk(M_2)$. The notation $[n]$ represents the set $\{1,2,\cdots, n\}$.   $\mathbb{N}$ denotes the set of positive integers. $\mathbb{R}_+$ denotes the set of non-negative real numbers. $\mathbb{C}$ denotes the set of complex numbers.

\section{Problem Statement} \label{sec:probstat}
\subsection{The General $K$ User LCBC$(\mathbb{F}_q, K,d, (m_k, m_k')_{k\in[K]},$ $ ({\bf V}_k, {\bf V}_k')_{k\in[K]})$} \label{sec:genstat}
While our focus in this work is exclusively on the $K=3$ user case, in this section let us define the LCBC problem for the general $K$-user setting. As noted previously, the general LCBC problem is specified by the parameters $\big(\mathbb{F}_q, K,d, (m_k, m_k')_{k\in[K]}, ({\bf V}_k, {\bf V}_k')_{k\in[K]}\big)$. 
There is a stream of data vectors ${\bf x}(1), {\bf x}(2), \cdots$ that is available at a central server. For  each $\ell\in \mathbb{N}$, ${\bf x}(\ell) = (x_1(\ell),\cdots, x_d(\ell))^T\in \mathbb{F}_q^{d\times 1}$ is a $d$-dimensional vector with elements in ${\Bbb F}_q$.
There are $K$ users.  For all $\ell\in\mathbb{N}$, the $k^{th}$ user has side-information ${\bf w}_k'(\ell)={\bf x}^T(\ell){\bf V}_k'\in\mathbb{F}_q^{1\times m_k'}$ and wants ${\bf w}_k(\ell)={\bf x}^T(\ell){\bf V}_k\in\mathbb{F}_q^{1\times m_k}$.

\subsubsection{Coding Scheme}\label{sec:codingscheme}
A coding scheme for the LCBC is represented by a choice of parameters in the form of a tuple $\big(L,N,\Phi,(\Psi_k)_{k\in[K]}\big)$. The coding scheme aggregates $L$ instances of data, collectively denoted as ${\bf X} \triangleq ({\bf x}(1),  \cdots, {\bf x}(L))\in\mathbb{F}_q^{d\times L}$, and specifies an encoding function (encoder) $\Phi: {\Bbb F}_q^{d\times L}\rightarrow \mathbb{F}_q^{N}$, as well as $K$ decoding functions (decoders) $\Psi_{k}: \mathbb{F}_q^{N} \times \mathbb{F}_q^{L\times m_k'}\rightarrow\mathbb{F}_q^{L\times m_k}$, $k\in[K]$.

For compact notation, let us define, 
\begin{align}
	{\bf W}_k&\triangleq{\bf X}^T{\bf V}_k=({\bf w}_k(1),\cdots,{\bf w}_k(L))^T\in\mathbb{F}_q^{L\times m_k},\\
{\bf W}_k'&\triangleq{\bf X}^T{\bf V}_k'=({\bf w}_k'(1),\cdots,{\bf w}_k'(L))^T\in\mathbb{F}_q^{L\times m_k'}.
\end{align}
The encoder $\Phi$  maps the data ${\bf X}$ to the broadcast information comprised of $N$ symbols in $\mathbb{F}_q$, represented compactly as ${\bf S}\in\mathbb{F}_q^N$, i.e., 
\begin{align}
\Phi({\bf X})={\bf S}\in\mathbb{F}_q^N.\label{eq:encoderdef}
\end{align}
The $k^{th}$ decoder, $\Psi_{k}$ allows the $k^{th}$ user to retrieve  ${\bf W}_k$ from the broadcast information ${\bf S}$ and the side-information ${\bf W}_k'$, i.e.,
\begin{align}
\Psi_k({\bf S}, {\bf W}_k') = {\bf W}_k,&& \forall k\in[K],\label{eq:dec}
\end{align}
for all realizations of ${\bf X}$.

Let us denote the set of all feasible coding schemes as $\mathfrak{C}$. We  refer to coding schemes with batch size $L=1$ as \emph{scalar} (coding) schemes, and those with $L>1$ as \emph{vector} (coding) schemes. 

\subsubsection{Capacity $(C)$ and Optimal Download Cost per Computation $(\Delta^*)$}\label{sec:capacitydef}
The rate of a coding scheme $\big(L,N,\Phi,(\Psi_k)_{k\in[K]}\big)\in\mathfrak{C}$, is defined as $R=L/N$ representing the number of computation instances satisfied by the coding scheme per broadcast symbol.\footnote{Viewing each $q$-ary broadcast symbol as one channel-use, the rate can be equivalently viewed as the number of computation instances satisfied by the coding scheme per channel-use.} The supremum of rates across all feasible coding schemes in $\mathfrak{C}$ is called the capacity of LCBC, i.e.,
\begin{align}
C\triangleq \sup_{\big(L,N,\Phi,(\Psi_k)_{k\in[K]}\big)\in\mathfrak{C}}L/N.\label{eq:capacitydef}
\end{align}
Instead of rate $R$, it is often more convenient to consider its reciprocal value, the broadcast cost per computation, $\Delta=1/R=N/L$. The optimal broadcast cost per computation, $\Delta^*$ is defined as,
\begin{align}
\Delta^*&\triangleq \inf_{\big(L,N,\Phi,(\Psi_k)_{k\in[K]}\big)\in\mathfrak{C}}N/L\label{eq:deltadef1}\\
&=1/C.\label{eq:deltadef2}
\end{align}
Since $\Delta^*=1/C$, the problem of characterizing the capacity $C$ is equivalent to the problem of characterizing the optimal download cost per computation $\Delta^*$. We will find it more convenient to state and prove results in terms of $\Delta^*$ in this work.

\subsubsection{Data Distribution, Entropy} \label{sec:entropy}
Note that the LCBC problem does not specify any distribution for the data ${\bf X}$. This is because the capacity $C$ and optimal broadcast cost per computation $\Delta^*$ do not depend on the data distribution. By definition, any coding scheme $\big(L,N,\Phi,(\Psi_k)_{k\in[K]}\big)\in\mathfrak{C}$, while broadcasting no more than $N$ $q$-ary symbols, must guarantee successful decoding as in\eqref{eq:dec} for \emph{every} realization of the data, i.e., for all $q^{dL}$ realizations of ${\bf X}\in\mathbb{F}_q^{d\times L}$, regardless of what distribution ${\bf X}$ follows, and even if ${\bf X}$  follows no distribution. This  is significant  for computation tasks. Recall that conventional communication scenarios are comprised of independent messages that can be compressed prior to communication to reduce the size of the task from the outset and subsequently uncompressed upon successful reception. In principle optimal compression produces uniformly distributed data (otherwise further compression would be possible), thus justifying the common assumption that messages are uniformly distributed. For the LCBC, however, while the desired computation is a linear function of the original uncompressed data, it may  no longer be linear after compression. Thus, compression to uniformly distributed data cannot be taken for granted. Furthermore, it is  often the  case that the data distribution is either unknown, or the data is truly arbitrary. Therefore, assuming that data follows a particular distribution may be overly restrictive for computation problems. Such considerations motivate the conservative formulation presented above, which requires strong (maximum rather than average) communication cost guarantees, i.e., any achievable coding scheme must guarantee that a broadcast of $N$ symbols suffices for every data realization, regardless of the distribution of ${\bf X}$.

On the other hand, it will be occasionally useful, primarily as a thought-experiment, to consider hypothetically what might happen if the data followed an i.i.d. uniform distribution. Similar to genie-aided proofs, such thought-experiments are useful to construct converse bounds (impossibility results) by the following reasoning. Given any coding scheme $\big(L,N,\Phi,(\Psi_k)_{k\in[K]}\big)\in\mathfrak{C}$, we wish to find lower bounds on the broadcast cost $N$. As a thought-experiment,  suppose the data ${\bf X}$ follows a distribution $P_{{\bf X}}({\bf x})$ and this coding scheme is used. This imparts a corresponding distribution to the broadcast symbol ${\bf S}$, say $P_{\bf S}({\bf s})$. However, since ${\bf S}\in\mathbb{F}_q^{N}$ by the definition of the coding scheme, the entropy $H({\bf S})\leq \log_q|\mathbb{F}_q^N|=N$ in $q$-ary units,  which produces a lower bound on the broadcast cost, i.e., $N\geq H({\bf S})$. Thus any choice of $P_{{\bf X}}({\bf x})$ facilitates entropic analysis and leads to a  lower bound on $N$, by calculating the  entropy of ${\bf S}$ produced by the coding scheme. The quality of the bound depend on the choice of $P_{{\bf X}}({\bf x})$. For example, if we assume the data is deterministic, then so is ${\bf S}$, i.e., $H({\bf S})=0$, leading to the bound $N\geq 0$, which is not very useful. Uniform distributions are particularly interesting because they tend to produce good converse bounds. In preparation for the converse arguments in the sequel, it is useful to recall the following facts.

\begin{enumerate}
\item For a random variable ${Z}$, that takes values in a set $\mathcal{Z}$ according to the probability mass function $p_Z(z)$, the entropy $H(Z)$ in $q$-ary units is defined as,
\begin{align}
H(Z)&\triangleq -\sum_{z\in\mathcal{Z}}p_Z(z)\log_qp_z(z).\label{eq:entdef}
\end{align}
\item If $Z$ is i.i.d. uniform over $\mathbb{F}_q^{\mu\times\nu}$ then $H(Z)=\log_q|\mathbb{F}_q^{\mu\times\nu}|=\log_q(q^{\mu\nu})=\mu\nu$ in $q$-ary units.
\item If $Z$ is i.i.d. uniform over $\mathbb{F}_q^{\mu\times\nu}$ and $M\in\mathbb{F}_q^{\mu\times \xi}$ is a deterministic matrix, then 
\begin{align}
H(Z^TM)=\nu\cdot\rk_q(M)\label{eq:entrank}
\end{align} in $q$-ary units. This is seen as follows. Let $Z_{*i}$ denote the $i^{th}$ column of $Z$. Then $H(Z^TM)=H(Z_{*1}^TM, Z_{*2}^TM,\cdots, Z_{*\nu}^TM)=\sum_{i=1}^\nu H(Z_{*i}^TM)\stackrel{(a)}{=}\sum_{i=1}^\nu\rk_q(M)=\nu \cdot \rk_q(M).$ The step labeled (a) is a direct application of \cite[Lemma 2]{Sun_Jafar_CBC}. 
\item If $Z$ is i.i.d. uniform over $\mathbb{F}_q^{\mu\times\nu}$ and $M_1\in\mathbb{F}_q^{\mu\times \xi_1}$, $M_2\in\mathbb{F}_q^{\mu\times \xi_2}$ are deterministic matrices, then 
\begin{align}
H(Z^TM_1\mid Z^TM_2)=\nu(\rk_q([M_1,M_2])-\rk_q(M_2))=\nu\cdot\rk_q(M_1\mid M_2)\label{eq:entcon}
\end{align}
in $q$-ary units, where we used the conditional-rank notation $\rk_q(M_1\mid M_2)$ as defined in Section \ref{sec:notation}. Using \eqref{eq:entrank}, this is seen as follows: $H(Z^T{M_1}\mid {Z}^T{M}_2)=H({Z}^T{ M}_1, {Z}^T{M}_2)-H(Z^T{M}_2) = H({Z}^T[{M}_1, { M}_2])-H({Z}^T{ M}_2)=\nu\cdot \rk_q([{M}_1,{M}_2])-\nu\cdot \rk_q({ M}_2)=\nu\cdot \rk_q({M}_1\mid {M}_2)$.
\end{enumerate}

\subsection{Signal Spaces $\bU_1, \bU_2, \bU_3$ and their Intersections}\label{sec:Uint}
Recall that in this work our focus is on the LCBC with  $K=3$ users, i.e., the most general setting that we consider in this work is LCBC$\left(\mathbb{F}_q, 3,d, (m_k, m_k')_{k\in[3]}, ({\bf V}_k, {\bf V}_k')_{k\in[3]}\right)$. 
Let us define the spaces ${\bf U}_1, {\bf U}_2, {\bf U}_3$, associated with the $3$ users, as follows,
\begin{align}
{\bf U}_1 &\triangleq [{\bf V}_1', {\bf V}_1], &&~~~~~~~~~~~~{\bf U}_2\triangleq [{\bf V}_2', {\bf V}_2],&&{\bf U}_3\triangleq [{\bf V}_3', {\bf V}_3],\label{eq:Udef}\\
\intertext{and also define the following intersections,}
 {\bf U}_{ij} &\triangleq{\bf U}_i \cap {\bf U}_j, &&\forall i,j\in [3], i\neq j,\\
{\bf U}_{123}&\triangleq{\bf U}_1 \cap {\bf U}_2\cap{\bf U}_3,\\
{\bf U}_{i(j,k)}&\triangleq {\bf U}_i \cap [{\bf U}_j, {\bf U}_k], && \forall (i,j,k) \in \{ \mbox{\footnotesize permutations of } (1,2,3)\}.
 \end{align}
Recall that the subspaces $\langle \bU_i \rangle$ refer to the column spans of the corresponding matrices. These subspaces will be essential to the understanding of the $3$ user LCBC.

\section{Preliminary Step: Subspace Decomposition}\label{sec:decom}
For problems involving a vector space, the choice of a suitable basis representation is often an important preliminary simplification step. When multiple vector spaces are involved, it is similarly useful to explicitly partition them into independent subspaces that fit the needs of the problem. For the $3$ user LCBC, there are three vector spaces of interest, namely $\langle{\bf U}_1\rangle, \langle{\bf U}_2\rangle, \langle{\bf U}_3\rangle$, as defined in Section \ref{sec:Uint}. A suitable decomposition of these spaces into independent subspaces corresponding to various intersections is an important preliminary simplification that is the focus of this section.
To put it concisely, we need the following two lemmas regarding linear subspaces $\langle{\bf U}_1\rangle, \langle{\bf U}_2\rangle, \langle{\bf U}_3\rangle$.
\begin{lemma}[2-space decomposition] \label{lem:2_decomp}
There exist $3$ matrices, ${\bf B}_{12}, {\bf B}_{1c}$ and ${\bf B}_{2c}$ such that ${\bf B}_{12}$ is a basis of $\langle {\bf U}_{12} \rangle$, $[{\bf B}_{12}, {\bf B}_{1c}]$ is a basis of $\langle {\bf U}_1 \rangle$, $[{\bf B}_{12}, {\bf B}_{2c}]$ is a basis of $\langle {\bf U}_2 \rangle$, and $[{\bf B}_{12}, {\bf B}_1, {\bf B}_2]$ is a basis of $\langle [{\bf U}_1, {\bf U}_2] \rangle$.
\end{lemma}
\noindent Note that Lemma 1 also implies the following dimension formula,
\begin{align}
	\rk({\bf U}_1)+\rk({\bf U}_2) = \rk({\bf B}_{12})+\rk({\bf B}_{1c})+\rk({\bf B}_{12})+\rk({\bf B}_{2c})= \rk({\bf U}_{1}\cap {\bf U}_2) + \rk([{\bf U}_1, {\bf U}_2]). \label{eq:dimension_formula}
\end{align}

\noindent  A  common proof of Lemma \ref{lem:2_decomp} from a constructive perspective (e.g. \cite[Thm. 3, Ch. 3]{janich2007linear}) is based on incrementally growing a basis representation, and is summarized as follows. First one finds ${\bf B}_{12} \in {\Bbb F}_q^{d\times {\footnotesize \rk({\bf U}_{1}\cap {\bf U}_2)}}$ as a basis of $\langle {\bf U}_1 \cap {\bf U}_2 \rangle$. Then, by the basis extension theorem, one can find a submatrix ${\bf B}_{1c} \in  {\Bbb F}_q^{d\times {\footnotesize \rk({\bf U}_{1})}}$ of ${\bf U}_1$ such that $[{\bf B}_{12}, {\bf B}_{1c}]$ spans $\langle {\bf U}_1 \rangle$, and similarly a submatrix ${\bf B}_{2c} \in  {\Bbb F}_q^{d\times {\footnotesize \rk({\bf U}_{2})}}$ of ${\bf U}_2$ such that $[{\bf B}_{12}, {\bf B}_{2c}]$ spans $\langle {\bf U}_2 \rangle$. Note that $\langle {\bf B}_{2c} \rangle$ only has trivial intersection with $\langle {\bf U}_1 \rangle$ because otherwise ${\bf B}_{2c}{\bf v}+{\bf U}_1{\bf v}' = 0 \implies {\bf B}_{2c}{\bf v} \in \langle {\bf B}_{12} \rangle$ where ${\bf v}, {\bf v}'$ are non-zero vectors, which contradicts that $[{\bf B}_{12}, {\bf B}_{2c}]$ form a basis. Therefore, $[{\bf B}_{12}, {\bf B}_{1c}, {\bf B}_{2c}]$ is a basis of $\langle [{\bf U}_1,{\bf U}_2] \rangle$ since it also spans $\langle [{\bf U}_1,{\bf U}_2] \rangle$. \hfill \qed

Figure \ref{fig:2dec} illustrates the decomposition of $\langle {\bf U}_1\rangle$ and $\langle {\bf U}_2 \rangle$ by identifying $3$ subspaces, each labeled by its basis representation. 

\begin{figure}[h]
\center
\scalebox{0.75}{
\begin{tikzpicture}
\node [draw,
    circle,
    minimum size =3cm, color = blue, thick] (U2) at (0,0){};
\node [draw,
    circle,
    minimum size =3cm, color = magenta, thick] (U3) at (1.8,0){};

\node [draw,
    circle,
    minimum size =1cm, color = blue, thick] (UU2) at (-3,0){$\langle {\bf U}_1 \rangle$};
\node [draw,
    circle,
    minimum size =1cm, color = magenta, thick] (UU3) at (5,0){$\langle {\bf U}_2 \rangle$};
    
\node at (0.9,0) {${\bB}_{12}$};
\node[] at (-0.5,0) {${\bB}_{1c}$};
\node[] at (2.1,0) {${\bB}_{2c}$};
\draw [->, magenta,  thick] (UU3) to [out=130, in=30](U3);
\draw [->, blue, thick] (UU2) to [out=50, in=150](U2);
\end{tikzpicture}}
\caption{Decomposition of $\langle{\bf U}_1\rangle, \langle{\bf U}_2\rangle$ into $3$ subspaces labeled by their respective bases.}\label{fig:2dec}
\end{figure}
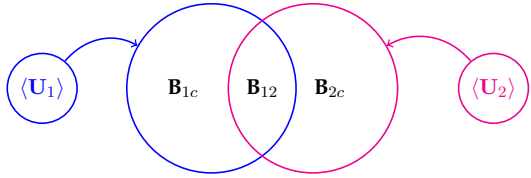

The following lemma non-trivially extends the argument to $3$ linear subspaces.
\begin{lemma}[$3$-space decomposition]\label{lemma:decomposition} There exist $10$ matrices, $\bB_{123}$, $\bB_{12}$, $\bB_{13}$, $\bB_{23}$, $\bB_{1(2,3)}$, $\bB_{2(1,3)}$, $\bB_{3(1,2)}$, $\bB_{1c}$, $\bB_{2c}$,$\bB_{3c}$, such that the following properties (P1)-(P20) are satisfied.

\begin{enumerate}[itemsep=+1ex, label=\it (P\arabic*)]
\item  ${\bB}_{123}$ is a basis of $\langle\bU_{123}\rangle$, 
\item  $[\bB_{123}, \bB_{12}]$ is a basis of $\langle\bU_{12}\rangle$,
\item $[\bB_{123}, \bB_{13}]$ is a basis of $\langle\bU_{13}\rangle$, 
\item $[\bB_{123}, \bB_{23}]$ is a basis of $\langle\bU_{23}\rangle$,
\item $[\bB_{123}, \bB_{12}, \bB_{13}]$ is a basis of $\langle[\bU_{12},\bU_{13}]\rangle$,
\item $[\bB_{123}, \bB_{12}, \bB_{23}]$ is a basis of $\langle[\bU_{12},\bU_{23}]\rangle$,
\item $[\bB_{123}, \bB_{13}, \bB_{23}]$ is a basis of $\langle[\bU_{13},\bU_{23}]\rangle$,
\item $[\bB_{123}, \bB_{12}, \bB_{13}, \bB_{1(2,3)}]$ is a basis of $\langle\bU_{1(2,3)}\rangle$,
\item $[\bB_{123}, \bB_{12}, \bB_{23}, \bB_{2(1,3)}]$ is a basis of $\langle\bU_{2(1,3)}\rangle$,
\item $[\bB_{123}, \bB_{13}, \bB_{23}, \bB_{3(1,2)}]$ is a basis of $\langle\bU_{3(1,2)}\rangle$,
\item $[\bB_{123}, \bB_{12}, \bB_{13}, \bB_{1(2,3)}, \bB_{1c}]$ is a basis of $\langle\bU_1\rangle$,
\item $[\bB_{123}, \bB_{12}, \bB_{23}, \bB_{2(1,3)}, \bB_{2c}]$ is a basis of $\langle\bU_2\rangle$, 
\item $[\bB_{123}, \bB_{13}, \bB_{23}, \bB_{3(1,2)}, \bB_{3c}]$ is a basis of $\langle\bU_3\rangle$, 
\item $[\bB_{123}, \bB_{12}, \bB_{13},\bB_{23},  \bB_{1(2,3)}, \bB_{2(1,3)},\bB_{1c}, \bB_{2c}]$ is a basis of $\langle[\bU_1,\bU_2]\rangle$,
\item $[\bB_{123}, \bB_{12}, \bB_{13},\bB_{23},  \bB_{1(2,3)}, \bB_{3(1,2)},\bB_{1c}, \bB_{3c}]$ is a basis of $\langle[\bU_1,\bU_3]\rangle$,
\item $[\bB_{123}, \bB_{12}, \bB_{13},\bB_{23},  \bB_{2(1,3)}, \bB_{3(1,2)},\bB_{2c}, \bB_{3c}]$ is a basis of $\langle[\bU_2,\bU_3]\rangle$,
\item $[\bB_{123}, \bB_{12}, \bB_{23}, \bB_{13}, \bB_{1(2,3)}, \bB_{2(1,3)}, \bB_{1c},\bB_{2c}, \bB_{3c}]$ is a basis of $\langle[\bU_1, \bU_2,\bU_3]\rangle$,
\item $[\bB_{123}, \bB_{12}, \bB_{23}, \bB_{13}, \bB_{1(2,3)}, \bB_{3(1,2)}, \bB_{1c},\bB_{2c}, \bB_{3c}]$ is a basis of $\langle[\bU_1, \bU_2,\bU_3]\rangle$,
\item $[\bB_{123}, \bB_{12}, \bB_{23}, \bB_{13}, \bB_{2(1,3)}, \bB_{3(1,2)}, \bB_{1c},\bB_{2c}, \bB_{3c}]$ is a basis of $\langle[\bU_1, \bU_2,\bU_3]\rangle$, and
\item ${\bf B}_{1(2,3)},{\bf B}_{2(1,3)},{\bf B}_{3(1,2)}$ have identical size and ${\bf B}_{1(2,3)}+{\bf B}_{2(1,3)} = {\bf B}_{3(1,2)}$.
\end{enumerate}
\end{lemma}
\noindent We leave the proof to Appendix \ref{proof:decomposition}. Figure \ref{fig:venn1} illustrates the decomposition of $\langle{\bf U}_1\rangle, \langle{\bf U}_2\rangle, \langle{\bf U}_3\rangle$ by identifying $10$ subspaces, each labeled by its  basis representation.

\begin{figure*}[hbt]
\center
\scalebox{0.85}{
\begin{tikzpicture}
\node [draw,
    circle,
    minimum size =3cm, color = olive, thick] (U1) at (0.9,1.5){};
\node [draw,
    circle,
    minimum size =3cm, color = magenta, thick] (U2) at (0,0){};
\node [draw,
    circle,
    minimum size =3cm, color = teal, thick] (U3) at (1.8,0){};
\node [draw, dashed,
    circle,
    minimum size =3.45cm, fill = yellow!40, thick] (D) at (0.9,0.52){};
\node [draw,
    circle,
    minimum size =1cm, color = blue, thick] (UU1) at (-2,2.4){$\langle {\bf U}_1 \rangle$};
\node [draw,
    circle,
    minimum size =1cm, color = magenta, thick] (UU2) at (-3,0){$\langle {\bf U}_2 \rangle$};
\node [draw,
    circle,
    minimum size =1cm, color = teal, thick] (UU3) at (5,0){$\langle {\bf U}_3 \rangle$};
    
\begin{scope}
    \clip (0,0) circle(1.5cm);
    \clip (0.9,1.5) circle(1.5cm);
    \fill[white](0,0) circle(1.5cm);
\end{scope}

\begin{scope}
    \clip (0,0) circle(1.5cm);
    \clip (1.8,0) circle(1.5cm);
    \fill[white](0,0) circle(1.5cm);
\end{scope}

\begin{scope}
   \clip (0.9,1.5) circle(1.5cm);
    \clip (1.8,0) circle(1.5cm);
    \fill[white](0.9,1.5) circle(1.5cm);
\end{scope}

\node at (0.9,0.4) {${\bB}_{123}$};
\node at (0,1) {\footnotesize ${\bB}_{12}$};
\node at (1.8,1) {\footnotesize ${\bB}_{13}$};
\node at (0.9,-0.5) {\footnotesize ${\bB}_{23}$};
\node at (0.9,1.8) {\footnotesize {${\bB}_{1(2,3)}$}};
\node[rotate=-60] at (-0.3,-0.1) {\footnotesize ${\bB}_{2(1,3)}$};
\node[rotate=45] at (2,-0.2) {\footnotesize ${\bB}_{3(1,2)}$};
\node at (0.9,2.6) {${\bB}_{1c}$};
\node[rotate=-45] at (-0.8,-0.6) {${\bB}_{2c}$};
\node[rotate=45] at (2.6,-0.6) {${\bB}_{3c}$};

\node [draw,
    circle,
    minimum size =3cm, color = blue, ultra thick] (U1) at (0.9,1.5){};
\node [draw,
    circle,
    minimum size =3cm, color = magenta, ultra thick] (U2) at (0,0){};
\node [draw,
    circle,
    minimum size =3cm, color = teal, ultra thick] (U3) at (1.8,0){};
\draw [->, blue, thick] (UU1) to [out=40, in=130](U1);
\draw [->, teal,  thick] (UU3) to [out=130, in=30](U3);
\draw [->, magenta, thick] (UU2) to [out=50, in=150](U2);

\begin{scope}[shift={(-7,-5)}]
\node at (4,2.5)[blue]{};         
\draw[clip] (0.9,1.5) circle (1.51cm);
\draw [blue, fill=black!10, ultra thick](0.9,1.5) circle (1.49cm);
\clip (1.8,0) circle (1.62cm);
\clip (0,0) circle (1.62cm);

\node [draw,
    circle,
    minimum size =3.2cm, color = magenta, ultra thick, fill=white] at (0,0){};
\node [draw,
    circle,
    minimum size =3.2cm, color = teal, ultra thick, fill=white] (U3) at (1.8,0){};  
\node [draw,
    circle,
    minimum size =3.2cm, color = magenta, ultra thick] at (0,0){};
\node [draw,
    circle,
    minimum size =3.2cm, color = teal, ultra thick] (U3) at (1.8,0){};      
\node [scale=1.5] () at (0.9,0.3){\tiny $\langle\bU_{123}\rangle$};
\node [draw,
    circle,
    minimum size =0.7cm, fill = red!10] () at (0,1){\tiny $\lambda_{12}$};
\node [draw,
    circle,
    minimum size =0.7cm, fill = red!10] () at (1.8,1){\tiny $\lambda_{13}$};
\node [draw,
    circle,
    minimum size =0.7cm, fill = red!10] () at (0.9,1.8){\tiny $\lambda$};

\draw[blue, ultra thick] (0.9,1.5) circle (1.5cm); 
\end{scope}

\begin{scope}[shift={(-3.5,-5)}]
\node at (4,2.5)[blue]{};         
\draw[clip] (0.9,1.5) circle (1.51cm);
\draw [blue, fill=black!10, ultra thick](0.9,1.5) circle (1.49cm);
\clip (0,0) circle (1.55cm);

\node [draw,
    circle,
    minimum size =3cm, color = magenta, ultra thick, fill=white] at (0,0){};
\node [draw,
    circle,
    minimum size =3cm, color = magenta, ultra thick] at (0,0){};
\node [scale=1.5] () at (0.5,0.7){\tiny {$\langle\bU_{12}\rangle$}};
\draw[blue, ultra thick] (0.9,1.5) circle (1.5cm); 
\end{scope}

\begin{scope}[shift={(0,-5)}]
\node at (4,2.5)[blue]{};         
\draw[clip] (0.9,1.5) circle (1.51cm);
\draw [blue, fill=black!10, ultra thick](0.9,1.5) circle (1.49cm);
\clip (1.8,0) circle (1.55cm);
\node [draw,
    circle,
    minimum size =3cm, color = teal, ultra thick, fill=white] (U3) at (1.8,0){};  
\node [draw,
    circle,
    minimum size =3cm, color = teal, ultra thick] (U3) at (1.8,0){};      
\node [scale=1.5] () at (1.4,0.7){\tiny $\langle\bU_{13}\rangle$};

\draw[blue, ultra thick] (0.9,1.5) circle (1.5cm); 
\end{scope}

\begin{scope}[shift={(3.5,-5)}]
\node at (4,2.5)[blue]{};         
\draw[clip] (0.9,1.5) circle (1.51cm);
\draw [blue, fill=black!10, ultra thick](0.9,1.5) circle (1.49cm);
\clip (0.9,0.52) circle (1.725cm);

\node [draw,
    circle,
    minimum size =3cm, color = magenta, ultra thick, fill=white] at (0,0){};
\node [draw,
    circle,
    minimum size =3cm, color = teal, ultra thick, fill=white] (U3) at (1.8,0){};  
\node [draw,
    circle,
    minimum size =2.85cm, color = white, ultra thick, fill=white] at (0,0){};    
\node [scale=1.5] () at (0.9,0.7){\tiny $\langle[\bU_{12},\bU_{13}]\rangle$};
\draw[blue, ultra thick] (0.9,1.5) circle (1.5cm); 
\end{scope}

\begin{scope}[shift={(7,-5)}]
\node at (4,2.5)[blue]{};         
\draw[clip] (0.9,1.5) circle (1.51cm);
\draw [blue, fill=black!10, ultra thick](0.9,1.5) circle (1.49cm);

\node [draw, dashed,
    circle,
    minimum size =3.45cm, fill = white, thick] (D) at (0.9,0.52){};
\node [scale=1.5] () at (0.9,1){\tiny $\langle\bU_{1(2,3)}\rangle$};
\draw[blue, ultra thick] (0.9,1.5) circle (1.5cm); 
\end{scope}

\end{tikzpicture}}
\caption{The top of the figure shows the decomposition of $\langle {\bf U}_1\rangle, \langle {\bf U}_2\rangle, \langle {\bf U}_3\rangle$  into $10$ subspaces that are labeled by corresponding bases  as specified in Lemma \ref{lemma:decomposition}. The five blue circles in the bottom row each show $\langle \bU_1 \rangle$, and highlight the  subspaces $\langle\bU_{123}\rangle$, $\langle\bU_{12}\rangle$, $\langle\bU_{13}\rangle$, $\langle[\bU_{12},\bU_{13}]\rangle$ and $\langle\bU_{1(2,3)}\rangle$, respectively. The compact notations, ${\bf U}_{123}, {\bf U}_{12}, {\bf U}_{1(2,3)}$ etc., are  defined in Section \ref{sec:probstat}, for example ${\bf U}_{12}\triangleq{\bf U}_1\cap{\bf U}_2$ and ${\bf U}_{1(2,3)}\triangleq {\bf U}_1\cap[{\bf U}_2,{\bf U}_3]$. The three  subspaces highlighted as dashed yellow regions are not independent, the span of the union of any two of them contains the third.}
\label{fig:venn1}
\end{figure*}
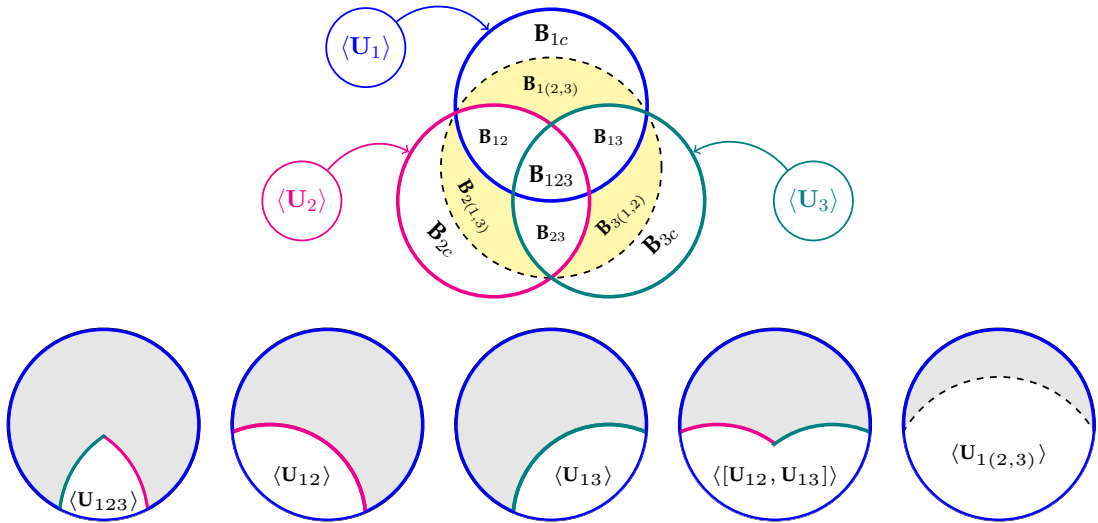

We conclude this section with the following observations.
\begin{enumerate}[wide, labelindent=1em ,labelwidth=!, labelsep*=1em, leftmargin =0em, style = sameline , label=\it Remark \it\arabic{section}.\arabic*]
\item The decomposition of $2$ linear subspaces in Lemma \ref{lem:2_decomp} resembles the decomposition of $2$ sets, e.g., the inclusion-exclusion principle and Venn's diagrams are reflected in the decompositions. However, the set-theoretic analogy is no longer true for $3$ linear subspaces, as in the decomposition the $3$ yellow spaces are not mutually independent. Appendix \ref{app:3space_discuss} provides more discussion regarding this property.
\item \label{rem:wang} Identifying suitable intersecting subspaces within vector spaces is also a recurrent theme in the degrees of freedom (DoF) studies of wireless networks, e.g., to simplify the design of interference alignment schemes in MIMO settings \cite{Jafar_FnT, Wang_Gou_Jafar_subspace}. In particular, the DoF study of a $3$ user wireless MIMO BC setting in the PhD thesis of Wang \cite[Ch. 3]{wang2017degrees} provides a subspace decomposition that very closely parallels Lemma \ref{lemma:decomposition}. The correspondence and the distinctions between the two are discussed in   Appendix  \ref{app:comparison}, as are the limitations that prevent the proof in \cite[Ch. 3]{wang2017degrees} from carrying over directly to our finite field setting. An independent proof of Lemma \ref{lemma:decomposition} for our setting is provided in Appendix \ref{proof:decomposition}. Notably the proof in Appendix \ref{proof:decomposition} only relies on arguments that hold both over finite fields as well as over the field of complex numbers, thereby unifying the two settings.

\item \label{rem:caveat0} Lemma \ref{lemma:decomposition} ignores  the details of how each ${\bf U}_i$ is composed of ${\bf V}_i$ and ${\bf V}_i'$. Depending on their own side-information and demand, each user will have a \emph{different} conditional view of these subspaces. This most essential aspect of the LCBC problem is not reflected in the decomposition. Thus, it is worthwhile to note that the decomposition is primarily a preparatory step, the main technical challenge from both achievability and converse perspectives remains focused on accounting for the distinct side-information and demand structures across users. See also \ref{rem:caveat}.
\end{enumerate}

\section{Results}\label{sec:result}
\subsection{A Closed Form Capacity Expression for the $3$ User LCBC}
 As our main result, the following theorem states the capacity of the $3$ user LCBC in closed form.
\begin{theorem} \label{thm:main}
For the $K=3$ user general LCBC, i.e., LCBC($\mathbb{F}_q,3,d,(m_k,m_k')_{k\in[3]},{({\bf V}_k,{\bf V}_k')_{k\in[3]}})$, the capacity  $C= 1/\max\{\Delta_1,\Delta_2\}$, equivalently, the optimal broadcast cost, $\Delta^*= \max\{\Delta_1,\Delta_2\}$ where,
\begin{align}
&	\Delta_n \triangleq \max_{(i,j,k)\in \{\mbox{\footnotesize permutations of } (1,2,3)\}} \{\Delta_n^{ijk}\}\label{eq:delta_n},~~~ n \in \{1,2\},\\
\intertext{and}
&\Delta_1^{ijk} \triangleq \rk({\bf V}_1\mid \bV_1')  + \rk({\bf V}_2\mid \bV_2')  +\rk({\bf V}_3\mid \bV_3')- \rk({\bf U}_{ij}\mid {\bf V}_j') - \rk({\bf U}_{k(i,j)}\mid {\bf V}_k'), \\
&\Delta_2^{ijk} \triangleq \rk({\bf V}_1\mid \bV_1') + \rk({\bf V}_2\mid \bV_2') + \rk({\bf V}_3\mid \bV_3') \notag\\
&\hspace{1cm}- \frac{1}{2} \Bigg(\min_{\ell\in [3]} \big(\rk({\bf U}_{123}\mid {\bf V}_\ell')\big) +\rk([{\bf U}_{ij},{\bf U}_{ik}]\mid {\bf V}_i') +\rk({\bf U}_{j(i,k)}\mid {\bf V}_j') + \rk({\bf U}_{k(i,j)}\mid {\bf V}_k') \Bigg),
\end{align}
\end{theorem}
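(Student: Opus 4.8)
\textbf{Proof plan for Theorem~\ref{thm:main}.}
The plan is to establish matching converse and achievability bounds, treating $\Delta_1$ and $\Delta_2$ as two separate families of cut-set-style constraints. For the converse, I would start from the genie-aided / cut argument that already worked for $K=2$ in \cite{Sun_Jafar_CBC}: fix a permutation $(i,j,k)$, and bound $L\Delta \geq H({\bf S})$ from below by successively revealing side-informations and decoding demands. Concretely, $H({\bf S})\geq H({\bf S}\mid{\bf W}_i')\geq H({\bf W}_i\mid{\bf W}_i') + H({\bf S}\mid{\bf W}_i^*)$, and then continue by conditioning further on ${\bf W}_j'$ and ${\bf W}_k'$, decoding ${\bf W}_j$ and ${\bf W}_k$ in turn. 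The per-user gains $\rk({\bf V}_\ell\mid{\bf V}_\ell')$ (scaled by $L$) appear from the decoding steps; the subtracted overlap terms $\rk({\bf U}_{ij}\mid{\bf V}_j')$, $\rk({\bf U}_{k(i,j)}\mid{\bf V}_k')$ come from the fact that side-information revealed for one user already ``pays for'' part of a later user's demand, quantified by intersections of the signal spaces $\bU_\ell$ defined in Section~\ref{sec:Uint}. This yields $L\Delta\geq L\,\Delta_1^{ijk}$ for each permutation, hence $\Delta^*\geq\Delta_1$. For the $\Delta_2$ bound, the key new ingredient flagged in the introduction is that the purely entropic chain rule is \emph{not} enough: I would invoke functional submodularity (Lemma~\ref{lemma:madiman}) to combine the bounds obtained from two different permutations, averaging them so that the triple intersection ${\bf U}_{123}$ enters with weight $1/2$ and a $\min_{\ell}$ over which side-information to condition on. Symmetrizing over the remaining permutations of $(i,j,k)$ and taking the max gives $\Delta^*\geq\Delta_2$, and combining, $\Delta^*\geq\max\{\Delta_1,\Delta_2\}$.

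For achievability I would reduce to the linear-programming formulation of Theorem~\ref{thm:LP}, and in turn to the subspace decomposition of Theorem~\ref{thm:decomposition}: decompose the collective signal space $\langle[\bV_1',\bV_1,\bV_2',\bV_2,\bV_3',\bV_3]\rangle$ into the canonical pieces built from the $\bU_\ell,\bU_{ij},\bU_{i(j,k)},\bU_{123}$, each of which admits a different broadcast efficiency (a dimension broadcast in a piece seen by two or three users is ``cheaper'' per user than one seen by only one). The broadcast is then built piece by piece: within each subspace one chooses how many generic linear combinations to send, subject to the constraint that every user, conditioned on \emph{its own} side-information, can still decode its demand within that piece. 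Feasibility of the LP — i.e.\ that the total cost can be pushed down to $\max\{\Delta_1,\Delta_2\}$ — is verified by exhibiting the constrained water-filling solution of Section~\ref{sec:solveLP}; the subtlety is that the same physical broadcast symbols must simultaneously serve users whose ``useful directions'' inside a shared subspace differ, which is resolved by a random/generic choice of coefficients over a sufficiently large field (field-size extension, Section~\ref{subsec:field_size}) and vector-coding/matrix extension (Section~\ref{sec:matex}) so that a union bound over the finitely many bad events fails with positive probability.

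The main obstacle, as the introduction anticipates, is the $\Delta_2$ converse: getting the factor $1/2$ and the $\min_\ell\rk({\bf U}_{123}\mid{\bf V}_\ell')$ term out of a submodularity argument requires choosing exactly the right pair (or pairs) of genie orderings to add together, and arguing that the functional (not merely entropic) submodularity inequality applies to the linear functions in question — this is where instances with identical entropic profiles but different capacities are separated, so the argument must genuinely use the column-span structure of the $\bV_\ell,\bV_\ell'$ and not just their ranks. On the achievability side, the parallel difficulty is proving that the LP optimum is actually attained by a single scheme that is simultaneously compatible with all three users' conditional views; I expect the random-coding / generic-position argument to be routine in spirit but delicate in bookkeeping, since the number of subspace pieces and the constraints linking their broadcast allocations grow quickly with three users. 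Once both directions are in place, $\Delta^*\leq\max\{\Delta_1,\Delta_2\}$ and $\Delta^*\geq\max\{\Delta_1,\Delta_2\}$ combine to give $C^*=1/\max\{\Delta_1,\Delta_2\}$ as claimed.
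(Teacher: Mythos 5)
Your plan follows the same overall architecture as the paper: the $\Delta_1$ converse by the iterated genie/chain-rule argument inherited from the $2$-user case, the $\Delta_2$ converse by functional submodularity, and achievability via the subspace decomposition of Theorem~\ref{thm:decomposition}, the linear program of Theorem~\ref{thm:LP} solved by constrained waterfilling, and a generic-coefficient (Schwartz--Zippel) argument over an extended field together with a matrix extension to realize rational $\lambda_\bullet$. The $\Delta_1$ converse and the achievability outline are consistent with what the paper actually does.

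The genuine gap is the mechanism you sketch for $\Delta^*\geq\Delta_2$. ``Combining the bounds obtained from two different permutations, averaging them'' cannot work as stated: the average of two valid lower bounds on $H({\bf S})$ is at most their maximum, so no post-hoc combination of already-derived $\Delta_1$-type bounds can produce a quantity exceeding $\Delta_1$ --- and $\Delta_2$ does strictly exceed $\Delta_1$ in general (Example~\ref{ex:2x_efficiency}). What the paper does instead is apply submodularity \emph{before} any bound is extracted: it writes $2H({\bf S})+2\sum_k H({\bf W}_k')$ as $\sum_{(i,j)}\bigl[H({\bf S},{\bf X}^T\bU_i)+H({\bf S},{\bf X}^T\bU_j)\bigr]$ plus correction terms, and repeatedly applies Corollary~\ref{cor:submod} to pairs of \emph{joint entropies}, $H({\bf S},{\bf X}^T{\bf M}_1)+H({\bf S},{\bf X}^T{\bf M}_2)\geq H({\bf S},{\bf X}^T[{\bf M}_1\cap{\bf M}_2])+H({\bf S},{\bf X}^T[{\bf M}_1,{\bf M}_2])$ --- first to produce $\bU_{12}$ and $\bU_{13}$, then $\bU_{123}$ and $[\bU_{12},\bU_{13}]$, and separately $\bU_{3(1,2)}$ and $\bU_{2(1,3)}$. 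The factor $1/2$ simply comes from having started with $2H({\bf S})$, and the $\min_\ell$ comes from replacing ${\bf X}^T\bU_{123}$ by its function ${\bf X}^T(\bU_{123}\cap\bV_\ell')$ for the best $\ell$ before absorbing an $I({\bf S};{\bf W}_\ell')$ term via the data-processing inequality. This is precisely where the functional (rather than merely entropic) structure enters, since the intersection $\bU_{12}=\bU_1\cap\bU_2$ of column spans is not determined by the entropy profile; it is the step your plan correctly flags as the main obstacle but does not supply. Everything else in your proposal is the paper's route.
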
 
\noindent Recall the conditional-rank notation defined in Section \ref{sec:notation}, $\rk(X|Y) \triangleq \rk([X,Y])-\rk(Y)$ . The proof of Theorem \ref{thm:main} will be presented along with the proof of the upcoming Theorem \ref{thm:LP}, in Sections \ref{sec: Converse}, \ref{sec: Achievability},  and \ref{sec:linear_prog} according to the proof structure specified in Section \ref{sec:proofstructure}.

\begin{enumerate}[wide, labelindent=1em ,labelwidth=!, labelsep*=1em, leftmargin =0em, style = sameline , label=\it Remark \it\arabic{section}.\arabic{subsection}.\arabic*]
\item  \label{rem:follows}
The bound $\Delta^*\geq \Delta_1$  follows  from a  generalization of the converse bound of the $2$ user LCBC, and is similar to the genie-aided converse bound of coded caching (e.g., \cite[(71)-(75)]{yu2018characterizing}). However, unlike the $2$ user LCBC, this bound is not sufficient for  the $3$ user LCBC, which is why we also need the bound $\Delta^*\geq \Delta_2$. The  bound  $\Delta^*\geq \Delta_2$ encapsulates the new technical challenge in the $3$ user LCBC from the converse perspective (see Section \ref{sec: Converse}).
\item The capacity of the $3$ user LCBC can be expressed in various equivalent forms. The \emph{closed} form presented in Theorem \ref{thm:main} emerges naturally from the converse bounds. Indeed, the converse in Section \ref{sec: Converse} directly produces two bounds, one each for $\Delta_1^{ijk}, \Delta_2^{ijk}$. The achievability argument on the other hand, takes a different approach which involves auxiliary parameters (the $\lambda_\bullet$ parameters in Theorem \ref{thm:LP}) representing various design choices. Optimizing the design choices amounts to a linear program, the solution to which yields the same $\Delta^*$ as Theorem \ref{thm:main}. Even though the converse and achievability perspectives ultimately lead to the same $\Delta^*$, their different forms yield different insights. The achievability perspective in particular yields constructive insights into the tradeoffs involved in simultaneously satisfying all $3$ users' demands. This alternative (but equivalent) form of the capacity result is presented next.
\end{enumerate}

\subsection{An Alternative Expression for the Capacity of the $3$ User LCBC}
\begin{theorem}\label{thm:LP}
\begin{align}
\Delta^*=F^*
\end{align}
where $\Delta^*$ is the optimal broadcast cost for the $K=3$ user general LCBC and $F^*$ is the solution to the following linear program,
\begin{align}
F^*\triangleq \min_{\lambda_{123},\lambda_{12},\lambda_{13},\lambda_{23},\lambda\in\mathbb{R}_+}\rk(\bV_1|\bV_1')+\rk(\bV_2|\bV_2')+\rk(\bV_3|\bV_3') - 2\lambda_{123}-\lambda_{12}-\lambda_{13}-\lambda_{23}-\lambda,\label{eq:optimize}
\end{align}
such that\footnote{By definition the indices $ij$ and $ji$ are interchangeable.}
{\small
\begin{align}
\lambda_{123}&\leq \rk(\bU_{123}\mid \bV_i'), && \forall i \in \{1,2,3\},\label{eq:constraintfirst}\\
\lambda_{ij}+\lambda_{123}&\leq \min \left( \rk(\bU_{ij}\mid \bV_i'),  \rk(\bU_{ij}\mid \bV_j') \right),&& \forall (i,j) \in \{(1,2), (1,3), (2,3)\},\label{eq:constraintsecond}\\
\lambda_{ij}+\lambda_{ik}+\lambda_{123}&\leq \rk([\bU_{ij},\bU_{ik}]\mid \bV_i'),&& \forall (i,j,k) \in \{(1,2,3), (2,1,3), (3,1,2)\},\label{eq:constraintthird}\\
\lambda+\lambda_{ij}+\lambda_{ik}+\lambda_{123}&\leq \rk(\bU_{i(j,k)}\mid \bV_i'),&& \forall (i,j,k) \in \{(1,2,3), (2,1,3), (3,1,2)\}.\label{eq:constraintlast}
\end{align}
}
\end{theorem}

\begin{figure*}[htb]
\center
\scalebox{0.85}{
\begin{tikzpicture}
\node [draw,
    circle,
    minimum size =3cm, color = olive, thick] (U1) at (0.9,1.5){};
\node [draw,
    circle,
    minimum size =3cm, color = magenta, thick] (U2) at (0,0){};
\node [draw,
    circle,
    minimum size =3cm, color = teal, thick] (U3) at (1.8,0){};
\node [draw, dashed,
    circle,
    minimum size =3.45cm, fill = yellow!40, thick] (D) at (0.9,0.52){};
\node [draw,
    circle,
    minimum size =1cm, color = blue, thick] (UU1) at (-2,2.4){$\langle {\bf U}_1 \rangle$};
\node [draw,
    circle,
    minimum size =1cm, color = magenta, thick] (UU2) at (-3,0){$\langle {\bf U}_2 \rangle$};
\node [draw,
    circle,
    minimum size =1cm, color = teal, thick] (UU3) at (5,0){$\langle {\bf U}_3 \rangle$};
    
\begin{scope}
    \clip (0,0) circle(1.5cm);
    \clip (0.9,1.5) circle(1.5cm);
    \fill[white](0,0) circle(1.5cm);
\end{scope}

\begin{scope}
    \clip (0,0) circle(1.5cm);
    \clip (1.8,0) circle(1.5cm);
    \fill[white](0,0) circle(1.5cm);
\end{scope}

\begin{scope}
   \clip (0.9,1.5) circle(1.5cm);
    \clip (1.8,0) circle(1.5cm);
    \fill[white](0.9,1.5) circle(1.5cm);
\end{scope}

\node [draw,
    circle,
    minimum size =0.7cm, fill = red!10] () at (0.9,0.5){\tiny $\lambda_{123}$};
\node [draw,
    circle,
    minimum size =0.7cm, fill = red!10] () at (0,1){\tiny $\lambda_{12}$};
\node [draw,
    circle,
    minimum size =0.7cm, fill = red!10] () at (1.8,1){\tiny $\lambda_{13}$};
\node [draw,
    circle,
    minimum size =0.7cm, fill = red!10] () at (0.9,-0.5){\tiny $\lambda_{23}$};
\node [draw,
    circle,
    minimum size =0.7cm, fill = red!10] () at (0.9,1.8){\tiny $\lambda$};
\node [draw,
    circle,
    minimum size =0.7cm, fill = red!10] () at (-0.2,-0.2){\tiny $\lambda$};
\node [draw,
    circle,
    minimum size =0.7cm, fill = red!10] () at (2,-0.2){\tiny $\lambda$};

\node [draw,
    circle,
    minimum size =3cm, color = blue, ultra thick] (U1) at (0.9,1.5){};
\node [draw,
    circle,
    minimum size =3cm, color = magenta, ultra thick] (U2) at (0,0){};
\node [draw,
    circle,
    minimum size =3cm, color = teal, ultra thick] (U3) at (1.8,0){};
\draw [->, blue, thick] (UU1) to [out=40, in=130](U1);
\draw [->, teal,  thick] (UU3) to [out=130, in=30](U3);
\draw [->, magenta, thick] (UU2) to [out=50, in=150](U2);

\begin{scope}[shift={(-7,-5)}]
\node [align= center] at (0.9, -0.5) {\footnotesize $\begin{array}{c}\lambda_{123}\\ \leq \rk(\bU_{123}\mid \bV_1')\end{array}$};
\node at (4,2.5)[blue]{};         
\draw[clip] (0.9,1.5) circle (1.51cm);
\draw [blue, fill=black!10, ultra thick](0.9,1.5) circle (1.49cm);
\clip (1.8,0) circle (1.55cm);
\clip (0,0) circle (1.55cm);

\node [draw,
    circle,
    minimum size =3cm, color = magenta, ultra thick, fill=white] at (0,0){};
\node [draw,
    circle,
    minimum size =3cm, color = teal, ultra thick, fill=white] (U3) at (1.8,0){};  
\node [draw,
    circle,
    minimum size =3cm, color = magenta, ultra thick] at (0,0){};
\node [draw,
    circle,
    minimum size =3cm, color = teal, ultra thick] (U3) at (1.8,0){};      
\node [draw,
    circle,
    minimum size =0.7cm, fill = red!10] () at (0.9,0.5){\tiny $\lambda_{123}$};
\node [draw,
    circle,
    minimum size =0.7cm, fill = red!10] () at (0,1){\tiny $\lambda_{12}$};
\node [draw,
    circle,
    minimum size =0.7cm, fill = red!10] () at (1.8,1){\tiny $\lambda_{13}$};
\node [draw,
    circle,
    minimum size =0.7cm, fill = red!10] () at (0.9,1.8){\tiny $\lambda$};

\draw[blue, ultra thick] (0.9,1.5) circle (1.5cm); 
\end{scope}

\begin{scope}[shift={(-3.5,-5)}]
\node [align= center] at (0.9, -0.5) {\footnotesize $\begin{array}{c}\lambda_{12}+\lambda_{123}\\ \leq \rk(\bU_{12}\mid \bV_1')\end{array}$};
\node at (4,2.5)[blue]{};         
\draw[clip] (0.9,1.5) circle (1.51cm);
\draw [blue, fill=black!10, ultra thick](0.9,1.5) circle (1.49cm);
\clip (0,0) circle (1.55cm);

\node [draw,
    circle,
    minimum size =3cm, color = magenta, ultra thick, fill=white] at (0,0){};
\node [draw,
    circle,
    minimum size =3cm, color = magenta, ultra thick] at (0,0){};
\node [draw,
    circle,
    minimum size =0.7cm, fill = red!10] () at (0.9,0.5){\tiny $\lambda_{123}$};
\node [draw,
    circle,
    minimum size =0.7cm, fill = red!10] () at (0,1){\tiny $\lambda_{12}$};
\node [draw,
    circle,
    minimum size =0.7cm, fill = red!10] () at (1.8,1){\tiny $\lambda_{13}$};
\node [draw,
    circle,
    minimum size =0.7cm, fill = red!10] () at (0.9,1.8){\tiny $\lambda$};

\draw[blue, ultra thick] (0.9,1.5) circle (1.5cm); 
\end{scope}

\begin{scope}[shift={(0,-5)}]
\node [align= center] at (0.9, -0.5) {\footnotesize $\begin{array}{c}\lambda_{13}+\lambda_{123}\\ \leq \rk(\bU_{13}\mid \bV_1')\end{array}$};
\node at (4,2.5)[blue]{};         
\draw[clip] (0.9,1.5) circle (1.51cm);
\draw [blue, fill=black!10, ultra thick](0.9,1.5) circle (1.49cm);
\clip (1.8,0) circle (1.55cm);
\node [draw,
    circle,
    minimum size =3cm, color = teal, ultra thick, fill=white] (U3) at (1.8,0){};  
\node [draw,
    circle,
    minimum size =3cm, color = teal, ultra thick] (U3) at (1.8,0){};      
\node [draw,
    circle,
    minimum size =0.7cm, fill = red!10] () at (0.9,0.5){\tiny $\lambda_{123}$};
\node [draw,
    circle,
    minimum size =0.7cm, fill = red!10] () at (0,1){\tiny $\lambda_{12}$};
\node [draw,
    circle,
    minimum size =0.7cm, fill = red!10] () at (1.8,1){\tiny $\lambda_{13}$};
\node [draw,
    circle,
    minimum size =0.7cm, fill = red!10] () at (0.9,1.8){\tiny $\lambda$};

\draw[blue, ultra thick] (0.9,1.5) circle (1.5cm); 
\end{scope}

\begin{scope}[shift={(3.5,-5)}]
\node [align= center] at (0.9, -0.5) {\footnotesize $\begin{array}{c}\lambda_{12}+\lambda_{13}+\lambda_{123}\\ \leq \rk([\bU_{12}, \bU_{13}]\mid \bV_1')\end{array}$};
\node at (4,2.5)[blue]{};         
\draw[clip] (0.9,1.5) circle (1.51cm);
\draw [blue, fill=black!10, ultra thick](0.9,1.5) circle (1.49cm);
\clip (0.9,0.52) circle (1.725cm);

\node [draw,
    circle,
    minimum size =3cm, color = magenta, ultra thick, fill=white] at (0,0){};
\node [draw,
    circle,
    minimum size =3cm, color = teal, ultra thick, fill=white] (U3) at (1.8,0){};  
\node [draw,
    circle,
    minimum size =2.85cm, color = white, ultra thick, fill=white] at (0,0){};    
\node [draw,
    circle,
    minimum size =0.7cm, fill = red!10] () at (0.9,0.5){\tiny $\lambda_{123}$};
\node [draw,
    circle,
    minimum size =0.7cm, fill = red!10] () at (0,1){\tiny $\lambda_{12}$};
\node [draw,
    circle,
    minimum size =0.7cm, fill = red!10] () at (1.8,1){\tiny $\lambda_{13}$};

\draw[blue, ultra thick] (0.9,1.5) circle (1.5cm); 
\end{scope}

\begin{scope}[shift={(7,-5)}]
\node [align= center] at (0.9, -0.5) {\footnotesize $\begin{array}{c}\lambda+\lambda_{12}+\lambda_{13}+\lambda_{123}\\ \leq \rk(\bU_{1(2,3)}\mid \bV_1')\end{array}$};
\node at (4,2.5)[blue]{};         
\draw[clip] (0.9,1.5) circle (1.51cm);
\draw [blue, fill=black!10, ultra thick](0.9,1.5) circle (1.49cm);

\node [draw, dashed,
    circle,
    minimum size =3.45cm, fill = white, thick] (D) at (0.9,0.52){};
\node [draw,
    circle,
    minimum size =0.7cm, fill = red!10] () at (0.9,0.5){\tiny $\lambda_{123}$};
\node [draw,
    circle,
    minimum size =0.7cm, fill = red!10] () at (0,1){\tiny $\lambda_{12}$};
\node [draw,
    circle,
    minimum size =0.7cm, fill = red!10] () at (1.8,1){\tiny $\lambda_{13}$};
\node [draw,
    circle,
    minimum size =0.7cm, fill = red!10] () at (0.9,1.8){\tiny $\lambda$};

\draw[blue, ultra thick] (0.9,1.5) circle (1.5cm); 
\end{scope}

\end{tikzpicture}}
\caption{Intuitive understanding of the constraints \eqref{eq:constraintfirst}-\eqref{eq:constraintlast}. Note that the  blue circles in the bottom row of the figure do not show $\langle \bU_1\rangle$ per se, rather they show the space $\langle \bU_1\rangle$ conditioned on User $1$'s side-information, so that the corresponding sizes of all subspaces are represented with conditional ranks after conditioning on $\bV_1'$. Each user will generally have a different perspective due to different impact of conditioning on their respective side-informations, giving rise to different constraints \eqref{eq:constraintfirst}-\eqref{eq:constraintlast}.}
\label{fig:venn2}
\end{figure*}
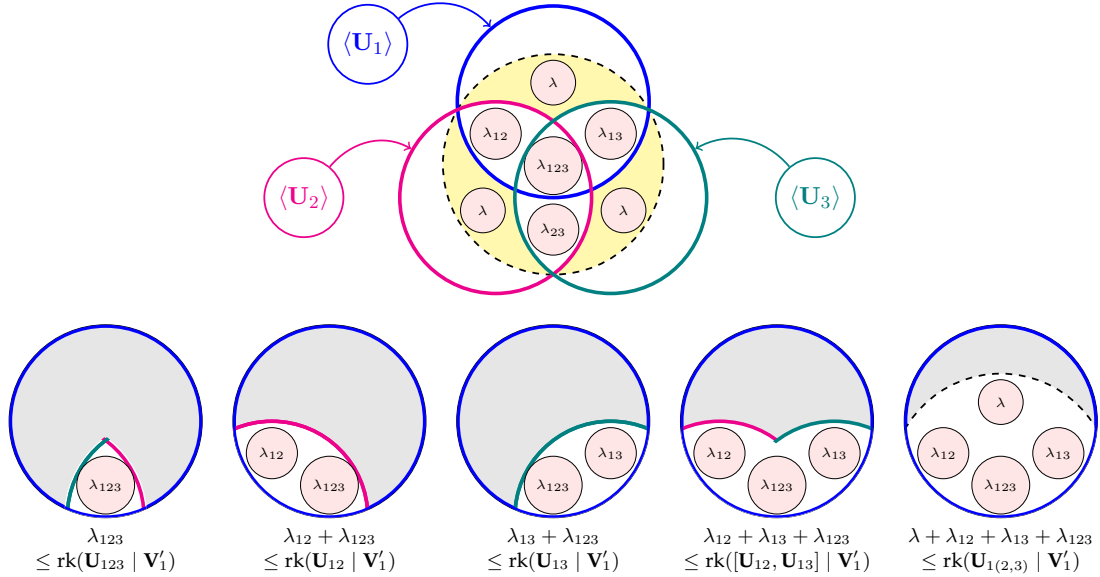

\begin{enumerate}[wide, labelindent=1em ,labelwidth=!, labelsep*=1em, leftmargin =0em, style = sameline , label=\it Remark \it\arabic{section}.\arabic{subsection}.\arabic*]
\item For the sake of high level intuition, Figure \ref{fig:venn2} conveys a somewhat oversimplified (the caveat is noted in  \ref{rem:caveat}) understanding of the conditions \eqref{eq:constraintfirst}-\eqref{eq:constraintlast} in Theorem \ref{thm:LP}. The $\lambda_\bullet$ parameters represent the size (dimension) of signals in various subspaces to be broadcast by the coding scheme. Depending upon the region they fall in, the subspaces have different communication efficiencies. For instance, note that $\lambda_{123}$ falls in $\bU_{123}$, and carries information that is simultaneously useful for all $3$ users. Thus, $\lambda_{123}$ transmitted dimensions satisfy a total of $3\lambda_{123}$ dimensions of demand ($\lambda_{123}$ per user). Borrowing the classical metaphor, we refer to the efficiency of such transmissions as $3$ birds, $1$ stone. Transmissions corresponding to $\lambda_{ij}$ fall in subspaces $\bU_{ij}$ and are simultaneously useful for Users $i$ and $j$, so the efficiency of such transmissions is similarly referred to as $2$ birds, $1$ stone. In other words, $\lambda_{ij}$ transmitted dimensions satisfy $2\lambda_{ij}$ dimensions of demand. Transmissions corresponding to $\lambda$ fall in the three subspaces highlighted in yellow in Figure \ref{fig:venn1} where we previously noted that any two subspaces are disjoint but contain the third. What this means is that the coding scheme needs to send any $2$ of the $3$ subspaces marked with $\lambda$, and the third can be automatically inferred from them. Thus, a transmission of $2\lambda$ dimensions, satisfies a total of $3\lambda$ dimensions of demand ($\lambda$ per user), yielding an efficiency of $3$ birds, $2$ stones.

\item In light of the previous remark, now consider the objective to be minimized in \eqref{eq:optimize},
$\Delta^*=\rk(\bV_1|\bV_1')+\rk(\bV_2|\bV_2')+\rk(\bV_3|\bV_3') - 2\lambda_{123}-\lambda_{12}-\lambda_{13}-\lambda_{23}-\lambda$. We recognize the sum of the first three terms as the broadcast cost if the users were to be served separately and no gain in efficiency was possible by jointly satisfying multiple demands. Let this be our baseline. Now note that because $3\lambda_{123}$ dimensions of demand were satisfied with $\lambda_{123}$ dimensions of broadcast, the cost-saving incurred relative to the baseline is $2\lambda_{123}$, which explains the fourth term that appears as a negative term in the objective. The next three  negative terms are similarly justified because each $\lambda_{ij}$ dimensions of transmission satisfies $2\lambda_{ij}$ dimensions of demand, thus saving $\lambda_{ij}$ relative to the baseline. Finally, for the $\lambda$ term, we recall that a total transmission cost of $2\lambda$ dimensions is able to satisfy $3\lambda$ dimensions of demand, thus saving another $\lambda$ in broadcast cost, which explains the last negative term in the objective function.

\item \label{rem:caveat} As a caveat,  note that the intuitive explanation above ignores a critical aspect of the problem that remains challenging --- namely, each user's view of useful dimensions depends on their own side-information, and is in general different from other users. This is indicated in Figure \ref{fig:venn2} by noting that the relevant signal spaces for User $1$ are not simply the $\bU_\bullet$ spaces that appear in the decomposition at the top of Figure \ref{fig:venn1} and Figure \ref{fig:venn2}. Rather, each user's view of useful subspaces is conditional on his side-information. For example, the same signal space $\bU_{123}$ when seen by the Users $1,2,3$, contains $\rk(\bU_{123}\mid \bV_1'), \rk(\bU_{123}\mid \bV_2'), \rk(\bU_{123}\mid \bV_3')$ useful dimensions, respectively. Thus, the total number of  dimensions useful to all three users, i.e., the size of $\lambda_{123}$ is limited by the bound in \eqref{eq:constraintfirst}. Even with the size of $\lambda_{123}$ constrained in this manner, finding the broadcast dimension is not trivial because each user may find a different $\lambda_{123}$ portion of $\bU_{123}$ useful to them. Similar challenges arise in identifying $\lambda_{ij}$ dimensions that are useful to Users $i$ and $j$, when each user's perspective is different, conditioned on their own side-information. Even greater care has to be taken in identifying the $\lambda$ sections of the broadcast signal, to ensure that $2$ transmissions span the third, while facing the challenge that the projections of $\lambda$ into each user's perspective are distinguished by their different side-informations.

\item \label{rem:polymat} Since linear optimizations over polymatroidal constraints allow  greedy solutions \cite{Edmonds} that can simplify dimensional analysis (see e.g., the DoF study in \cite[Chapter 5]{romerothesis}), it is worth noting that the constraints \eqref{eq:constraintfirst}-\eqref{eq:constraintlast} do not specify a polymatroidal structure. To verify this with a toy example, suppose ${\bf V}_1={\bf V}_2'={\bf V}_3'=[1,1]^T$ and ${\bf V}_1'={\bf V}_2={\bf V}_3=[0,0]^T$. Then we have the constraints, $\lambda_{123}\leq 0$, $\lambda_{12}+\lambda_{123}\leq 0$, $\lambda_{13}+\lambda_{123}\leq  0$ and $\lambda_{12}+\lambda_{13}+\lambda_{123}\leq 1$, which violate the polymatroidal structure.
\end{enumerate}

\subsection{Structure of Proofs}\label{sec:proofstructure}
Theorem \ref{thm:main} and Theorem \ref{thm:LP} are  equivalent alternative forms of the same capacity result. We organize the proofs of these two theorems as follows. In Section \ref{sec: Converse} we prove the converse (lower) bound for the optimal broadcast cost, i.e., $\Delta^*\geq \max\{\Delta_1, \Delta_2\}$.  Then in Section \ref{sec: Achievability} we prove the achievability (upper) bound $\Delta^*\leq F^*$. Finally in Section \ref{sec:linear_prog} we prove that $F^*\leq \max\{\Delta_1, \Delta_2\}$. The three proofs together imply that $\Delta^*=F^*=\max\{\Delta_1,\Delta_2\}$, thus proving both Theorem \ref{thm:main} and Theorem \ref{thm:LP}.

\section{Toy Examples} \label{sec:examples}
In this section, we present simple toy examples that illustrate some of the ideas discussed previously,  such as subspace decompositions and linear-programming  tradeoffs between schemes with different communication efficiency (birds vs stones), some ideas that will be important later on in the construction of the general coding scheme, such as field extensions, vector coding, and mixing of dimensions, and some new insights, such as the insufficiency of entropic structure, and the need for functional submodularity. For these examples we use specialized notation for simplicity: $(({\bf W}_i'\rightarrow{\bf W}_i), i=1,2,3)$ to specify the setting, $A,B,C,D,E$ instead of $x_1,x_2,x_3,x_4,x_5$, and $A_\ell$ instead of $A(\ell)$.

\begin{example}[$3$ birds, $1$ stone]
	\label{ex:3x_efficiency} Consider $d=3$ dimensional data ${\bf X}^T=(A,B,C)$ over $\mathbb{F}_3$, and $((A\rightarrow B+C), (B\rightarrow A+C),(C\rightarrow A+B))$. In other words, User $1$ has $A$ and wants $B+C$, User $2$ has $B$ and wants $A+C$, and User $3$ has $C$ and wants $A+B$. A signal space decomposition as in Figure \ref{fig:venn1}  yields for this example,
\begin{align}
\begin{array}{|c|c|c|c|c|c|c|c|c|c|}\hline
\bB_{123}& \bB_{12}& \bB_{13}& \bB_{23}&\bB_{1(2,3)}&\bB_{2(1,3)}&\bB_{3(1,2)}&\bB_{1c}&\bB_{2c}&\bB_{3c}\\\hline
A+B+C& - & - & - & A & B & A+B & - & - & -\\\hline
\end{array}	
\end{align}
Note that for simplicity in these examples we indicate $\bB_{123}$ as $A+B+C$ instead of the formal representation as the vector $[1,1,1]^T$ in the $3$ dimensional data universe. The optimal broadcast cost is $\Delta^*=1$, achieved with $L=1$, $N=1$, $\lambda_{12}=\lambda_{23}=\lambda_{13}=\lambda=0$, $\lambda_{123}=1$, by broadcasting ${\bf S}=(A+B+C)$.
\end{example}

\begin{example}[$2$ birds, $1$ stone, vector coding, insufficiency of entropic structure]\label{ex:2x_efficiency} Consider $d=3$ dimensional data ${\bf X}^T=(A,B,C)$ over $\mathbb{F}_3$, and $((A\rightarrow B+C), (B\rightarrow A+C),(C\rightarrow A+2B))$. 
A signal space decomposition yields,
\begin{align*}
\begin{array}{|c|c|c|c|c|c|c|c|c|c|}\hline
\bB_{123}& \bB_{12}& \bB_{13}& \bB_{23}&\bB_{1(2,3)}&\bB_{2(1,3)}&\bB_{3(1,2)}&\bB_{1c}&\bB_{2c}&\bB_{3c}\\\hline
-&{\small A+B+C} & {\small A+2B+2C} & {\small A+2B+C} & - & - & - & - & - & -\\\hline
\end{array}	
\end{align*}
The optimal broadcast cost is $\Delta^*=1.5$, achieved with $L=2$, $N=3$, $\lambda_{123}=\lambda=0, \lambda_{12}=\lambda_{13}=\lambda_{23}=0.5$, by broadcasting ${\bf S}=(A_1+B_1+C_1, A_2+2B_2+2C_2, (A_1+2B_1+C_1)+(A_2+2B_2+C_2))$. 
\end{example}
Evidently, $1.5$ dimensions of broadcast satisfy a total of $3$ dimensions of demand, as expected from a $2$ birds, $1$ stone setting. Also note that this example requires vector coding, i.e., we need $L>1$. Most importantly, however, this example illustrates that unlike the $2$ user LCBC, the entropic formulation of \cite{Sun_Jafar_CBC} is not enough for the $3$ user LCBC. The following remark elaborates upon this  observation.

\begin{enumerate}[wide, labelindent=1em ,labelwidth=!, labelsep*=1em, leftmargin =0em, style = sameline , label=\it Remark \it\arabic{section}.\arabic*]
\item \label{rem:funcsub} Reference \cite{Sun_Jafar_CBC} considers an entropic formulation of the LCBC that is summarized as follows. The data ${\bf X}$ is assumed to be i.i.d. uniform, $\mathcal{W}^*\triangleq \{{\bf W}_1, {\bf W}_1', \cdots, {\bf W}_K, {\bf W}_K'\}$ denotes the set of all $2K$ demand and side-information random variables,  the entropies $H(\mathcal{W})$ are specified for all $2^{2K}-1$ non-empty subsets of random variables $\mathcal{W}\subset\mathcal{W}^*$, the encoding constraint is represented as $H({\bf S}\mid \mathcal{W}^*)=0$, and the decoding constraints are represented as $H({\bf W}_k\mid {\bf S}, {\bf W}_k')=0$ for all $k\in[K]$. Subject to these entropy specifications, as well as standard (Shannon and non-Shannon) information inequalities, the goal is to minimize the entropy $H({\bf S})$. As discussed in Section \ref{sec:entropy}, such a formulation produces a  lower bound on the download cost, as $N\geq H({\bf S})$, which in turn yields a lower bound on $\Delta^*$. For the $K=2$ user LCBC, this bound turns out to be tight. Remarkably, however, the same approach does not work for the $K=3$ user LCBC, as  we argue based on Example \ref{ex:3x_efficiency} and Example \ref{ex:2x_efficiency}. Although a bit tedious, it is not difficult to verify that all $2^6-1=63$ entropies $H(\mathcal{W})$ match for Example \ref{ex:3x_efficiency} and Example \ref{ex:2x_efficiency}. For example, consider $\mathcal{W}=\{{\bf W}_1', {\bf W}_3\}$. Note that $H(\mathcal{W})=H({\bf W}_1', {\bf W}_3)=H(A, A+B)=H(A,B)=2L$ in Example \ref{ex:3x_efficiency}, and $H(\mathcal{W})=H({\bf W}_1', {\bf W}_3)=H(A, A+2B)=H(A,B)=2L$ in Example \ref{ex:2x_efficiency}, so both examples have the same entropy for this $\mathcal{W}$. One can similarly compute $H(\mathcal{W})$ for all $63$ non-empty subsets $\mathcal{W}\subset\mathcal{W}^*$ for both Example \ref{ex:3x_efficiency} and Example \ref{ex:2x_efficiency} and verify that in each case both examples produce matching entropies. Therefore, since all the entropic constraints for both examples are identical, and all Shannon and non-Shannon information inequalities apply to both examples, the two examples can only produce the same entropic lower bound on $H({\bf S})$. However, we know that the  two examples have different capacities. Example \ref{ex:3x_efficiency} has  $\Delta^*=1, C=1$ while Example \ref{ex:2x_efficiency} has  $\Delta^*=1.5, C=1/1.5=2/3$. Since Example \ref{ex:2x_efficiency} requires a strictly stronger  bound (impossibility result) than Example \ref{ex:3x_efficiency} for a tight converse, it follows that the entropic formulation cannot yield a tight converse for  Example \ref{ex:2x_efficiency}.  Indeed, the key to the converse bound $\Delta^*\geq 1.5$ for Example \ref{ex:2x_efficiency} is the \emph{functional} submodularity property \cite{Tao_FS, Kontoyiannis_Madiman} that takes into account the functional forms of the users' side-information and demands. A converse for Example \ref{ex:2x_efficiency} is explicitly provided in Section \ref{sec:conex2}.
\end{enumerate}

\begin{example}[$3$ birds, $2$ stones, the user's perspective]
	\label{ex:ic} Consider $d=3$ dimensional data ${\bf X}^T=(A,B,C)$ over $\mathbb{F}_2$, and $((A\rightarrow B), (B\rightarrow C),(C\rightarrow A))$. A signal space decomposition as in Figure \ref{fig:venn1}  yields,
\begin{align*}
\begin{array}{|c|c|c|c|c|c|c|c|c|c|}\hline
\bB_{123}& \bB_{12}& \bB_{13}& \bB_{23}&\bB_{1(2,3)}&\bB_{2(1,3)}&\bB_{3(1,2)}&\bB_{1c}&\bB_{2c}&\bB_{3c}\\\hline
-& B & A & C & - & - & - & - & - & -\\\hline
\end{array}	
\end{align*}
	This coincides with an index coding problem,  the optimal broadcast cost is $\Delta^*=2$, achieved with $L=1$, $N=2$, $\lambda_{123}=\lambda_{12}=\lambda_{13}=\lambda_{23}=0,\lambda=1$, by broadcasting ${\bf S}=(A+B, B+C)$.
\end{example}
This example also highlights the importance of the users' individual perspectives conditioned on their side-information. Without accounting for side-information, the signal space decomposition of Figure \ref{fig:venn1} suggests that all the signals reside in $\bU_{12}, \bU_{13}, \bU_{23}$, which might suggest $2$ birds, $1$ stone schemes with $\lambda_{12}=\lambda_{13}=\lambda_{23}=0.5$ and a download cost of $\Delta^*=1.5$. However, this is not achievable, as we note the optimal download cost is $\Delta^*=2$. To see this, consider individual users' perspectives. For example, User $1$ requires $\lambda_{13}+\lambda_{123}\leq \rk({\bf U}_{13}\mid \bV_1')$. Now since both $\bU_{13}$ and $\bV_1'$ correspond to the data dimension $A$, this conditional rank is $0$. In other words, even though the subspace $\bU_{13}$ has one dimension that may suggest the opportunity to simultaneously satisfy users $1$ and $3$, this dimension happens to be already available to User $1$. Thus, upon taking into account User $1$'s side-information, there is no such opportunity. We end up with $\lambda_{123}=\lambda_{12}=\lambda_{13}=\lambda_{23}=0$, and $\lambda=1$. Out of the $3$ dimensions, say $A+B, B+C, C+A$, any two yield the third by summation (over $\mathbb{F}_2$), and it suffices to send any $2$ to satisfy all $3$ users. Notice the need to mix up the dimensions, appealing to mixed dimensions (similar to random coding arguments) will be a key idea to develop the general coding scheme.

As noted, Example \ref{ex:2x_efficiency} used vector coding $(L>1)$ to achieve the optimal download cost $\Delta^*=1.5$. Vector coding may be strictly necessary even in cases where the optimal download cost $\Delta^*$ is an integer value, as illustrated by the next example. The necessity of vector coding for the $3$ user LCBC is remarkable because scalar coding was found to be sufficient for the 2 user LCBC in \cite{Sun_Jafar_CBC}).

\begin{example}[Field size extension] \label{ex:field_ext} Consider $d=2$ dimensional data ${\bf X}^T=(A,B)$ over $\mathbb{F}_2$, and $((A\rightarrow B), (B\rightarrow A+B),(A+B\rightarrow A))$. 
A signal space decomposition as in Figure \ref{fig:venn1}  yields,
\begin{align*}
\begin{array}{|c|c|c|c|c|c|c|c|c|c|}\hline
\bB_{123}& \bB_{12}& \bB_{13}& \bB_{23}&\bB_{1(2,3)}&\bB_{2(1,3)}&\bB_{3(1,2)}&\bB_{1c}&\bB_{2c}&\bB_{3c}\\\hline
A,B& - & - & - & - & - & - & - & - & -\\\hline
\end{array}	
\end{align*}
We have $\Delta^*=1$, achieved with $L=2$, $N=2$, $\lambda_{123}=1, \lambda_{12}=\lambda_{13}=\lambda_{23}=\lambda=0$, ${\bf S}=(A_1+A_2+B_2,A_1+B_1+B_2)$.
\end{example}
Appendix \ref{app:field_extension} shows that $\Delta^*=1$ is not achievable with scalar coding, i.e., neither scalar linear nor scalar non-linear  coding scheme can achieve $\Delta=1$ for $L=1$ computation for this example. However, $\Delta^* = 1$ and can be achieved for $L=2$ computations with $N=2$. In this case, because $\lambda_{123}=1$,  we would like to broadcast one dimension. In the scalar code setting $L=1$, this one dimension can be found for each pair  of users but it cannot be the same for the three users simultaneously. To see this, note that $A+B$ helps User $1$ and User $2$ but not User $3$; $B$ helps User $1$ and User $3$ but not User $2$; $A$ helps User $2$ and User $3$ but not User $1$. Aside from the time-sharing type vector coding solution shown for  Example \ref{ex:2x_efficiency}, another approach is to consider $L>1$ (which implies a vector code) and use a scalar code in a larger extended field ${\Bbb F}_{2^z}$ (in general ${\Bbb F}_{q^z}$). For this example, with $L=2$, we can use a scalar code over $\mathbb{F}_4=\mathbb{F}_2[x]/(x^2+x+1)$,  which results in $N=2$ in ${\Bbb F}_2$. Representing $A=A_1+A_2x\in\mathbb{F}_4$, $B=B_1+B_2x\in\mathbb{F}_4$, the transmitted symbol is simply $(1+x)A+xB \mod (x^2+x+1)= (A_1+A_2+B_2)+x(A_2+B_1+B_2)$ which corresponds to the transmitted symbol ${\bf S}=(A_1+A_2+B_2,A_1+B_1+B_2)$. Additional discussion can be found in Appendix \ref{app:field_extension} as well. Indeed, field extensions are a key element of the general coding scheme.
 
\begin{example}[Inseparability] \label{ex:trade_off}
	Consider $d=5$ dimensional data ${\bf X}^T=(A,B,C,D,E)$ over $\mathbb{F}_3$, and $((A\rightarrow [B+C,D]), (B\rightarrow [A+C,E]),([C,D+E]\rightarrow A+2B))$. A signal space decomposition as in Figure \ref{fig:venn1}  yields,
\begin{align*}
\begin{array}{|c|c|c|c|c|c|c|c|c|c|}\hline
\bB_{123}& \bB_{12}& \bB_{13}& \bB_{23}&\bB_{1(2,3)}&\bB_{2(1,3)}&\bB_{3(1,2)}&\bB_{1c}&\bB_{2c}&\bB_{3c}\\\hline
-& A+B+C & A+2B+2C & A+2B+C & D & E & D+E & - & - & -\\\hline
\end{array}	
\end{align*}
	We have $\Delta^*=3$, achieved with $L=1$, $N=3$, $\lambda_{123}=\lambda_{13}=\lambda_{23}=0, \lambda_{12}=\lambda=1$, by broadcasting ${\bf S}=(A+B+C, A+D, 2B+E)$.
\end{example}
Note that this problem combines Example \ref{ex:2x_efficiency} for data $(A,B,C)$ and another LCBC instance with data $(D,E)$ where User $1$ wants $D$, User 2 wants $E$ and User 3 knows $D+E$. Separately, these problems have download costs of $1.5$ and $2$, respectively. Since the two problems deal with independent data, one might expect the solution to be separable, however a separate solution would have a total broadcast cost of $1.5+2=3.5$. The optimal $\Delta^*=3$, which is better than $3.5$, thus showing that even though an LCBC problem may be a composition of instances with separate datasets, in general a separate solution would be suboptimal. This observation also underscores why the tradeoffs in LCBC, that we see represented in the linear program, are non-trivial.

\section{Proof of Converse: $\Delta^* \geq \max\{\Delta_1,\Delta_2\}$} \label{sec: Converse}
The converse is comprised of the two bounds, $\Delta^*\geq \Delta_1$, and $\Delta^*\geq \Delta_2$. The first bound, $\Delta^*\geq \Delta_1$ is a straightforward generalization of the corresponding bound for the $2$ user LCBC found in \cite{Sun_Jafar_CBC} to the 3 user setting. The second bound, $\Delta^*\geq \Delta_2$ is novel, and requires functional submodularity. For the sake of completeness in this section we present the proof of both bounds. Let us begin by recalling the functional submodularity property.
\begin{lemma}[Functional submodularity of Shannon entropy (Lemma A.2 of \cite{Tao_FS})]\label{lemma:tao} If $X_0,X_1,X_2,X_{12}$  are random variables such that $X_1$ and $X_2$ each determine $X_0$ and $(X_1,X_2)$ determine $X_{12}$, then:
\begin{align}
H(X_1)+H(X_2) \geq H(X_{12})+H(X_0)\label{eq:tao}
\end{align}
\end{lemma}
Note that  `$A$ determines $B$' as used in Lemma \ref{lemma:tao} is equivalent to the statement that $H(B\mid A)=0$, i.e., $B$ is a function of $A$. Thus, the lemma assumes that $H(X_0\mid X_1)=H(X_0\mid X_2)=H(X_{12}\mid X_1,X_2)=0$.

As an immediate corollary, let us note the following form in which we will apply the functional submodularity.
\begin{corollary} \label{cor:submod}
For arbitrary matrices ${\bf M}_1\in\mathbb{F}_q^{d\times \mu_1}, {\bf M}_2\in\mathbb{F}_q^{d\times \mu_2}$, any random matrix ${\bf X}\in\mathbb{F}_q^{d\times L}$, and any random variable ${\bf Z}$,
\begin{align}
H({\bf Z}, {\bf X}^T{\bf M}_1)+H({\bf Z}, {\bf X}^T{\bf M}_2)&\geq H({\bf Z}, {\bf X}^T[{\bf M}_{1}\cap {\bf M}_2])+H({\bf Z}, {\bf X}^T[{\bf M}_1, {\bf M}_2])\label{eq:cortao}
\end{align}
\end{corollary}
\proof The corollary follows from Lemma \ref{lemma:tao} by setting $X_1=({\bf Z},{\bf X}^T{\bf M}_1)$, $X_2=({\bf Z}, {\bf X}^T{\bf M}_2)$, and noting that $X_0=({\bf Z}, {\bf X}^T[{\bf M}_1\cap{\bf M}_2])$ can be obtained as a function of both $X_1$ and $X_2$ individually, while $X_{12}=({\bf Z}, {\bf X}^T[{\bf M}_1, {\bf M}_2])$ is a function of $(X_1, X_2)$.
$\hfill\square$

\subsection{Proof of the bound: $\Delta^*\geq \Delta_1$}
As noted, the proof of this bound is straightforward. It follows along the same lines as the proof for the $2$ user LCBC in \cite{Sun_Jafar_CBC}, also similar to  the genie-aided bound in coded caching (e.g., \cite[(71)-(75)]{yu2018characterizing}) and is provided here for the sake of completeness. In particular, it does not require functional submodularity. As explained in Section \ref{sec:entropy}, recall that  the converse bound is based on a thought-experiment that supposes  that the data ${\bf X}$ is i.i.d. uniform, which leads to a lower bound $N\geq H({\bf S})$. 
Let ${\bf W}_k^* \triangleq ({\bf W}_k, {\bf W}_k'), \forall k\in [3]$. 
The bound follows essentially by iteratively using the argument 
\begin{align}
	H({\bf S} \mid {\bf W}_k',  {\bf W}^*_{[k-1]})  \geq \underbrace{H({\bf W}_k\mid {\bf W}_k', {\bf W}^*_{[k-1]})}_{\footnotesize \mbox{The genie-aided bound for the $k^{th}$ user}} + ~~~~H({\bf S} \mid {\bf W}_{k+1}', {\bf W}_{[k]}^*), \label{eq:genie_aided}
\end{align}
since $H({\bf S} \mid {\bf W}_k', {\bf W}_{[k-1]}^*) = H({\bf S}, {\bf W}_k \mid {\bf W}_k', {\bf W}_{[k-1]}^*) \geq H({\bf W}_k \mid {\bf W}_k', {\bf W}_{[k-1]}^*) + H({\bf S} \mid {\bf W}_{k+1}', {\bf W}_{[k]}^*)$, where the first step uses the decoder definition \eqref{eq:dec} and the second step applies the chain rule of entropy and the fact that conditioning reduces entropy.
It then follows that for any coding scheme $\big(L,N,\Phi,(\Psi_{k})_{k\in[3]}\big)\in\mathfrak{C}$,

\begin{align}
	&N \geq H({\bf S}) \geq H({\bf S} \mid {\bf W}_1') \\
    & \geq H({\bf W}_1 \mid {\bf W}_1') + H({\bf S} \mid {\bf W}_2', {\bf W}_1^*) \label{eq:use_genie_aided1}  \\
    & \geq H({\bf W}_1 \mid {\bf W}_1') + H({\bf W}_2 \mid {\bf W}_2', {\bf W}_1^*) + H({\bf S} \mid {\bf W}_3', {\bf W}_1^*, {\bf W}_2^*) \label{eq:use_genie_aided2} \\
    & \geq H({\bf W}_1 \mid {\bf W}_1')+ H({\bf W}_2 \mid {\bf W}_2', {\bf W}_1^*) + H({\bf W}_3 \mid {\bf W}_3', {\bf W}_1^*, {\bf W}_2^*)\label{eq:use_genie_aided3} \\
    & = L \cdot \left( \rk({\bf V}_1 \mid {\bf V}_1') + \rk({\bf V}_2 \mid [{\bf U}_1, {\bf V}_2']) +  \rk({\bf V}_3 \mid [{\bf U}_1, {\bf U}_2, {\bf V}_3']) \right)  \label{eq:entropytorank}\\
    & = L \cdot \left( \rk(\bV_1 \mid \bV_1')  +\rk(\bV_2 \mid \bV_2')-\rk(\bU_{12} \mid \bV_2')+ \rk(\bV_3 \mid \bV_3')-\rk(\bU_{3(1,2)} \mid \bV_3') \right) \label{eq:converting_ranks} \\
    & \implies \Delta = N/L \geq \Delta_1^{123}
\end{align} 
Steps \eqref{eq:use_genie_aided1} -- \eqref{eq:use_genie_aided3} follow from \eqref{eq:genie_aided}. Step \eqref{eq:entropytorank} uses the fact that for i.i.d. uniform data ${\bf X}^T\in\mathbb{F}_q^{L\times d}$ and an arbitrary matrix ${\bf M}\in\mathbb{F}_q^{d\times \mu}$, we have $H({\bf X}^T{\bf M})=L\cdot\rk({\bf M})$ in $q$-ary units, and applies the conditional-rank notation as defined in Section \ref{sec:notation}. Step \eqref{eq:converting_ranks} follows from the observation that $\rk({\bf V}_k \mid [{\bf Z}, {\bf V}_k']) = \rk([{\bf U}_k, {\bf Z}]) - \rk([{\bf V}_k',{\bf Z}]) = \rk({\bf U}_k)- \rk({\bf V}_k')  - \Big( \rk({\bf U}_k \cap {\bf Z}) - \rk({\bf V}_k' \cap {\bf Z}) \Big)   = \rk({\bf V}_k \mid {\bf V}_k')-  \rk({\bf U}_k \cap{\bf Z} \mid {\bf V}_k')$. Similarly, $\Delta \geq \Delta_1^{ijk}$ for all $(i,j,k)$ that are permutations of $(1,2,3)$. Since this holds for every coding scheme $\big(L,N,\Phi,(\Psi_{k})_{k\in[3]}\big)\in\mathfrak{C}$, it follows that $\Delta^*\geq \Delta_1$. $\hfill \qed$

\subsection{Proof of the bound: $\Delta^*\geq \Delta_2$}
The main idea of this proof is to successfully identify and introduce the entropies of certain (linear) \emph{functions} of users' demands and side-information that are critical in determining the capacity, with the application of Lemma \ref{lemma:tao}. To build intuition, let us start with the converse proof for a toy example, specifically Example 2 of Section \ref{sec:examples}.

\subsubsection{Converse Proof for a Toy Example}\label{sec:conex2}
Consider any coding scheme $\big(L,N,\Phi,(\Psi_{k})_{k\in[3]}\big)\in\mathfrak{C}$ for Example 2 of Section \ref{sec:examples}. Recall that User $1$ has $A_\ell$ and wants $B_\ell+C_\ell$; User $2$ has $B_\ell$ and wants $A_\ell+C_\ell$; User $3$ has $C_\ell$ and wants $A_\ell+2B_\ell$ for all $\ell \in [L]$. We want to prove the converse bound $\Delta^*\geq 1.5$. Let us denote $A$ as $(A_1,\cdots, A_L)\in {\Bbb F}_q^{1\times L}$, $B$ as $(B_1,\cdots, B_L)\in {\Bbb F}_q^{1\times L}$ and $C$ as $(C_1,\cdots, C_L)\in {\Bbb F}_q^{1\times L}$. As mentioned in Section \ref{sec:entropy} let us start the converse proof with the thought-experiment  that  $A,B,C$ are i.i.d. uniform in ${\Bbb F}_q$, which allows the following entropic arguments.
\begin{align}
	& 2H({\bf S}) + H(A)+ H(B) - I({\bf S};A) - I({\bf S};B) \notag \\
	& = H({\bf S},A) + H({\bf S},B) \label{eq:ex_converse_mi} \\
	& \geq H({\bf S},A, B+C) + H({\bf S},B, A+C) \label{eq:ex_converse_decoding} \\
	& \geq H({\bf S},A+B+C) + H({\bf S}, A,B,C)  \label{eq:ex_converse_fs} \\
	& \geq H({\bf S}) + H(A+B+C \mid {\bf S})  +  H(A,B,C).
\end{align}
Similarly,
\begin{align}
	&  2H({\bf S}) + H(A)+ H(C) - I({\bf S};A) - I({\bf S};C) \notag \\
	&  = H({\bf S},A) + H({\bf S},C) \label{eq:ex_converse_mi2}  \\
	& \geq H({\bf S},A, B+C) + H({\bf S},C, A+2B) \label{eq:ex_converse_decoding2} \\
	& \geq H({\bf S},A+2B+2C) + H({\bf S}, A,B,C)  \label{eq:ex_converse_fs2} \\
	& \geq H({\bf S}) + H(A+2B+2C\mid {\bf S}) + H(A,B,C).
\end{align}
Steps \eqref{eq:ex_converse_mi} and \eqref{eq:ex_converse_mi2} use the definition of mutual information $I(A;B) = H(A)+H(B)-H(A,B)$. Steps \eqref{eq:ex_converse_decoding} and \eqref{eq:ex_converse_decoding2} use the decoder definition \eqref{eq:dec}. Step \eqref{eq:ex_converse_fs} uses functional submodularity (Lemma \ref{lemma:tao}) by recognizing that $(A, B+C)$ and $(B,A+C)$ each determine $A+B+C$, and $(A,B+C,B,A+C)$ determines $(A,B,C)$. Step \eqref{eq:ex_converse_fs2} uses functional submodularity by recognizing that $(A, B+C)$ and $(C,A+2B)$ each determine $A+2B+2C$, and $(A,B+C,C,A+2B)$ determines $(A,B,C)$.

Adding the above two inequalities, we have
\begin{align}
	&4H({\bf S}) + 2H(A) + H(B) + H(C)- 2I({\bf S};A) - I({\bf S};B) - I({\bf S};C) \notag \\
	& \geq 2H({\bf S}) + H(A+B+C \mid {\bf S}) + H(A+2B+2C\mid {\bf S}) +  2H(A,B,C).
\end{align}
It follows that
\begin{align}
	2H({\bf S}) &\geq H(A+B+C \mid {\bf S}) + H(A+2B+2C\mid {\bf S})+I({\bf S};A) + 2L \label{eq:ex_converse_evaluate_entropy} \\
	&\geq H(A+B+C,A+2B+2C\mid {\bf S}) + I({\bf S};A) + 2L \label{eq:ex_converse_joint_entroy} \\
	& \geq H(A \mid {\bf S}) + I({\bf S};A) + 2L \label{eq:ex_converse_A_is_a_function} \\
	& = H(A) + 2L \label{eq:ex_converse_mi3}  \\
	& = 3L \label{eq:ex_converse_evaluate_entropy2} .
\end{align}
Step \eqref{eq:ex_converse_evaluate_entropy} and \eqref{eq:ex_converse_evaluate_entropy2} apply the assumption that $A,B,C$ are i.i.d. uniform in ${\Bbb F}_q$. Step \eqref{eq:ex_converse_joint_entroy} uses the general property of joint entropy that $H(X\mid Z)+H(Y\mid Z)\geq H(X,Y\mid Z)$ for any random variables $X,Y,Z$. Step \eqref{eq:ex_converse_A_is_a_function} is obtained by recognizing that $A$ is a function of $(A+B+C, A+2B+2C)$. Step \eqref{eq:ex_converse_mi3} uses the information equality $I(A;B)=H(A)-H(A|B)$.
Therefore, we have the desired converse bound, $\Delta=N/L\geq H({\bf S})/L\geq 1.5$ for the coding scheme. Since this is true for every feasible coding scheme, we have the bound $\Delta^*\geq 1.5$.
\subsubsection{General Proof of Converse Bound $\Delta^*\geq \Delta_2$}\label{sec:congen}
As mentioned in Section \ref{sec:entropy} let us start the converse proof based on the thought-experiment that supposes the elements of the data ${\bf X}$ are i.i.d. uniform in $\mathbb{F}_q$.
\begin{align}
	&2H(\mathbf{S})+  2\sum_{k=1}^3 H(\mathbf{W}_k')\notag\\
	&=2\sum_{k=1}^3 H(\mathbf{S}, \mathbf{W}_k')   + 2\sum_{k=1}^3 I(\mathbf{S}; \mathbf{W}_k') -4 H({\bf S})\label{eq:defMIb} \\
	&=2\sum_{k=1}^3 H(\mathbf{S}, \mathbf{W}_k', \mathbf{W}_k)   + 2\sum_{k=1}^3 I(\mathbf{S}; \mathbf{W}_k') -4 H({\bf S})\label{eq:defdecodeb} \\	
	&=\sum_{(i,j)\in\{(1,2),(2,3),(1,3)\}}\left(H({\bf S},{\bf X}^T\bU_i)+H({\bf S},{\bf X}^T\bU_j)\right)+ 2\sum_{k=1}^3 I(\mathbf{S}; \mathbf{W}_k') -4 H({\bf S})\label{eq:useUdefb} \\
	&\geq \left[H({\bf S}, {\bf X}^T\bU_{12})+H({\bf S}, {\bf X}^T\bU_{13})\right]+\left[H({\bf S},{\bf X}^T[{\bf U}_1,{\bf U}_2])+H({\bf S},{\bf X}^T{\bf U}_3)\right] \notag\\
	&\hspace{1cm}+\left[ H({\bf S},{\bf X}^T[{\bf U}_1,{\bf U}_3])+H({\bf S},{\bf X}^T{\bf U}_2)\right]+ 2\sum_{k=1}^3 I(\mathbf{S}; \mathbf{W}_k') -4 H({\bf S})\label{eq:usefuncsub1b}\\
	&\geq H({\bf S},{\bf X}^T{\bf U}_{123})+H({\bf S},{\bf X}^T[{\bf U}_{12},{\bf U}_{13}])+H({\bf S},{\bf X}^T{\bf U}_{3(1,2)})+H({\bf S},{\bf X}^T{\bf U}_{2(1,3)}) \notag \\
		&\hspace{1cm}+2H({\bf S},{\bf X}^T[{\bf U}_1,{\bf U}_2,{\bf U}_3])+ 2\sum_{k=1}^3 I(\mathbf{S}; \mathbf{W}_k') -4 H({\bf S})\label{eq:usefuncsub2b}\\
		& \geq H({\bf X}^T{\bf U}_{123}|{\bf S})+H({\bf X}^T[{\bf U}_{12},{\bf U}_{13}]|{\bf S})+H({\bf X}^T{\bf U}_{3(1,2)}\mid{\bf S})+H({\bf X}^T{\bf U}_{2(1,3)}|{\bf S}) \notag \\
		&\hspace{1cm}+2H({\bf X}^T[{\bf U}_1,{\bf U}_2,{\bf U}_3]) + 2\sum_{k=1}^3 I(\mathbf{S}; \mathbf{W}_k')\label{eq:conv_2_RHSb}\\
		& \geq \max_{\ell\in\{1,2,3\}}H({\bf X}^T({\bf U}_{123}\cap \bV_\ell')\mid{\bf S})+H({\bf X}^T[{\bf U}_{12},{\bf U}_{13}]\mid{\bf S})+H({\bf X}^T{\bf U}_{3(1,2)}\mid{\bf S})+H({\bf X}^T{\bf U}_{2(1,3)}\mid {\bf S}) \notag \\
		&\hspace{1cm}+2H({\bf X}^T[{\bf U}_1,{\bf U}_2,{\bf U}_3]) + 2\sum_{k=1}^3 I(\mathbf{S}; \mathbf{W}_k')\label{eq:fb}\\		
	& \geq \max_{\ell\in \{1,2,3\}}H({\bf X}^T({\bf U}_{123}\cap {\bf V}_\ell')) +H({\bf X}^T[{\bf U}_{12},{\bf U}_{13}]\mid{\bf S})+H({\bf X}^T{\bf U}_{3(1,2)}\mid{\bf S})+H({\bf X}^T{\bf U}_{2(1,3)}\mid{\bf S}) \notag \\
		&\hspace{1cm}+2H({\bf X}^T[{\bf U}_1,{\bf U}_2,{\bf U}_3]) + \sum_{k=1}^3 I(\mathbf{S}; \mathbf{W}_k') \label{eq:useDPI2b} \\
	&\geq \max_{\ell\in \{1,2,3\}}H({\bf X}^T({\bf U}_{123}\cap {\bf V}_\ell')) +H({\bf X}^T([{\bf U}_{12},{\bf U}_{13}]\cap {\bf V}_1'))+H({\bf X}^T({\bf U}_{3(1,2)}\cap {\bf V}_3')) \notag \\
		&\hspace{1cm}+H({\bf X}^T({\bf U}_{2(1,3)}\cap {\bf V}_2'))+2H({\bf X}^T[{\bf U}_1,{\bf U}_2,{\bf U}_3])  \label{eq:useDPI3b}
\end{align}
In the deduction, the most critical steps are Step \eqref{eq:usefuncsub1b} and Step \eqref{eq:usefuncsub2b}. Specifically, Step \eqref{eq:usefuncsub1b} uses functional submodularity property from Corollary \ref{cor:submod} twice, once for $(i,j)=(1,2)$ and once for $(i,j)=(1,3)$. Step \eqref{eq:usefuncsub2b} uses functional submodularity from Corollary \ref{cor:submod} three times, once for each of the collections of terms inside the three square parantheses in \eqref{eq:usefuncsub1b}, making use of the fact that $\bU_{12}\cap\bU_{13}=\bU_{123}$, $[\bU_1,\bU_2]\cap\bU_3=\bU_{3(1,2)}$, and $[\bU_1,\bU_3]\cap\bU_2=\bU_{2(1,3)}$ by definition. The other steps follow from conventional entropic inequalities. Specifically, Step \eqref{eq:defMIb} uses the definition of mutual information $I(A; B)=H(A)+H(B)-H(A,B)$. Step \eqref{eq:defdecodeb} uses the decoder definition \eqref{eq:dec}, i.e., ${\bf W}_k$ is a function of $({\bf S}, {\bf W}_k')$. Step \eqref{eq:useUdefb} uses the definition of $\bU_k=[\bV_k',\bV_k]$  from \eqref{eq:Udef} to recognize ${\bf X}^T{\bf U}_k=[{\bf W}_k', {\bf W}_k]$.  Step \eqref{eq:conv_2_RHSb} uses the chain rule of entropy to extract $H({\bf S})$ from the first four terms, and the property that $H(A,B)\geq H(B)$ to drop ${\bf S}$ from the fifth term.  Step \eqref{eq:fb} uses the property that $H(A\mid B)\geq H(f(A)\mid B)$, and the fact that ${\bf X}^T(\bU_{123}\cap\bV_\ell')$ is a function of ${\bf X}^T{\bf U}_{123}$. Step \eqref{eq:useDPI2b} uses the fact that $H({\bf X}^T(\bU_{123}\cap \bV_\ell')\mid{\bf S})=H({\bf X}^T(\bU_{123}\cap \bV_\ell'))-I({\bf S}; {\bf X}^T(\bU_{123}\cap \bV_\ell'))$ by definition of mutual information, and $I({\bf S}; {\bf X}^T(\bU_{123}\cap \bV_\ell'))\leq I({\bf S}; {\bf X}^T \bV_\ell')=I({\bf S}; {\bf W}_\ell')\leq \sum_{k=1}^3I({\bf S};{\bf W}_k')$ by data-processing inequality, and the non-negativity of mutual information. Similar reasoning is applied to the third, fourth and fifth terms of \eqref{eq:useDPI2b} to obtain \eqref{eq:useDPI3b} by removing the conditioning on ${\bf S}$ and by absorbing one of the $I({\bf S}; {\bf W}_k')$ terms each. The reasoning can be summarized as $H(A\mid S)+I(B;S)\geq  H(C\mid S)+I(C;S)=H(C)$ if $C$ is a function of both $A$ and $B$ individually.
 
Evaluating the entropies in terms of the corresponding ranks, and normalizing by $L$, we obtain,
\begin{align} 
	& \Delta = N/L \geq H({\bf S})/L\notag\\
	& \geq \frac{1}{2}\Bigg(\max_{l\in \{1,2,3\}}\rk(({\bf U}_{123}\cap {\bf V}_l')) +\rk([{\bf U}_{12},{\bf U}_{13}]\cap {\bf V}_1')+ \rk({\bf U}_{3(1,2)}\cap {\bf V}_3')+\rk({\bf U}_{2(1,3)}\cap {\bf V}_2')\Bigg) \notag \\
		&\hspace{1cm}+\rk([{\bf U}_1,{\bf U}_2,{\bf U}_3]) -\sum_{k=1}^3\rk({\bf V}_k') \label{eq:rankdefb}\\
	&= \frac{1}{2}\Bigg(\rk({\bf U}_{123})- \min_{\ell\in \{1,2,3\}}\rk({\bf U}_{123}\mid{\bf V}_\ell')+\rk([{\bf U}_{12},{\bf U}_{13}])   -\rk([{\bf U}_{12},{\bf U}_{13}]\mid \bV_1')+ \rk({\bf U}_{3(1,2)})\notag\\
		&\hspace{1cm}-\rk({\bf U}_{3(1,2)}\mid\bV_3')+\rk({\bf U}_{2(1,3)})-\rk({\bf U}_{2(1,3)}\mid \bV_2')\Bigg) +\rk([{\bf U}_1,{\bf U}_2,{\bf U}_3]) -\sum_{k=1}^3\rk({\bf V}_k')\label{eq:usedefconrank1b} \\
		&= \frac{1}{2}\Bigg(\rk({\bf U}_{12})+\rk({\bf U}_{13}) - \min_{\ell\in \{1,2,3\}}\rk({\bf U}_{123}\mid{\bf V}_\ell')  -\rk([{\bf U}_{12},{\bf U}_{13}]\mid \bV_1')+\rk(\bU_3)+\rk([\bU_1,\bU_2]) \notag\\
		&\hspace{1cm}-\rk({\bf U}_{3(1,2)}\mid\bV_3')+\rk(\bU_2)+\rk([\bU_1,\bU_3])-\rk({\bf U}_{2(1,3)}\mid \bV_2')\Bigg)  -\sum_{k=1}^3\rk({\bf V}_k') \label{eq:usedefunionrank1b}\\
		&= \rk(\bU_1)+\rk(\bU_2)+\rk(\bU_3)-\sum_{k=1}^3\rk({\bf V}_k')\notag\\
		&\hspace{0.5cm}-\frac{1}{2}\Bigg(\min_{\ell\in \{1,2,3\}}\rk({\bf U}_{123}\mid{\bf V}_\ell')  +\rk([{\bf U}_{12},{\bf U}_{13}]\mid \bV_1') +\rk({\bf U}_{3(1,2)}\mid\bV_3')+\rk({\bf U}_{2(1,3)}\mid \bV_2')\Bigg) \label{eq:splitrank1b}  \\
		&= \rk(\bV_1\mid \bV_1')+\rk(\bV_2\mid\bV_2')+\rk(\bV_3\mid\bV_3')\notag\\
		&\hspace{0.5cm}-\frac{1}{2}\Bigg(\min_{\ell\in \{1,2,3\}}\rk({\bf U}_{123}\mid{\bf V}_\ell')  +\rk([{\bf U}_{12},{\bf U}_{13}]\mid \bV_1') +\rk({\bf U}_{3(1,2)}\mid\bV_3')+\rk({\bf U}_{2(1,3)}\mid \bV_2')\Bigg)   \label{eq:UgivenV}
\end{align}
Step \eqref{eq:rankdefb} uses the fact that for i.i.d. uniform data ${\bf X}^T\in\mathbb{F}_q^{L\times d}$ and an arbitrary matrix ${\bf M}\in\mathbb{F}_q^{d\times \mu}$, $H({\bf X}^T{\bf M})=L\cdot\rk({\bf M})$ in $q$-ary units. Step \eqref{eq:usedefconrank1b} applies the conditional-rank notation, $\rk(A\mid B)=\rk([A,B])-\rk(B)$ as defined in Section \ref{sec:notation}. Step \eqref{eq:usedefunionrank1b} uses the fact that $\rk(\bU_{123})=\rk(\bU_{12}\cap\bU_{13})=\rk(\bU_{12})+\rk(\bU_{13})-\rk([\bU_{12},\bU_{13}])$,  similarly $\rk(\bU_{3(1,2)})=\rk(\bU_3\cap [\bU_1,\bU_2])=\rk(\bU_3)+\rk([\bU_1,\bU_2])-\rk([\bU_1,\bU_2,\bU_3])$,  and by the same token $\rk(\bU_{2(1,3)})=\rk(\bU_2\cap [\bU_1,\bU_3])=\rk(\bU_2)+\rk([\bU_1,\bU_3])-\rk([\bU_1,\bU_2,\bU_3])$. Step \eqref{eq:splitrank1b} uses the fact that $\rk(\bU_{ij})=\rk(\bU_i)+\rk(\bU_j)-\rk([\bU_i,\bU_j])$. Finally, step \eqref{eq:UgivenV} uses the fact that $\rk(\bU_i)-\rk(\bV_i')=\rk([\bV_i', \bV_i])-\rk(\bV_i')=\rk(\bV_i\mid \bV_i')$.

Since this holds for every coding scheme $\big(L,N,\Phi,(\Psi_{k})_{k\in[3]}\big)\in\mathfrak{C}$, it follows that $\Delta^*\geq \Delta_2^{123}$. Similarly, $\Delta^*\geq \Delta_2^{ijk}$,  $\forall(i,j,k)$ that are permutations of $(1,2,3)$, which implies that $\Delta^*\geq \Delta_2$.$\hfill \qed$

\section{Proof of Achievability: $ \Delta^*\leq F^*$} \label{sec: Achievability}
In this section, we  will construct a general scheme for the $3$ user LCBC that achieves broadcast cost per computation equal to $F^*$ as specified in the form of a linear program in Theorem \ref{thm:LP}, thus establishing an upper bound on the optimal broadcast cost per computation, $\Delta^*\leq F^*$.  Finding an explicit solution to the linear program in closed form will be left for Section \ref{sec:linear_prog}. We start this proof with some preliminary steps.

\subsection{Eliminating Redundancies}
As a first step let us eliminate redundancies, if any, that exist in the users' side-information and demands by removing redundant columns in  ${\bf V}_k, {\bf V}_k'$ such that ${\bf U}_k=[{\bf V}_k', {\bf V}_k]$ has full column rank for each $k\in[3]$. Essentially, we retain only linearly independent columns because the remaining columns either represent information desired by a user that is already available to the user (overlap between $\langle {\bf V}_k\rangle$ and $  \langle {\bf V}_k'\rangle$), or information that is already accounted for by the independent columns (redundancies within ${\bf V}_k$ or within ${\bf V}_k'$). Thus, henceforth  let us assume, without loss of generality, that 
\begin{align} \label{eq:full_rank_Uk}
\rk({\bf U}_k) = \rk([\bV_k', \bV_k]) = \rk(\bV_k')+\rk(\bV_k) = m_k'+m_k.
\end{align}

\subsection{Field Size Extension} \label{subsec:field_size}
Recall that the problem formulation specifies a field $\mathbb{F}_q$, but allows us to choose the number of computations $L$ to be encoded together as a free parameter in the achievable scheme. The freedom in the choice of $L$ in fact allows  field extensions that translate the specified field of operations from $\mathbb{F}_q$ to $\mathbb{F}_{q^z}$ for arbitrary $z\in \mathbb{N}$.  
Specifically, consider $L=z$ computations, and denote $\bar{{\bf V}}_k' = {\bf V}_k' \otimes {\bf I}^{z\times z}$, $\bar{{\bf V}}_k = {\bf V}_k \otimes {\bf I}^{z\times z}$ and $\bar{{\bf U}}_k = {\bf U}_k \otimes {\bf I}^{z\times z}$ as the {\it $z$-extension} of the coefficient matrices, where $\otimes$ denotes the Kronecker product.   Denote $\bar{{\bf X}} = \mbox{vec}({\bf X}^T)$, where $\mbox{vec}(\cdot)$ is the vectorization function. By this notation, we can restate the problem such that User $k$ has side-information $\bar{{\bf X}}^T\bar{{\bf V}}_k'$ and wants to compute $\bar{{\bf X}}^T\bar{{\bf V}}_k$ for $k=[1:3]$, where $\bar{{\bf X}} \in {\Bbb F}_q^{dz\times 1}$, $\bar{{\bf V}}_k' \in {\Bbb F}_q^{dz\times m'z}$ and $\bar{{\bf V}}_k \in {\Bbb F}_q^{dz\times mz}$. Now, since $\mathbb{F}_q$ is a subfield of $\mathbb{F}_{q^z}$, this problem is equivalent to the problem where ${\bf X} \in {\Bbb F}_{q^z}^{d \times 1}$, ${\bf V}_k'\in{\Bbb F}_{q^z}^{d\times m'}$ and ${\bf V}_k\in{\Bbb F}_{q^z}^{d\times m}$ for $L=1$ computation.  By considering the elements in ${\Bbb F}_{q^z}$ instead of ${\Bbb F}_{q}$, we have more flexibility in designing schemes by choosing symbols in the extension field to jointly code over $z$ computations. Since the  achievable scheme allows joint coding over any $L$ computations, considering $L = L'z$ computations in the original problem with field $\mathbb{F}_{q}$ is equivalent to considering $L'$ computations in the extended field with $\mathbb{F}_{q^z}$. Appendix \ref{app:field_extension} illustrates the idea of field size extension with an example.

\subsection{Useful Lemma}
Next let us introduce a useful lemma.
\begin{lemma} \label{lem:2_complement}
	Let $A\in \mathbb{F}_q^{d\times a}, B_1\in \mathbb{F}_q^{d\times b_1}$ and $B_2\in \mathbb{F}_q^{d\times b_2}$ be arbitrary matrices with full column rank (bases), i.e., $\rk(A)=a, \rk(B_1)=b_1, \rk(B_2)=b_2$. Denote $\rk(B_1\mid A) = r_{1|A}$, $\rk(B_2\mid A) = r_{2|A}$ and $\rk([B_1,B_2]\mid A) = r_{1,2|A}$. Then for any non-negative integers $n_1, n_2$ such that $n_1\leq r_{1|A}, n_2\leq r_{2|A}$ and $n_1+n_2\leq r_{1,2|A}$, there exist submatrices of $B_1, B_2$, namely $B_1'\in \mathbb{F}_q^{d\times n_1}$ and $B_2'\in \mathbb{F}_q^{d\times n_2}$, respectively, such that $[A,B_1',B_2']$ has full column rank $a+n_1+n_2$.
\end{lemma}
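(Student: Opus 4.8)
The plan is to build $B_1'$ and $B_2'$ greedily, one column at a time, maintaining the invariant that the partial matrix $[A, B_1', B_2']$ has full column rank at every step. First I would choose the columns of $B_1'$: since $\rk(B_1 \mid A) = r_{1|A} \geq n_1$, the space $\langle B_1 \rangle$ contains $r_{1|A}$ dimensions not in $\langle A \rangle$, so I can pick $n_1$ columns of $B_1$ (call the submatrix $B_1'$) such that $[A, B_1']$ has full column rank $a + n_1$; this is just the standard fact that a spanning set of a space of codimension $\geq n_1$ over $\langle A\rangle$ contains $n_1$ columns independent modulo $\langle A\rangle$. The delicate part is the second stage: I must now pick $n_2$ columns of $B_2$ that remain independent modulo $\langle [A, B_1'] \rangle$, using only the hypotheses $n_2 \leq r_{2|A}$ and $n_1 + n_2 \leq r_{1,2|A}$.

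For the second stage, the key quantity to control is $\rk(B_2 \mid [A, B_1'])$, i.e., how many dimensions of $\langle B_2 \rangle$ survive modulo the enlarged space $\langle A \rangle + \langle B_1' \rangle$. I would lower-bound this as follows. We have $\rk([A, B_1', B_2]) \geq \rk([A, B_1, B_2]) - (b_1 - n_1)$ is not quite the right move; instead, observe $\rk(B_2 \mid [A, B_1']) = \rk([A, B_1', B_2]) - \rk([A, B_1'])= \rk([A, B_1', B_2]) - (a + n_1)$. Now $\langle [A, B_1', B_2]\rangle \supseteq \langle [A, B_1']\rangle$ and also, since $\langle B_1'\rangle \subseteq \langle B_1\rangle$, we can compare $\langle[A,B_1',B_2]\rangle$ with $\langle[A,B_1,B_2]\rangle$: removing the $b_1 - n_1$ "extra" columns of $B_1$ drops the rank by at most $b_1 - n_1$, so $\rk([A,B_1',B_2]) \geq \rk([A,B_1,B_2]) - (b_1 - n_1) = (a + r_{1,2|A}) - (b_1 - n_1)$. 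Hence $\rk(B_2 \mid [A,B_1']) \geq r_{1,2|A} - b_1 + n_1 - n_1 \cdot 0$... — I would need to be careful here, because $r_{1,2|A} - b_1$ can be negative and this bound is too weak. The cleaner route is to instead choose $B_1'$ more carefully in the first stage: pick the $n_1$ columns of $B_1$ to lie inside a complement of $\langle [A, B_2] \rangle \cap \langle[A,B_1]\rangle$ within $\langle[A,B_1]\rangle$ modulo $\langle A\rangle$ whenever possible, i.e., prioritize columns of $B_1$ that are independent modulo $[A, B_2]$.

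Concretely, for the first stage I would use Lemma applied to the pair $(A$-vs-$[A,B_2])$: the number of dimensions of $\langle B_1\rangle$ independent modulo $\langle[A,B_2]\rangle$ is $\rk(B_1 \mid [A,B_2]) = r_{1,2|A} - r_{2|A}$. So I select $\min(n_1,\, r_{1,2|A} - r_{2|A})$ columns of $B_1'$ from this "exclusive" part, and the remaining columns of $B_1'$ (there are $n_1 - \min(n_1, r_{1,2|A}-r_{2|A})$ of them, a quantity which is $\leq r_{2|A} + n_1 - r_{1,2|A} \leq r_{2|A} - n_2$ using the hypothesis $n_1+n_2 \leq r_{1,2|A}$) from anywhere in $\langle B_1\rangle$ independent modulo $\langle A\rangle$. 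Then after adding $B_1'$, the surviving rank of $B_2$ is $\rk(B_2\mid[A,B_1']) \geq r_{2|A} - (n_1 - \min(n_1, r_{1,2|A}-r_{2|A})) \geq r_{2|A} - (r_{2|A} - n_2) = n_2$, which is exactly what is needed to extract the $n_2$ columns of $B_2'$ so that $[A, B_1', B_2']$ has full column rank $a + n_1 + n_2$. The main obstacle, and the step I expect to require the most care, is precisely this bookkeeping on how many columns of $B_1'$ are forced to "collide" with $\langle B_2\rangle$ and verifying that the hypothesis $n_1 + n_2 \leq r_{1,2|A}$ gives exactly the slack needed; all three inequalities in the hypothesis are used, and the argument is essentially a two-step application of the elementary one-matrix version of the claim, combined with the submodularity/dimension identities for ranks of column spans already recorded in \eqref{eq:defconrank}.
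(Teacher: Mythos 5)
Your argument is correct, but it proceeds in the opposite direction from the paper's proof. The paper starts from maximal choices---submatrices of $B_1,B_2$ of sizes $r_{1|A}, r_{2|A}$, each individually independent over $A$ by Steinitz exchange---and then prunes redundant columns one at a time, using a null-vector support argument to show that whenever the concatenated matrix is rank deficient a column can be dropped from either oversized block without changing the rank, continuing until the target sizes $n_1,n_2$ are reached. You instead build up: you first choose $B_1'$ with a priority rule, taking $k_1=\min(n_1,\,r_{1,2|A}-r_{2|A})$ columns of $B_1$ that are independent modulo $\langle[A,B_2]\rangle$ and the remaining $n_1-k_1$ columns so as to keep $[A,B_1']$ at full column rank, and then verify $\rk(B_2\mid[A,B_1'])\geq r_{2|A}-(n_1-k_1)\geq n_2$, after which the single-matrix fact finishes the job; the counting is right and each of the three hypotheses is used exactly where you say. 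The paper's pruning treats $B_1$ and $B_2$ symmetrically and needs no case analysis or priority bookkeeping, while your two-stage construction makes explicit how the slack $n_1+n_2\leq r_{1,2|A}$ is spent and reduces everything to the elementary one-matrix statement. Three presentational caveats: the \emph{remaining} $n_1-k_1$ columns of $B_1'$ must be chosen independent modulo $\langle[A,C_1]\rangle$ (i.e., preserving your stated full-rank invariant), not merely modulo $\langle A\rangle$, which is possible since $\rk(B_1\mid[A,C_1])=r_{1|A}-k_1\geq n_1-k_1$; your intermediate inequality $n_1-k_1\leq r_{2|A}+n_1-r_{1,2|A}$ fails when $n_1<r_{1,2|A}-r_{2|A}$ (the right-hand side can be negative), though the bound you actually need, $n_1-k_1\leq r_{2|A}-n_2$, holds trivially there because $n_1-k_1=0$; and since the paper derives Corollary \ref{cor:1_complement} from Lemma \ref{lem:2_complement} itself, you should justify the one-matrix step directly via Steinitz exchange to avoid any appearance of circularity.
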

\begin{proof}
Consider first the case that $n_1+n_2=r_{1,2|A}$. By Steinitz Exchange lemma there exist submatrices $B_1^{(r_{1|A})}, B_2^{(r_{2|A})}$, comprised of $r_{1|A}, r_{2|A}$ columns of $B_1, B_2$,  respectively, such that $[A, B_1^{(r_{1|A})}],$ $[A,B_2^{(r_{2|A})}]$ have full column ranks (the superscripts within the parantheses indicate the number of columns). Now, we claim that if $Y^{(a+r_{1|A}+r_{2|A})}=[A, B_1^{(r_{1|A})}, B_2^{(r_{2|A})}]$ does not have full column rank, i.e., $a+r_{1|A}+r_{2|A}>\rk(Y^{(a+r_{1|A}+r_{2|A})})=a+r_{1,2|A}$,  then it is always possible to drop a column of  $B_1^{(r_{1|A})}$ to yield $Y^{(a+r_{1|A}+r_{2|A}-1)}=[A, B_1^{(r_{1|A}-1)}, B_2^{(r_{2|A})}]$ which has one less column but the same column rank as $Y^{(a+r_{1|A}+r_{2|A})}$. The claim is proved as follows. Since $Y^{(a+r_{1|A}+r_{2|A})}$ does not have full column rank, there exists a non-zero column vector $Z$, such that $Y^{(a+r_{1|A}+r_{2|A})}Z=0_{d\times 1}$. This non-zero  vector $Z$ must have more than one non-zero element (because $Y^{(a+r_{1|A}+r_{2|A})}$ has non-zero columns), and at least one of its non-zero elements must be in a row-index that maps to one of the columns of $B_1^{(r_{1|A})}$ (because $[A, B_2^{(r_{2|A})}]$ has full column rank). This column of $B_1^{(r_{1|A})}$ can be dropped because it is spanned by the remaining columns of $Y^{(a+r_{1|A}+r_{2|A})}$ that are selected by the support of $Z$, so that $Y^{(a+r_{1|A}+r_{2|A}-1)}$ has the same rank as $Y^{(a+r_{1|A}+r_{2|A})}$. The same claim holds for $B_2^{(r_{2|A})}$ as well. Repeating this argument we can drop columns of $B_1^{(r_{1|A})}, B_2^{(r_{2|A})}$, one-by-one, in any order we wish, until we meet the target values $n_1, n_2$ at which point the resulting matrix $[A, B_1', B_2']$ has full column rank, equal to $a+r_{1,2|A}$. Finally, if $n_1+n_2<r_{1,2|A}$, then we continue the process for an additional $r_{1,2|A}-(n_1+n_2)$ steps, but each additional column that is dropped now reduces both the rank and the number of columns by $1$, until $B_1', B_2'$ are left with only $n_1,n_2$ columns, respectively, and $\rk(Y^{(a+r_{1|A}+r_{2|A})})=a+r_{1,2|A}-(r_{1,2|A}-(n_1+n_2))=a+n_1+n_2$.\end{proof}

Let us also note the following direct corollary of Lemma \ref{lem:2_complement} which will be used multiple times in our construction of the coding scheme.
\begin{corollary} \label{cor:1_complement}
	Let $A\in \mathbb{F}_q^{d\times a}$ and $B\in \mathbb{F}_q^{d\times b}$ be arbitrary matrices with full column rank (bases), i.e., $\rk(A)=a, \rk(B)=b$.  Denote $\rk(B\mid A) = r$. Then for any non-negative integer $n$ such that $n\leq r$, there exists a submatrix of $B$, namely $B'\in \mathbb{F}_q^{d\times n}$ such that $[A,B']$ has full column rank $a+n$.
\end{corollary}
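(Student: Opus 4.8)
The plan is to derive Corollary \ref{cor:1_complement} as the degenerate case of Lemma \ref{lem:2_complement} in which the second matrix is empty. Concretely, I would apply Lemma \ref{lem:2_complement} with $B_1 = B$, with $B_2 \in \mathbb{F}_q^{d\times 0}$ the matrix having no columns (which trivially has full column rank $b_2 = 0$), and with the target column counts $n_1 = n$ and $n_2 = 0$.

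First I would verify the hypotheses of Lemma \ref{lem:2_complement}: since appending zero columns changes neither the column span nor any rank, we have $r_{1|A} = \rk(B_1\mid A) = \rk(B\mid A) = r$, $r_{2|A} = 0$, and $r_{1,2|A} = \rk([B_1,B_2]\mid A) = \rk(B\mid A) = r$. The required inequalities $n_1 \le r_{1|A}$, $n_2 \le r_{2|A}$, $n_1 + n_2 \le r_{1,2|A}$ then read $n \le r$, $0 \le 0$, $n \le r$, all of which hold by the standing assumption $n \le r$. Lemma \ref{lem:2_complement} therefore supplies submatrices $B_1' \in \mathbb{F}_q^{d\times n}$ of $B$ and $B_2' \in \mathbb{F}_q^{d\times 0}$ of $B_2$ such that $[A, B_1', B_2']$ has full column rank $a + n + 0 = a + n$. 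Since $B_2'$ is empty, $[A, B_1', B_2'] = [A, B_1']$, so taking $B' = B_1'$ finishes the argument.

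If one prefers to sidestep the empty-matrix instantiation, an equally short self-contained route is available: by the Steinitz exchange lemma there is a submatrix $B^{(r)}$ of $B$, with $r$ columns, such that $[A, B^{(r)}]$ has full column rank $a + r$; deleting any $r - n$ of the columns of $B^{(r)}$ leaves a submatrix $B'$ with $n$ columns, and $[A, B']$ still has full column rank $a + n$ because passing to a subset of columns of a full-column-rank matrix preserves linear independence of those columns. There is no genuine obstacle in either route; the only point that merits a line of care is confirming that the degenerate case (second matrix with no columns, equivalently the trivial instance of the exchange lemma) is legitimately covered by the cited statements.
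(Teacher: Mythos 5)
Your proposal is correct and matches the paper's own proof, which likewise obtains the corollary by instantiating Lemma \ref{lem:2_complement} with $B_1=B$ and $b_2=0$ (so $n_2=0$). The added verification of the degenerate hypotheses and the alternative direct Steinitz-exchange route are fine but not needed beyond what the paper does.
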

\begin{proof}
	Corollary \ref{cor:1_complement} is implied by Lemma \ref{lem:2_complement}, by mapping $A$, $B$ here to $A$, $B_1$ in Lemma \ref{lem:2_complement}, respectively, and setting $b_2$ = 0. 
\end{proof}

\subsection{Construction of the Optimal Broadcast Scheme} \label{sec:scheme_construction}
The construction of the optimal broadcast information follows the formulation of Theorem \ref{thm:LP} and the depiction in Figure \ref{fig:venn2}. At a high level, the goal is to construct a scheme that broadcasts $\lambda_{123}$ dimensions that are simultaneously useful to all $3$ users ($3$ birds, $1$ stone), $\lambda_{ij}$ dimensions that are simultaneously useful to Users $i,j$ ($2$ birds, $1$ stone) for $(i,j)\in\{(1,2),(1,3),(2,3)\}$, and $\lambda$ dimensions that are of the type ($3$ birds, $2$ stones), i.e., where transmission of $2$ dimensions collectively satisfies $1$ demand dimension for every user. For this construction, let us first consider non-negative integers $\lambda_{123}, \lambda_{12}, \lambda_{13}, \lambda_{23}, \lambda$ that satisfy the constraints \eqref{eq:constraintfirst}-\eqref{eq:constraintlast} specified in Theorem \ref{thm:LP}. Generalization of $\lambda_{123}, \lambda_{12}, \lambda_{13}, \lambda_{23}, \lambda$ to  rationals is handled in Section \ref{sec:matex} and \ref{sec:achievablef}.

Let us start with the ($3$ birds, $1$ stone) component of the construction, and for now let us focus on User $1$. Some adjustments will be necessary eventually to make the scheme work for all $3$ users. We wish to broadcast $\lambda_{123}$ dimensions for this ($3$ birds, $1$ stone) component of our scheme, but it remains to determine the actual information to be transmitted. For this, let us recall Corollary \ref{cor:1_complement}, which guarantees that there exists a submatrix of ${\bf U}_{123}$, namely ${\bf U}_{123}^{(\lambda_{123})}\in\mathbb{F}_{q^z}^{d\times \lambda_{123}}$,  such that the following matrix has full column rank,
\begin{align}
\rk([{\bf V}_1',{\bf U}_{123}^{(\lambda_{123})}])=m_1'+\lambda_{123}.
\end{align}
Broadcasting ${\bf X}^T{\bf U}_{123}^{(\lambda_{123})}$ would help User $1$ acquire $\lambda_{123}$ desired dimensions based on his side-information ${\bf X}^T\bV_1'$.

As a cautionary note, let us point out that this particular ${\bf U}_{123}^{(\lambda_{123})}$ which is useful for User $1$ may not be useful for User $2$ or User $3$, i.e., $[{\bf V}_k',{\bf U}_{123}^{(\lambda_{123})}]$ may not have full column rank for $k=2,3$. One can similarly find submatrices of ${\bf U}_{123}$ of size (number of columns) $\lambda_{123}$ that are useful for User $2$, or $3$ individually, but in general these will be different matrices. In the end the challenge will be to find the \emph{same} matrix that is useful for all three users. For now we ignore this challenge and proceed with only User $1$ as our focus.

Next, consider the ($2$ birds, $1$ stone) components, specifically let us find $\lambda_{12}$  dimensions within $\langle\bU_{12}\rangle$, and another $\lambda_{13}$ dimensions within $\langle \bU_{13}\rangle$ that will be useful to User $1$, conditioned on the user's side-information $\bV_1'$. Letting $A = [{\bf V}_1',{\bf U}_{123}^{(\lambda_{123})}]$, $B_1 = {\bf U}_{12}$ and $B_2 = {\bf U}_{13}$ in Lemma \ref{lem:2_complement}, we have $a = m_1'+\lambda_{123}$, $r_{1|A}=\rk(\bU_{12}\mid \bV_1')-\lambda_{123}$, $r_{2|A}= \rk(\bU_{13}\mid \bV_1')-\lambda_{123}$, and $r_{1,2|A} = \rk([\bU_{12},\bU_{13}]\mid \bV_1')-\lambda_{123}$. Then according to Lemma \ref{lem:2_complement}, there exists a submatrix of ${\bf U}_{12}$, namely, ${\bf U}_{12}^{(\lambda_{12})} \in {\Bbb F}_{q^z}^{d\times \lambda_{12}}$, and a submatrix of ${\bf U}_{13}$, namely, ${\bf U}_{13}^{(\lambda_{13})} \in {\Bbb F}_{q^z}^{d\times \lambda_{13}}$,
such that the following matrix has full column rank,
\begin{align}
	\rk([{{\bf V}_1'}, {\bf U}_{123}^{(\lambda_{123})}, {\bf U}_{12}^{(\lambda_{12})}, {\bf U}_{13}^{(\lambda_{13})}])&=m_1'+\lambda_{123}+\lambda_{12}+\lambda_{13}.
\end{align}
Once again, note that these choices may not work for Users $2, 3$, so that challenge remains to be overcome later.

Next, consider the ($2$ stones, $3$ birds) component of the scheme. Keeping our focus on User $1$, let us find $\lambda$ dimensions of broadcast information from the subspace $\langle\bU_{1(2,3)} \rangle$ that will be useful for User $1$. Since we only consider parameters that satisfy the conditions in Theorem \ref{thm:LP}, which include in particular \eqref{eq:constraintlast}, it follows that $\lambda\leq \rk(\bU_{1(2,3)}\mid \bV_1')-\lambda_{12}-\lambda_{13}-\lambda_{123}$ by definition. Letting $A = [{{\bf V}_1'}, {\bf U}_{123}^{(\lambda_{123})}, {\bf U}_{12}^{(\lambda_{12})}, {\bf U}_{13}^{(\lambda_{13})}]$, $B = {\bf U}_{1(2,3)}$ in Corollary \ref{cor:1_complement}, we have $a=m_1'+\lambda_{123}+\lambda_{12}+\lambda_{13}$ and $r= \rk(\bU_{1(2,3)}\mid \bV_1')-\lambda_{12}-\lambda_{13}-\lambda_{123}\geq \lambda$. Then Corollary \ref{cor:1_complement} implies that there exists a submatrix of ${\bf U}_{1(2,3)}$, namely, ${\bf U}_{1(2,3)}^{(\lambda)} \in {\Bbb F}_{q^z}^{d\times \lambda}$ such that the following matrix has full column rank,
\begin{align}
	&\rk([{{\bf V}_1'}, {\bf U}_{123}^{(\lambda_{123})}, {\bf U}_{12}^{(\lambda_{12})}, {\bf U}_{13}^{(\lambda_{13})}, {\bf U}_{1(2,3)}^{(\lambda)}]) =m_1'+\lambda_{123}+\lambda_{12}+\lambda_{13}+\lambda.
\end{align}
Next, by letting $A$ be the above matrix and $B={\bf U}_1$ in Corollary \ref{cor:1_complement}, we have $a = m_1'+\lambda_{123}+\lambda_{12}+\lambda_{13}+\lambda$, $r = \rk(\bU_1\mid \bV_1') - (\lambda_{123}+\lambda_{12}+\lambda_{13}+\lambda) = m_1-(\lambda_{123}+\lambda_{12}+\lambda_{13}+\lambda) \triangleq t_1$. Then by Corollary \ref{cor:1_complement}, there exists a submatrix of ${\bf U}_1$, namely, ${\bf U}_1^{(t_1)} \in {\Bbb F}_{q^z}^{d\times t_1}$ such that the following matrix has full column rank.
\begin{align} \label{eq:mat_user1}
	\rk([{{\bf V}_1'}, {\bf U}_{123}^{(\lambda_{123})}, {\bf U}_{12}^{(\lambda_{12})}, {\bf U}_{13}^{(\lambda_{13})}, {\bf U}_{1(2,3)}^{(\lambda)}, {\bf U}_1^{(t_1)}])&=m_1+m_1',
\end{align}
which implies that it is a basis of $\langle {\bf U}_1 \rangle$ since each column of the matrix is in $\langle {\bf U}_1 \rangle$.

Finally, in Corollary \ref{cor:1_complement} let $A$ be the matrix in \eqref{eq:mat_user1} and $B = {\bf I}^{d\times d}$ be the $d\times d$ identity matrix. We have $a = m_1+m_1'$ and $r = d-m_1-m_1'$. Then by Corollary \ref{cor:1_complement}, there exists a submatrix of ${\bf I}^{d\times d}$, namely, ${\bf Z}_1 \in {\Bbb F}_{q^z}^{d\times (d-m_1-m_1')}$ such that the following $d\times d$ matrix has full rank.
\begin{align}\label{eq:det_user1local}
	\rk([{{\bf V}_1'}, {\bf U}_{123}^{(\lambda_{123})}, {\bf U}_{12}^{(\lambda_{12})}, {\bf U}_{13}^{(\lambda_{13})}, {\bf U}_{1(2,3)}^{(\lambda)}, {\bf U}_1^{(t_1)}, {\bf Z}_1])&=d.
\end{align}
In particular, the determinant of the matrix is non-zero.

The following step allows a mixing of information, leading to a random-coding argument that will be important to reconcile the users' different perspectives. So consider the following determinant, which is a polynomial in the variables corresponding to the elements of the matrices ${\bf N}_{123},{\bf N}_{12},{\bf N}_{13},{\bf M}_{12},{\bf M}_{13},{\bf M}$, while the remaining matrices are fixed.
\begin{align}\label{eq:det_user1}
	&P_1 = \det \big([ \overbrace{{\bf V}_1'}^{m_1'}, \overbrace{{\bf U}_{123}{\bf N}_{123}}^{\lambda_{123}}, \overbrace{{\bf U}_{12}{\bf N}_{12}}^{\lambda_{12}}, \overbrace{{\bf U}_{13}{\bf N}_{13}}^{\lambda_{13}}, \overbrace{{\bf U}_{12}{\bf M}_{12}+{\bf U}_{13}{\bf M}_{13}+{\bf B}_{1(2,3)} {\bf M}}^{\lambda},{\bf U}_1^{(t_1)}, {\bf Z}_1]  \big)
\end{align}
The sizes of the variable matrices are specified below.
\begin{align}\label{eq:sizespec}
	&{\bf N}_{123}: \rk({\bf U}_{123}) \times \lambda_{123}, && {\bf N}_{12}: \rk({\bf U}_{12}) \times \lambda_{12} , && {\bf N}_{13}: \rk({\bf U}_{13}) \times \lambda_{13}, \\
	&{\bf M}_{12}: \rk({\bf U}_{12}) \times \lambda, && {\bf M}_{13}: \rk({\bf U}_{13}) \times \lambda,&&{\bf M}: \rk({\bf B}_{1(2,3)}) \times \lambda.
\end{align}
We claim that $P_1$ is not a zero polynomial. This is because we can assign values to the variables such that the matrix in \eqref{eq:det_user1} becomes identical to the constant matrix in \eqref{eq:det_user1local}, which has non-zero determinant. Note that by Lemma \ref{lemma:decomposition}, {\it (P5)} and {\it(P8)}, $\langle\bU_{1(2,3)}\rangle=\langle[\bB_{123}, \bB_{12} , \bB_{13} , \bB_{1(2,3)} ]\rangle=\langle[\bU_{12},\bU_{13}, \bB_{1(2,3)} ]\rangle$, therefore,  ${\bf U}_{1(2,3)}^{(\lambda)}={\bf U}_{12}{\bf M}_{12}+{\bf U}_{13}{\bf M}_{13}+{\bB}_{1(2,3)} {\bf M}$ for some realization of ${\bf M}_{12}, {\bf M}_{13}, {\bf M}$.  Since there exists a non-zero evaluation of $P_1$ it cannot be the zero polynomial.

So far our discussion focused on User $1$. Proceeding similarly for Users $2$ and $3$ we arrive at corresponding polynomials $P_2, P_3$ as shown below, 
\begin{align} \label{eq:det_user2}
	&P_2 = \det ([ \overbrace{{\bf V}_2'}^{m_2'}, \overbrace{{\bf U}_{123}{\bf N}_{123}}^{\lambda_{123}}, \overbrace{{\bf U}_{12}{\bf N}_{12}}^{\lambda_{12}}, \overbrace{{\bf U}_{23}{\bf N}_{23}}^{\lambda_{23}}, \overbrace{-{\bf U}_{12}{\bf M}_{12}+{\bf U}_{23}{\bf M}_{23}+{\bf B}_{2(1,3)} {\bf M}}^{\lambda},{\bf U}_2^{(t_2)}, {\bf Z}_2]  )\\
	&P_3 = \det ([ \overbrace{{\bf V}_3'}^{m_3'}, \overbrace{{\bf U}_{123}{\bf N}_{123}}^{\lambda_{123}}, \overbrace{{\bf U}_{13}{\bf N}_{13}}^{\lambda_{13}}, \overbrace{{\bf U}_{23}{\bf N}_{23}}^{\lambda_{23}}, \overbrace{{\bf U}_{13}{\bf M}_{13}+{\bf U}_{23}{\bf M}_{23}+{\bf B}_{3(1,2)} {\bf M}}^{\lambda},{\bf U}_3^{(t_3)}, {\bf Z}_3]  )
\end{align}
that are similarly shown to be non-zero polynomials, in the variables  corresponding to the elements of the matrices ${\bf N}_{123}$, ${\bf N}_{12}$, ${\bf N}_{13}$, ${\bf N}_{23}$, ${\bf M}_{12}$, ${\bf M}_{13}$, ${\bf M}_{23}$, ${\bf M}$, with the following  remaining specifications in addition to those in \eqref{eq:sizespec}.
\begin{align}
&{\bf M}_{23}: \rk({\bf U}_{23}) \times \lambda, && {\bf N}_{23}: \rk(\bU_{23})\times\lambda_{23},
\end{align}
and
\begin{align}
{\bf Z}_2 &\in {\Bbb F}_{q^z}^{d\times (d-m_2-m_2')}, \\
{\bf Z}_3 &\in {\Bbb F}_{q^z}^{d\times (d-m_3-m_3')},\\
t_2&\triangleq m_2-(\lambda_{123}+\lambda_{12}+\lambda_{23}+\lambda), \\
t_3&\triangleq m_3-(\lambda_{123}+\lambda_{13}+\lambda_{23}+\lambda).
\end{align}

Note that the minus sign before ${\bf U}_{12}{\bf M}_{12}$ in \eqref{eq:det_user2} still allows the entries of $-{\bf M}_{12}$  to be any element in ${\bf F}_{q^z}$, and thus we can still evaluate the determinants individually to non-zero by choosing appropriate elements in ${\bf F}_{q^z}$. Now since $P_1$, $P_2$ and $P_3$ are non-zero polynomials, their product $P \triangleq P_1P_2P_3$ is also a non-zero polynomial in the variables  corresponding to the elements of the matrices ${\bf N}_{123}$, ${\bf N}_{12}$, ${\bf N}_{13}$, ${\bf N}_{23}$, ${\bf M}_{12}$, ${\bf M}_{13}$, ${\bf M}_{23}$, and ${\bf M}$. Furthermore, the polynomial $P$ has a degree $D$ loosely (the loose bound suffices for our purpose) bounded above as,
\begin{align}
D\leq 3d.
\end{align}
By Schwartz-Zippel Lemma, if the elements of ${\bf N}_{123},{\bf N}_{12},{\bf N}_{13},{\bf N}_{23},{\bf M}_{12},{\bf M}_{13},{\bf M}_{23}, {\bf M}$ are chosen i.i.d uniformly from $\mathbb{F}_{q^z}$, then the probability of $P$ evaluating to $0$ is not more than $\frac{D}{q^z} \leq \frac{3d}{q^z}$. Thus, by choosing $z> \log_q(3d)$, we ensure that there exist such ${\bf N}_{123},{\bf N}_{12},{\bf N}_{13},{\bf N}_{23},{\bf M}_{12},{\bf M}_{13},{\bf M}_{23}, {\bf M}$ that produce a non-zero evaluation of $P$, which implies that $P_1$, $P_2$ and $P_3$ are evaluated to non-zero simultaneously. Recall that we previously found three constructions, by identifying submatrices of subspace matrices, and each such construction could only be guaranteed to work for one user. The  formulation based on ${\bf N}_{123},{\bf N}_{12},{\bf N}_{13},{\bf N}_{23},{\bf M}_{12},{\bf M}_{13},{\bf M}_{23}, {\bf M}$ represents essentially a generic solution for each user. Whereas the original solutions comprised of specific submatrices may not be compatible, the generic solutions turn out to be compatible, as evident in the argument that $P_1, P_2, P_3$ are simultaneously non-zero for appropriate choices of the variables. This is essentially a random coding argument, because it shows the existence of a good code among randomly chosen possibilities.

With any such choice, we are able to construct the broadcast symbol as follows.
\begin{align}
	&{\bf S} = {\bf X}^T [
		{\bf U}_{123}{\bf N}_{123}, {\bf U}_{12}{\bf N}_{12}, {\bf U}_{13}{\bf N}_{13}, {\bf U}_{23}{\bf N}_{23}, {\bf U}_{12}{\bf M}_{12}+{\bf U}_{13}{\bf M}_{13}+{\bf B}_{1(2,3)} {\bf M},...\\
		&~~~~~~~~~~~~~~~~-{\bf U}_{12}{\bf M}_{12}+{\bf U}_{23}{\bf M}_{23} + {\bf B}_{2(1,3)} {\bf M},{\bf U}_1^{(t_1)}, {\bf U}_2^{(t_2)}, {\bf U}_3^{(t_3)}].
\end{align}
With ${\bf S}$, User $1$ is able to obtain (using its side-information)
\begin{align}
	& {\bf X}^T [{\bf V}_1', {\bf U}_{123}{\bf N}_{123}, {\bf U}_{12}{\bf N}_{12},{\bf U}_{13}{\bf N}_{13},  {\bf U}_{12}{\bf M}_{12}+{\bf U}_{13}{\bf M}_{13}+{\bf B}_{1(2,3)} {\bf M}, {\bf U}_1^{(t_1)}],\label{eq:comma}
\end{align}
and thus compute ${\bf X}^T{\bf U}_1$, since the columns of the matrix to the right of ${\bf X}^T$ form a basis of $\langle {\bf U}_1 \rangle$, guaranteed by the fact that $P_1$ has a non-zero evaluation. Similarly, User $2$ is able to obtain (with its side-information) 
\begin{align}
	&{\bf X}^T [{\bf V}_2', {\bf U}_{123}{\bf N}_{123}, {\bf U}_{12}{\bf N}_{12},{\bf U}_{23}{\bf N}_{23},  -{\bf U}_{12}{\bf M}_{12}+{\bf U}_{23}{\bf M}_{23}+{\bf B}_{2(1,3)} {\bf M}, {\bf U}_2^{(t_2)}],
\end{align}
and thus compute ${\bf X}^T{\bf U}_2$, since the columns of the matrix on the right of ${\bf X}^T$ form a basis of  $\langle {\bf U}_2 \rangle$, guaranteed by the fact that $P_2$ has a non-zero evaluation. 

User $3$ first computes
\begin{align}
&{\bf X}^T({\bf U}_{12}{\bf M}_{12}+{\bf U}_{13}{\bf M}_{13}+{\bf B}_{1(2,3)} {\bf M}) \\
&+{\bf X}^T(-{\bf U}_{12}{\bf M}_{12}+{\bf U}_{23}{\bf M}_{23}+{\bf B}_{2(1,3)} {\bf M})\\
&= {\bf X}^T ({\bf U}_{13}{\bf M}_{13}+{\bf U}_{23}{\bf M}_{23}+ {\bf B}_{3(1,2)} {\bf M})
\end{align}
where we used (P20) from Lemma \ref{lemma:decomposition}, i.e., ${\bf B}_{1(2,3)} +{\bf B}_{2(1,3)}  = {\bf B}_{3(1,2)} $. Using its side-information, User $3$ is then able to obtain,
\begin{align}
	&{\bf X}^T [{\bf V}_3', {\bf U}_{123}{\bf N}_{123}, {\bf U}_{13}{\bf N}_{13},{\bf U}_{23}{\bf N}_{23},  {\bf U}_{13}{\bf M}_{13}+{\bf U}_{23}{\bf M}_{23}+{\bf B}_{3(1,2)} {\bf M}, {\bf U}_3^{(t_3)}].
\end{align}
Thus, it can compute ${\bf X}^T{\bf U}_3$, since the matrix on the right of ${\bf X}^T$ is a basis of $\langle {\bf U}_3 \rangle$, guaranteed by the fact that $P_3$  evaluates to a non-zero value.

The cost of this broadcast ${\bf S}$, as noted in Theorem \ref{thm:LP}, is found as,
\begin{align}
	\Delta &= N/L \\
	& = N/z \\
	& = \lambda_{123}+\lambda_{12}+\lambda_{13}+\lambda_{23}+2\lambda+t_1+t_2+t_3\\
	& = m_1+m_2+m_3-2\lambda_{123}-\lambda_{12}-\lambda_{13}-\lambda_{23}-\lambda \label{eq:achm123}\\
	& = \rk(\bV_1|\bV_1')+\rk(\bV_2|\bV_2')+\rk(\bV_3|\bV_3')-2\lambda_{123}-\lambda_{12}-\lambda_{13}-\lambda_{23}-\lambda\label{eq:ach_v123}\\
	& \triangleq f(\lambda_{123},\lambda_{12},\lambda_{13},\lambda_{23},\lambda).\label{eq:achf}
\end{align}
This implies that $ \Delta^* \leq f(\lambda_{123},\lambda_{12},\lambda_{13},\lambda_{23},\lambda)$ if $\lambda_{123},\lambda_{12},\lambda_{13}$, $\lambda_{23}$ and $\lambda$ are non-negative integers subject to the constraints specified in Theorem \ref{thm:LP}. Next let us show that the arguments extend to rational  $\lambda_\bullet$ by a simple matrix extension.

\subsection{Matrix Extension}\label{sec:matex}
Technically, the choice of $z>1$ that  enables field extensions in the achievable scheme, already amounts to vector coding, because it requires joint coding of $L=z$ symbols. However, after the field extension, the solution presented above reduces to a scalar coding solution over the extended field $\mathbb{F}_{q^z}$. This formulation only allows integer values of $\lambda_\bullet$ parameters. However, it is quite straightforward to extend the scheme to all rational values of $\lambda_\bullet$ parameters (subject to the constraints specified in Theorem \ref{thm:LP}) by a typical vector coding extension, labeled here as a Matrix Extension to avoid confusion with field extensions that also require $L>1$. This is described as follows. Recall that we are allowed to choose any $L\in \mathbb{N}$ in the coding schemes, by letting $L = L'z$ (meaning that the computations are in $\mathbb{F}_{q^z}$ and we jointly code for $L'$ such computations), the ranks of all subspaces scale by $L'$  as the data dimension increases by a factor of $L'$. Essentially, this amounts to treating successive instances of the data vector as new data dimensions. For example, consider the $m=1$ dimensional computation of $A+B$ over $d=2$ dimensional data $(A,B)$, say over $\mathbb{F}_{q^z}$. Considering $L'=2$ instances, the data becomes $({\bf A}, {\bf B}) = ((A(1),A(2)), (B(1),B(2))$, and the desired computation is ${\bf A}+{\bf B}$, which can also be interpreted as $m_{\mbox{\tiny new}}=2$ dimensional computations $(A+C, B+D)$ over $d_{\mbox{\tiny new}}=2d=4$ dimensional data $(A, B, C,D)$ in $\mathbb{F}_{q^z}$, by mapping $((A(1),A(2)), (B(1),B(2))$ to $(A,B,C,D)$. A bit more formally, by  considering $L'$ data instances as one instance of $L'd$ dimensional data (both in ${\Bbb F}_{q^z}$), User $k\in[1:3]$ has side-information ${\bf X}^T{\bf V}_k'$, which is equivalent to $\mbox{vec}^T({\bf X}){\bf I}^{L'\times L'}\otimes {\bf V}_k'$.  User $k$ wants to compute $\mbox{vec}^T({\bf X}){\bf I}^{L'\times L'}\otimes {\bf V}_k$. The problem is then equivalent to that with data ${\bf X} \in \mathbb{F}_{q^z}^{L'd\times 1}$, with coefficient matrices now changed to ${\bf I}\otimes {\bf V}_k', {\bf I}\otimes {\bf V}_k, k=[1:3]$. The signal spaces ${\bf U}_k$ are also changed to ${\bf I}\otimes {\bf U}_k$.  
Note that this is essentially different from the field size extension presented in Section \ref{subsec:field_size}, where the dimensions of the coefficient matrices are not changed after the extension, only the field size is changed. We refer to this as the \emph{matrix extension}, since the dimensions (sizes) of the coefficient matrices  scale by a factor of $L'$ (but the field size remains unchanged). The ranks of ${\bf U}_k$ and ${\bf V}_k'$ also scale by $L'$, as do the ranks of all subspaces considered in \eqref{eq:constraintfirst}-\eqref{eq:constraintlast}. Thus, the RHS of all constraints in \eqref{eq:constraintfirst}-\eqref{eq:constraintlast} scale by $L'$, implying a similar scaling of the $\lambda_\bullet$ parameters. Thus, all rational values of $\lambda_\bullet$ parameters can be transformed into integer values by considering a  matrix extension by a factor $L'$ where $L'$ is the common denominator of the rational values.

\subsection{Completing the Proof of Achievability}\label{sec:achievablef}
At this point we have the bound that $\Delta^*\leq f(\lambda_{123}, \lambda_{12}, \lambda_{13}, \lambda_{23}, \lambda)$ if $\lambda_{123},\lambda_{12},\lambda_{13},\lambda_{23},\lambda$ are non-negative rational numbers subject to the constraints specified in \eqref{eq:constraintfirst}-\eqref{eq:constraintlast}. The final step of the achievability proof is to recall \cite{keller2017applied,Siong_proof_rational} that 
for any linear programming problem, say $\max~~ {\bf c}^T{\bf x},$ s.t. ${\bf A}{\bf x}\leq {\bf b}$, ${\bf x}\geq {\bf 0}$,  if all the elements of  ${\bf A}, {\bf b}, {\bf c}$ are rational, and the optimal exists, then there exists an optimizing ${\bf x}$ whose elements are also rational, and so is the optimal value of the objective function. Note that in the linear program in Theorem \ref{thm:LP} all coefficients are indeed rational, in fact the coefficients of  $\lambda_\bullet$ parameters in the constraints and the objective are all either $0,1$ or $2$, and the constants on the RHS of the constraints \eqref{eq:constraintfirst}-\eqref{eq:constraintlast} are conditional-ranks, so they are integers as well, by definition. The feasible region is a rational polytope, so all vertices are rational, and one of the vertices must be optimal for a linear program over a rational polytope. Therefore, there exist non-negative rational values $\lambda_{123}^*,\lambda_{12}^*,\lambda_{13}^*,\lambda_{23}^*,\lambda^*$ that satisfy \eqref{eq:constraintfirst}-\eqref{eq:constraintlast}, for which we we have $f(\lambda_{123}^*,\lambda_{12}^*,\lambda_{13}^*,\lambda_{23}^*,\lambda^*)=F^*$. This gives us the desired bound, $\Delta^*\leq F^*$.
$\hfill\square$

\section{Matching Achievability with Converse: $F^* \leq \max\{\Delta_1, \Delta_2\}$} \label{sec:linear_prog}
The converse proof in Section \ref{sec: Converse} established the lower bound $\Delta^*\geq\max\{\Delta_1,\Delta_2\}$, whereas the achievability proof in Section \ref{sec: Achievability} established the upper bound $\Delta^*\leq F^*$. In this section we show that the bounds are tight. To do so, we will prove that $F^* \leq \max\{\Delta_1, \Delta_2\}$.

Recall that, subject to the constraints \eqref{eq:constraintfirst}-\eqref{eq:constraintlast}, the linear program in Theorem \ref{thm:LP} finds
\begin{align}
	F^*&=\min_{\lambda_{123},\lambda_{12},\lambda_{13},\lambda_{23}, \lambda\in\mathbb{R}_+}  m_1+m_2+m_3\notag \\
	&~~~~~~~~~~~~~~~~~~~~-2\lambda_{123}- \lambda_{12}-\lambda_{13}-\lambda_{23}-\lambda\\
	&=\min_{\lambda_{123},\lambda_{12},\lambda_{13},\lambda_{23}, \lambda\in\mathbb{R}_+}f(\lambda_{123},\lambda_{12},\lambda_{13},\lambda_{23}, \lambda).\label{eq:minF}
\end{align}
We will proceed with the proof  in two steps. First, in Subsection \ref{sec:manipulatedeltastar}, we  manipulate $\max\{\Delta_1,\Delta_2\}$ into an equivalent compact form. Then, in Subsection \ref{sec:solveLP} we show that  in all cases there exist feasible $(\lambda_{123},\lambda_{12},\lambda_{13},\lambda_{23}, \lambda)$ for which $f(\lambda_{123},\lambda_{12},\lambda_{13},\lambda_{23}, \lambda) \leq \max\{\Delta_1, \Delta_2\}$ and therefore by \eqref{eq:minF}, we have $F^* \leq \max\{\Delta_1, \Delta_2\}$.

\subsection{Equivalent Expression for $\Delta_1,\Delta_2$ with Compact Notation}\label{sec:manipulatedeltastar}
To avoid lengthy notation due to the repetitive use of conditional ranks, let us introduce the following compact forms.
\begin{align} \label{eq:def_c}
	 r_{123} &~\triangleq~ \min_{k\in [3]}\rk(\bU_{123}\mid \bV_k'), \notag \\
	 r_{12} &~\triangleq~ \min_{k\in\{1,2\}} \rk(\bU_{12}\mid\bV_k'), \notag \\
	 r_{13} &~\triangleq~ \min_{k\in\{1,3\}}\rk(\bU_{13}\mid\bV_k'), \notag \\
	 r_{23} &~\triangleq~ \min_{k\in\{2,3\}} \rk(\bU_{23}\mid\bV_k'), \notag\\
	 r_{12,13} &~\triangleq~ \rk([\bU_{12},\bU_{13}]\mid \bV_1'),\notag \\
	  r_{12,23} &~\triangleq~ \rk([\bU_{12},\bU_{23}]\mid \bV_2'), \notag \\
	 r_{13,23} &~\triangleq~ \rk([\bU_{13},\bU_{23}]\mid \bV_3'), \notag\\
	 r_{1(2,3)} &~\triangleq~ \rk(\bU_{1(2,3)}\mid \bV_1'),\notag \\
	 r_{2(1,3)} &~\triangleq~ \rk(\bU_{2(1,3)}\mid \bV_2'), \notag \\
	 r_{3(1,2)} &~\triangleq~ \rk(\bU_{3(1,2)}\mid \bV_3'). 
\end{align}
It follows that,
\begin{align} \label{eq:r_con}
	&r_{12,13} \geq \max\{r_{12},r_{13}\},\notag \\
	&r_{12,23} \geq \max\{r_{12},r_{23}\},\notag \\
	&r_{13,23} \geq \max\{r_{13},r_{23}\}.
\end{align}
Note that by these notations, the constraints \eqref{eq:constraintfirst}-\eqref{eq:constraintlast} for $\lambda_{\bullet}$ can be equivalently posed as
\begin{align}
	&\eqref{eq:constraintfirst} \iff && \lambda_{123} \leq r_{123}, \label{eq:constraint_123} \\
	&\eqref{eq:constraintsecond}\iff &&  \lambda_{ij}+\lambda_{123} \leq r_{ij}, && \forall (i,j) \in \{(1,2), (1,3), (2,3)\}\label{eq:constraint_12} \\
	& \eqref{eq:constraintthird}\iff&& \lambda_{ij} +\lambda_{ik}+\lambda_{123}  \leq r_{ij,ik}, && \forall (i,j,k) \in \{(1,2,3), (2,1,3), (3,1,2)\} \label{eq:constraint1213} \\
	&\eqref{eq:constraintlast}\iff && \lambda+\lambda_{ij}+\lambda_{ik}+\lambda_{123} \leq  r_{i(j,k)}, && \forall (i,j,k) \in \{(1,2,3), (2,1,3), (3,1,2)\} \label{eq:constraint_1(23)}
\end{align}
With these notations, we are able to express the $\Delta_1, \Delta_2$ values defined in Theorem \ref{thm:main} in the following equivalent forms.
For $\Delta_1^{ijk}$, we have
\begin{align}
	\Delta_1^{123} &= \rk({\bf V}_1|{\bf V}_1') + \rk({\bf V}_2|{\bf V}_2')+\rk({\bf V}_3|{\bf V}_3') - \rk({\bf U}_{12}|{\bf V}_2')  - \rk({\bf U}_{3(1,2)}|{\bf V}_3') \\
	& = \rk({\bf V}_1) + \rk({\bf V}_2)+\rk({\bf V}_3) - \rk({\bf U}_{12}|{\bf V}_2')  - \rk({\bf U}_{3(1,2)}|{\bf V}_3') \label{eq:Delta_1_full_rank} \\
	& = m_1+m_2+m_3- \rk({\bf U}_{12}|{\bf V}_2')  - r_{3(1,2)}
\end{align}
where \eqref{eq:Delta_1_full_rank} is due to \eqref{eq:full_rank_Uk},
and similarly
\begin{align}
	\Delta_1^{213} &= m_1+m_2+m_3- \rk({\bf U}_{12}|{\bf V}_1')  - r_{3(1,2)}
\end{align}
which implies that,
\begin{align}
	\max\{\Delta_1^{123},\Delta_1^{213}\}& = m_1+m_2+m_3-\min\{\rk({\bf U}_{12}|{\bf V}_1'),\rk({\bf U}_{12}|{\bf V}_2')\}- r_{3(1,2)}\\
	& = m_1+m_2+m_3-r_{12}- r_{3(1,2)}\\
	& \triangleq \delta_3
\end{align}
By taking the pairwise maximum of $\{\Delta_1^{132},\Delta_1^{312}\}$ and $\{\Delta_1^{231},\Delta_1^{321}\}$ respectively, we similarly obtain $\delta_{1}$ and $\delta_{2}$ as follows.
\begin{align}
	&\delta_1 \triangleq \max\{\Delta_1^{132},\Delta_1^{312}\} =  m_1+m_2+m_3-r_{23}-r_{1(2,3)},\\
	&\delta_2 \triangleq \max\{\Delta_1^{231},\Delta_1^{321}\} =  m_1+m_2+m_3-r_{13}-r_{2(1,3)}.
\end{align}
For $\Delta_2^{ijk}$, first note that $\Delta_2^{123} = \Delta_2^{132}$. Thus, we have,
\begin{align}
	&\max\{\Delta_2^{123}, \Delta_2^{132}\} = \Delta_2^{123} \notag \\
	& =  \rk({\bf V}_1\mid \bV_1') + \rk({\bf V}_2\mid \bV_2') + \rk({\bf V}_3\mid \bV_3') \notag\\
&\hspace{1cm}- \frac{1}{2} \Bigg(\min_{\ell\in [3]} \big(\rk({\bf U}_{123}\mid {\bf V}_\ell')\big) +\rk([{\bf U}_{12},{\bf U}_{13}]\mid {\bf V}_1') +\rk({\bf U}_{2(1,3)}\mid {\bf V}_2') + \rk({\bf U}_{3(1,2)}\mid {\bf V}_3') \Bigg) \\
& =  \rk({\bf V}_1) + \rk({\bf V}_2) + \rk({\bf V}_3) \notag\\
&\hspace{1cm}- \frac{1}{2} \Bigg(\min_{\ell\in [3]} \big(\rk({\bf U}_{123}\mid {\bf V}_\ell')\big) +\rk([{\bf U}_{12},{\bf U}_{13}]\mid {\bf V}_1') +\rk({\bf U}_{2(1,3)}\mid {\bf V}_2) + \rk({\bf U}_{3(1,2)}\mid {\bf V}_3') \Bigg) \label{eq:Delta_2_full_rank} \\
	&= m_1+m_2+m_3 - \frac{1}{2}\Big(r_{123}+r_{12,13}+r_{2(1,3)}+r_{3(1,2)} \Big) \\
	&\triangleq \delta_{23}
\end{align}
where \eqref{eq:Delta_2_full_rank} is due to \eqref{eq:full_rank_Uk}.
By taking the pairwise maximum of $\{\Delta_2^{213},\Delta_2^{231}\}$ and $\{\Delta_2^{312},\Delta_2^{321}\}$ respectively, we similarly obtain,
\begin{align}
	&\delta_{13} \triangleq \max\{\Delta_2^{213},\Delta_2^{231}\} = m_1+m_2+m_3-\frac{1}{2}\Big( r_{123}+r_{12,23}+r_{1(2,3)}+r_{3(1,2)} \Big),\\
	&\delta_{12} \triangleq \max\{\Delta_2^{312},\Delta_2^{321}\} = m_1+m_2+m_3-\frac{1}{2}\Big( r_{123}+r_{13,23}+r_{1(2,3)}+r_{2(1,3)} \Big).
\end{align}
Thus, we have, 
\begin{align} \label{eq:converse_f}
	  \max\{\Delta_1, \Delta_2\} = \max\{\delta_{1},\delta_{2},\delta_{3},\delta_{12},\delta_{13},\delta_{23}\}.
\end{align}
The next step is to prove that $F^* \leq \max\{\delta_{1},\delta_{2},\delta_{3},\delta_{12},\delta_{13},\delta_{23}\}$.

\subsection{Proving $F^* \leq \max\{\delta_{1},\delta_{2},\delta_{3},\delta_{12},\delta_{13},\delta_{23}\}$: Constrained Waterfilling}\label{sec:solveLP}

By definition, $F^*\leq f(\lambda_{123},\lambda_{12},\lambda_{13},\lambda_{23}, \lambda)$ for any $(\lambda_{123},\lambda_{12},\lambda_{13},\lambda_{23}, \lambda)$ that satisfies \eqref{eq:constraint_123}-\eqref{eq:constraint_1(23)}. Therefore, it suffices to show that
\begin{align}
	f(r_{123},\lambda_{12},\lambda_{13},\lambda_{23}, \lambda') \leq \max\{\delta_{1},\delta_{2},\delta_{3},\delta_{12},\delta_{13},\delta_{23}\},
\end{align}
where $\lambda' \triangleq \min\{ r_{1(2,3)}-\lambda_{12}-\lambda_{13},r_{2(1,3)}-\lambda_{12}-\lambda_{23},r_{3(1,2)}-\lambda_{13}-\lambda_{23} \} - r_{123}$. In other words, we fix $\lambda_{123}$ to $r_{123}$ and $\lambda$ to $\lambda'$. It can be easily verified that $\lambda_{123} = r_{123}, \lambda = \lambda'$ are in the feasible region specified by \eqref{eq:constraint_123}-\eqref{eq:constraint_1(23)}. As will be shown in the end, fixing $\lambda_{123} = r_{123}, \lambda = \lambda'$ will not hurt the optimality. It is also intuitive because $\lambda_{123}$ corresponds to the amount of transmission that has the highest efficiency (3 birds, 1 stone) so it should be set as large as possible to $r_{123}$. Then, $\lambda = \lambda'$ is also the largest possible we can set after $\lambda_{123}$ is fixed to $r_{123}$.

Setting $\lambda_{123}$ and $\lambda$ to these values (note that both values are non-negative), the objective simplifies to the minimization of,
\begin{align}
	&f = m_1+m_2+m_3 - 2r_{123} - \lambda_{12}-\lambda_{13}-\lambda_{23} \notag \\
	&~~~~ - \min\{ r_{1(2,3)}-\lambda_{12}-\lambda_{13},\notag \\
	&~~~~~~~~~~~~~~r_{2(1,3)}-\lambda_{12}-\lambda_{23},\notag \\
	&~~~~~~~~~~~~~~r_{3(1,2)}-\lambda_{13}-\lambda_{23} \}+r_{123} \\
	& = \underbrace{(m_1+m_2+m_3-r_{123})}_{\mbox{\footnotesize constant}} - \min \{ r_{1(2,3)}+\lambda_{23},~r_{2(1,3)}+\lambda_{13},~r_{3(1,2)}+\lambda_{12} \}.\label{eq:mapthistowf}
\end{align}
We focus on the remaining three parameters, $\lambda_{12},\lambda_{13}$ and $\lambda_{23}$. Note that minimization of $f$ is equivalent to the maximization of the minimum of the three terms: $r_{1(2,3)}+\lambda_{23}$, $r_{2(1,3)}+\lambda_{13}$, and $r_{3(1,2)}+\lambda_{12}$. Intuitively, this optimization may be seen as a constrained waterfilling problem. To make the connection to waterfilling clear, let us further introduce the following notation.
\begin{align}
	&b_1 \triangleq r_{1(2,3)}, &&w_1^{max} \triangleq r_{23}-r_{123}, \\
	&b_2 \triangleq r_{2(1,3)}, &&w_2^{max} \triangleq r_{13}-r_{123}, \\
	&b_3 \triangleq r_{3(1,2)}, &&w_3^{max} \triangleq r_{12}-r_{123}, \\
	&w_{1} \triangleq \lambda_{23}, &&	w_{1,2}^{max} \triangleq r_{13,23} - r_{123}, \\
	&w_{2} \triangleq \lambda_{13}, &&w_{1,3}^{max} \triangleq r_{12,23} - r_{123}, \\
	&w_{3} \triangleq \lambda_{12}, &&w_{2,3}^{max} \triangleq r_{12,13} - r_{123}.
\end{align}
With this notation, the optimization problem becomes
\begin{align}
	& \mbox{maxmize} ~ h_{min}\triangleq \min \{ b_1+w_1,b_2+w_2,b_3+w_3 \},\\
	& ~~~~~~~~~~~~~~~~s.t. \begin{cases}
		w_1 \leq w_1^{max}, \\
		w_2 \leq w_2^{max}, \\
		w_3 \leq w_3^{max}, \\
		w_1+w_2 \leq w_{1,2}^{max}, \\
		w_1+w_3 \leq w_{1,3}^{max}, \\
		w_2+w_3 \leq w_{2,3}^{max}. \\
		w_1,w_2,w_3 \in \mathbb{R}_+ \\
	\end{cases} \label{eq:wf_constraints}
\end{align}
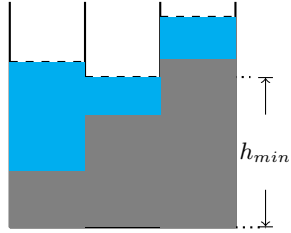
\begin{figure}[htbp]
	\centering
	\begin{tikzpicture}    
    \draw[-, thick](0,0) -- (0,3);
    \draw[-, thick](1,0) -- (1,3);
    \draw[-, thick](2,0) -- (2,3);
    \draw[-, thick](3,0) -- (3,3);
    \draw[-, thick](0,0) -- (3,0);
    \draw[-, thick](0,0.75) -- (1,0.75);
    \draw[-, thick](1,1.5) -- (2,1.5);
    \draw[-, thick](2,2.25) -- (3,2.25);      
    \node[] at (0.5,0.3) {\footnotesize $b_1$};
    \node[] at (1.5,0.7) {\footnotesize $b_2$};
    \node[] at (2.5,1.1) {\footnotesize $b_3$};
    \draw[dashed, thick](0,2.2) -- (1,2.2);
    \draw[dashed, thick](1,2) -- (2,2);
    \draw[dashed, thick](2,2.8) -- (3,2.8);
    \node[] at (0.5,1.5) {\footnotesize $w_1$};
    \node[] at (1.5,1.75) {\footnotesize $w_2$};
    \node[] at (2.5,2.5) {\footnotesize $w_3$};
    \draw[|<-](3.4,0) -- (3.4,0.5);
    \draw[->|](3.4,1.5) -- (3.4,2);
    \node[] at (3.4,1) {\footnotesize $h_{min}$};
    \draw[dotted, thick] (2,2) -- (3.2,2);
    \draw[dotted, thick] (3,0) -- (3.4,0);
    \fill [gray, opacity=0.35] (0,0) rectangle (1,0.75);
    \fill [gray, opacity=0.35] (1,0) rectangle (2,1.5);
    \fill [gray, opacity=0.35] (2,0) rectangle (3,2.25);
    \fill [cyan, opacity=0.35] (0,0.75) rectangle (1,2.2);
    \fill [cyan, opacity=0.35] (1,1.5) rectangle (2,2);
    \fill [cyan, opacity=0.35] (2,2.25) rectangle (3,2.8);
	\end{tikzpicture}
	\caption{Constrained Waterfilling.}\label{fig:wf}
\end{figure}
Let us explain the waterfilling analogy. There are three adjacent vessels as shown in Figure \ref{fig:wf}, labeled $1,2,3$ from left to right. Vessels $1,2,3$ have base levels (shown in gray) at heights $b_1,b_2,b_3$, respectively. We are allowed to add $w_1,w_2,w_3$ amounts of water to Vessel $1$, Vessel $2$ and Vessel $3$, respectively according to the constraints \eqref{eq:wf_constraints}, in order to maximize $h_{min}$, i.e., the minimum of the heights of water in the three vessels. The objective from \eqref{eq:mapthistowf} now maps to the waterfilling problem as,
\begin{align}
f=m_1+m_2+m_3-r_{123}-h_{min}.
\end{align}

Note that the first three constraints in \eqref{eq:wf_constraints} are constraints on the  capacity (for holding water) of individual vessels, and the next three constraints are for pairs of vessels. Furthermore, we have $\max\{w_1^{max},w_2^{max}\}\leq w_{1,2}^{max}$ by \eqref{eq:r_con}, which ensures that the pairwise capacity constraints do not dominate the individual capacity constraints. Since the only constraints are on individual vessel capacities and pairwise vessel capacities, the optimal value of $h_{min}$ must correspond to one of the following outcomes.
\begin{enumerate}
\item $h_{min}$ is limited by the individual capacity of Vessel $i$, $i\in\{1,2,3\}$, which holds  the maximum water it can, $w_i=w_i^{max}$. In this case, $h_{min}=b_i+w_i^{max}$ and $F^*\leq f =m_1+m_2+m_3-r_{123}-h_{min}=\delta_i$.
\item $h_{min}$ is limited by the pairwise capacity of Vessels $i,j$, $(i,j)\in\{(1,2), (1,3), (2,3)\}$,  which together hold the maximum water they can, i.e., $w_i+w_j=w_{i,j}^{max}$, and have the same final water level $h_{min}$. In this case, we have $(h_{min}-b_i)+(h_{min}-b_j)=w_{i,j}^{max}$ which gives us $h_{min}=\frac{b_i+b_j+w_{i,j}^{max}}{2}$ and $F^*\leq f=m_1+m_2+m_3-r_{123}-h_{min}=\delta_{ij}$.
\end{enumerate}
Thus, in every case we have $F^*\leq \max\{\delta_{1},\delta_{2},\delta_{3},\delta_{12},\delta_{13},\delta_{23}\}= \max\{\Delta_1,\Delta_2\}$, which completes the proof.
$\hfill\square$

\section{Conclusion} \label{sec:conc}
The exact capacity of the $3$ user LCBC is found for all cases, i.e., for arbitrary finite field $\mathbb{F}_q$, arbitrary data dimension $d$, and arbitrary specifications of the users' desired computations and side-information $\bV_k', \bV_k$. The $3$ user setting introduces several  intricacies that were not encountered in the $2$ user LCBC, such as the insufficiency of the entropic formulation for tight converse bounds,  the need for functional submodularity,  the rich variety of subspaces involved, random coding arguments to resolve discrepancies between the users' differing views of the same subspaces, the tradeoffs between the communication efficiencies associated with these subspaces, and the inherent optimization that led us to a constrained waterfilling solution. The fact that the $3$ user LCBC capacity turns out to be fully tractable despite  these intricacies is surprising.  In particular, we note that even though the $3$ user LCBC involves at least $6$ key subspaces in $\bV_k, \bV_k'$, $k\in\{1,2,3\}$, the solution did not require the Ingleton inequality, nor were non-Shannon inequalities required for the converse. Instead, the main tools used were Steinitz Exchange lemma, the dimension counting of pairwise unions and intersections of subspaces, functional submodularity, and the random coding argument invoked through the Schwartz-Zippel lemma. The tractability of the $3$ user LCBC is indicative of the potential for further progress in understanding the fundamental limits of basic computation networks in future efforts. Indeed, there are many promising directions for future work. Building on the $K=2$ and $K=3$ cases, the $K=4$ user LCBC in particular is an important next step, as it might either reveal a consistent pattern that holds for arbitrary $K$ users or present obstacles that are indicative of the difficulty of the large $K$ setting. Also of interest are asymptotic LCBC settings with large number of users. An intriguing generalization of the LCBC problem is the LCBC with partially informed server, LCBC-PIS in short, where the central server has only limited knowledge of the data in the form of some linear functions of the data. The LCBC-PIS setting has been introduced and solved recently for $K=2$ users in \cite{LCBC_PIS}. The capacity remains open for $K\geq 3$ users. Studies of linear computation multiple access settings (LCMAC) represent another promising research avenue, partially explored in \cite{Fei_Chen_Wang_Jafar} from a coding perspective. Approximate linear computations over real or complex numbers, as well as non-linear computations that connect to coded distributed computing represent other challenging and important research directions for future work. From a practical perspective, studies of computational and communication tradeoffs of AR/VR applications that take advantage of the coding schemes discovered through the studies of LCBC/LCMAC settings would be valuable complements to the theoretical efforts.

\appendix

\section{Field Extension} \label{app:field_extension}
To clarify the notation and illustrate the utility of field extensions, let us present an example. Consider $q = 2, K=3, d=2, m=m'=1$ and the following coefficient matrices ${\bf U}_k = [{\bf V}_k',{\bf V}_k], k\in[3]$ as
{\small
\begin{align} \label{eq:app_field_1}
	{\bf U}_1^{2\times 2} = \begin{bmatrix}
		1 & 0 \\ 0 & 1
	\end{bmatrix}, ~
	{\bf U}_2^{2\times 2} = \begin{bmatrix}
		0 & 1 \\ 1 & 1
	\end{bmatrix}, ~ 
	{\bf U}_3^{2\times 2} = \begin{bmatrix}
		1 & 1 \\ 1 & 0
	\end{bmatrix}.
\end{align}
}By the problem formulation, $\mathbf{x} \in \mathbb{F}_2^{2\times 1}$ denotes the data for each computation, and ${\bf X} \in \mathbb{F}_2^{2\times L}$ denotes the data for $L$ computations. Let us first try to design a coding scheme with $L=1$. Denote ${\bf X}^T = [x_1(1),x_2(1)]$ and then ${\bf W}_1' = {\bf X}^T{\bf V}_1' = x_1(1)$, ${\bf W}_1 = {\bf X}^T{\bf V}_1 = x_2(1)$, ${\bf W}_2' = {\bf X}^T{\bf V}_2' = x_2(1)$, ${\bf W}_2 = {\bf X}^T{\bf V}_2 = x_1(1)+x_2(1)$, ${\bf W}_3' = {\bf X}^T{\bf V}_3' = x_1(1)+x_2(1)$, ${\bf W}_3 = {\bf X}^T{\bf V}_3 = x_1(1)$. The following table shows all possible outcomes of $({\bf W}_k',{\bf W}_k)_{k\in [3]}$.
{\small
\begin{center}
\begin{tabular}{c|c||c|c|c|c|c|c}
\toprule
	$x_1(1)$ & $x_2(1)$ & ${\bf W}_1'$ & ${\bf W}_1$ & ${\bf W}_2'$ & ${\bf W}_2$ & ${\bf W}_3'$ & ${\bf W}_3$  \\ \hline
 	0 & 0 & 0 & 0 & 0 & 0 & 0 & 0   \\\hline
 	0 & 1 & 0 & 1 & 1 & 1 & 1 & 0   \\\hline
 	1 & 0 & 1 & 0 & 0 & 1 & 1 & 1   \\\hline
 	1 & 1 & 1 & 1 & 1 & 0 & 0 & 1   \\\bottomrule
\end{tabular}
\end{center}
}A coding scheme must satisfy the property that for any two outcomes, the broadcast information ${\bf S}$ corresponding to these outcomes has to be different if $\exists k\in [3]$ such that ${\bf W}_k'$ is the same but ${\bf W}_k$ is different for these two outcomes. This is necessary to ensure that User $k$ will not be confused when decoding under these two outcomes. Following this rule it is easy to verify from the table that the realization of ${\bf S}$ has to be different for any two outcomes in this example, which implies ${\bf S}$ has to be different in all outcomes. Thus, $|\mathcal{S}| \geq 4$, which implies $N\geq 2$ and thus $\Delta=N/L \geq 2$ for $L=1$. In other words, scalar coding schemes cannot achieve $\Delta<2$.

Let us now consider field extension. Let $z=2$ and consider $L=z=2$. Denote $\bar{{\bf V}}_k' = {\bf V}_k' \otimes {\bf I}^{2\times 2}$, $\bar{{\bf V}}_k = {\bf V}_k \otimes {\bf I}^{2\times 2}$ and $\bar{{\bf U}}_k = {\bf U}_k \otimes {\bf I}^{2\times 2}$ as the {\it $2$-extension} of the coefficient matrices, where $\otimes$ denotes the Kronecker product. We have
{\small
\begin{align}
	&\bar{{\bf U}}_1^{4\times 4} = [\bar{{\bf V}}_1',\bar{{\bf V}}_1]=\begin{bmatrix}
		1 & 0 & 0 & 0 \\
		0 & 1 & 0 & 0 \\
		0 & 0 & 1 & 0 \\
		0 & 0 & 0 & 1
	\end{bmatrix}, \notag \\
	&\bar{{\bf U}}_2^{4\times 4} = [\bar{{\bf V}}_2',\bar{{\bf V}}_2]=\begin{bmatrix}
		0 & 0 & 1 & 0 \\
		0 & 0 & 0 & 1 \\
		1 & 0 & 1 & 0 \\
		0 & 1 & 0 & 1
	\end{bmatrix}, \notag \\
	&\bar{{\bf U}}_3^{4\times 4} = [\bar{{\bf V}}_3',\bar{{\bf V}}_3]=\begin{bmatrix}
		1 & 0 & 1 & 0 \\
		0 & 1 & 0 & 1 \\
		1 & 0 & 0 & 0 \\
		0 & 1 & 0 & 0
	\end{bmatrix}.
\end{align}
}Then denote ${\bf X} = [x_{1}(1),x_{1}(2);x_{2}(1),x_{2}(2)]$ as the data matrix for $L=2$, and denote $\bar{{\bf X}} = \mbox{vec}({\bf X}^T)$, where $\mbox{vec}(\cdot)$ is the vectorization function. We have $\bar{{\bf X}}^T=[x_{1}(1),x_{1}(2),x_{2}(1),x_{2}(2)]\in \mathbb{F}_2^{1\times 4}$. We can see that User $k$ has side-information $\bar{{\bf X}}^T\bar{{\bf V}}_k'$, and wants to compute $\bar{{\bf X}}^T\bar{{\bf V}}_k$. Then by the property of finite field extensions, we can regard $[x_{1}(1),x_{1}(2)]$ as $\bar{x}_1 \in \mathbb{F}_4$ and similarly $[x_{2}(1),x_{2}(2)]$ as $\bar{x}_2 \in \mathbb{F}_4$. Accordingly, the extended coefficient matrices are regarded as $2\times 2$ matrices in ${\Bbb F}_4$ as
{\small
\begin{align}
	{\bf U}_1^{2\times 2} = \begin{bmatrix}
		1 & 0 \\ 0 & 1
	\end{bmatrix}, ~
	{\bf U}_2^{2\times 2} = \begin{bmatrix}
		0 & 1 \\ 1 & 1
	\end{bmatrix}, ~ 
	{\bf U}_3^{2\times 2} = \begin{bmatrix}
		1 & 1 \\ 1 & 0
	\end{bmatrix}.
\end{align}
}Note that the matrices are exactly the same as the matrices in (\ref{eq:app_field_1}) but considered in the extended field $\mathbb{F}_4$. To avoid complex notations, we redefine the data matrix as ${\bf X}=[\bar{x}_1;\bar{x}_2] \in \mathbb{F}_4^{2\times 1}$. Thus, by considering $L=2$ computations, we have an equivalent problem where $q=4,d=2,m=m'=1$ and the same coefficient matrices ${\bf U}_k = [{\bf V}_k',{\bf V}_k], k\in[3]$, but now all elements are from ${\Bbb F}_4$. As a coding scheme with $L=2$, it suffices to send ${\bf S} = {\bf X}^T[1;\alpha]$, where $\alpha \not \in \{0,1\}$ and $\alpha \in \mathbb{F}_4$. Since the column vector $[1;\alpha]$ is linearly independent of each of ${\bf V}_1', {\bf V}_2', {\bf V}_3'$, each user has two independent equations in ${\bf X}^T$ from which it can decode all of ${\bf X}$, and recover the desired ${\bf W}_k$. Since ${\bf S}$ is chosen as $1$ symbol from ${\Bbb F}_4$, we have $N=2$ ($1$ symbol in ${\Bbb F}_4$ corresponds to $2$ symbols in ${\Bbb F}_2$) and thus $\Delta = N/L = 1$, thus a better $\Delta = N/L$ is achieved by considering $L>1$.

In general, by considering $L=z$ computations, the original problem is equivalent to the problem with all the same parameters including the coefficient matrices but in the extended field ${\Bbb F}_{q^z}$. By considering  $z$ computations in the original problem as $1$ computation in the extended field, the original problem over $\mathbb{F}_q$ for $L=zL'$ computations,  is equivalent to the new problem with the same parameters in the extended field ${\Bbb F}_{q^z}$ for $L'$ computations. 

\section{Some Discussion on Lemma \ref{lemma:decomposition}} \label{app:3space_discuss}
As we go to $3$ spaces, $\langle{\bf U}_1\rangle, \langle{\bf U}_2\rangle, \langle{\bf U}_3\rangle$, generalizing the decomposition for $\langle{\bf U}_1\rangle$ and $\langle{\bf U}_2\rangle$ as in Lemma \ref{lem:2_decomp} is not so straightforward. Analogies to set-theoretic ideas such as  inclusion-exclusion principle and Venn's diagrams  do not  quite work for $3$ vector spaces. For example, if $\langle{\bf U}_1\rangle, \langle{\bf U}_2\rangle, \langle{\bf U}_3\rangle$ are three independent lines in a plane, i.e., pairwise independent one-dimensional subspaces of a $2$ dimensional vector space, then $\langle {\bf U}_1\rangle$ has no non-trivial intersection with either of $\langle {\bf U}_2\rangle$ or $\langle {\bf U}_3\rangle$ individually, yet $\langle {\bf U}_1\rangle$ is contained  in $\langle[{\bf U}_2,{\bf U}_3]\rangle$,  a situation for which there is no direct set-theoretic analogy. 
This is why we need the subspace decomposition for $\langle{\bf U}_1\rangle, \langle{\bf U}_2\rangle, \langle{\bf U}_3\rangle$, as illustrated in Figure \ref{fig:venn1} and formalized in Lemma \ref{lemma:decomposition}.
As noted, the decomposition parallels a corresponding decomposition in the  DoF studies of the $3$ user MIMO BC by Wang in \cite{wang2017degrees}, highlighting its fundamental conceptual significance. 
	
Following the idea of \emph{growing} the basis to cover larger and larger subspaces, similar to the constructive proof for Lemma \ref{lem:2_decomp}, let us interpret Figure \ref{fig:venn1}, so that Lemma \ref{lemma:decomposition} will be intuitively transparent. Consider the space $\langle{\bf U}_1\rangle$, i.e., the column space of ${\bf U}_1$. This space is decomposed into $5$ subspaces as follows. First we have the space within $\langle{\bf U}_1\rangle$ which overlaps with \emph{both} $\langle{\bf U}_2\rangle$ and $\langle{\bf U}_3\rangle$. This is the space $\langle{\bf U}_{123}\rangle\triangleq \langle{\bf U}_1\rangle\cap\langle{\bf U}_2\rangle\cap\langle{\bf U}_3\rangle$. The basis for this space is labeled in the figure as the matrix ${\bf B}_{123}$. Now consider the space within $\langle{\bf U}_1\rangle$ which overlaps with $\langle{\bf U}_2\rangle$. This is the space $\langle{\bf U}_{12}\rangle\triangleq\langle{\bf U}_1\rangle\cap\langle{\bf U}_2\rangle$. The basis for this space is $[{\bf B}_{123}, {\bf B}_{12}]$. Note that $\langle{\bf U}_{123}\rangle \subset\langle{\bf U}_{12}\rangle$, which is also reflected in the fact that the basis for $\langle{\bf U}_{12}\rangle$ explicitly contains the basis for $\langle{\bf U}_{123}\rangle$. It is  important to recall that \emph{the columns of a basis matrix must be linearly independent by definition}. Therefore, not only do we have a basis $[{\bf B}_{123}, {\bf B}_{12}]$ for $\langle{\bf U}_{12}\rangle$, but also by the linear independence of the basis vectors, it follows that $\langle{\bf U}_{12}\rangle$ is decomposed into two \emph{independent} subspaces, namely the subspaces $\langle{\bf B}_{123}\rangle$ and $\langle{\bf B}_{12}\rangle$. This can also be expressed as\footnote{For vector spaces $\mathcal{V}, \mathcal{V}_1,\cdots, \mathcal{V}_K$, we have $\mathcal{V}=\mathcal{V}_1\oplus \mathcal{V}_2\oplus\cdots\oplus \mathcal{V}_K$ iff for every $v\in \mathcal{V}$, there exist \emph{unique} $v_k\in \mathcal{V}_k$ for all $k\in[K]$ such that $v=v_1+v_2+\cdots+v_K$.} a direct sum, i.e., $\langle{\bf U}_{12}\rangle=\langle{\bf B}_{123}\rangle\oplus \langle{\bf B}_{12}\rangle$. Similarly, $\langle{\bf U}_{13}\rangle$, i.e., the intersection of $\langle{\bf U}_{1}\rangle$ and $\langle{\bf U}_3\rangle$ is decomposed into \emph{independent} subspaces $\langle{\bf B}_{123}\rangle$ and $\langle{\bf B}_{23}\rangle$, i.e., $\langle{\bf U}_{13}\rangle=\langle{\bf B}_{123}\rangle\oplus \langle{\bf B}_{13}\rangle$.  Continuing the process further, now consider the space within $\langle{\bf U}_1\rangle$ which overlaps with $\langle[{\bf U}_2,{\bf U}_3]\rangle$, i.e., the space denoted as $\langle{\bf U}_{1(2,3)}\rangle$. As indicated in the figure, the basis for this space is $[{\bf B}_{123}, {\bf B}_{12}, {\bf B}_{13}, {\bf B}_{1(2,3)}]$, which immediately decomposes $\langle{\bf U}_{1(2,3)}\rangle$ into $4$ \emph{independent} subspaces, i.e., $\langle{\bf U}_{1(2,3)}\rangle=\langle{\bf B}_{123}\rangle\oplus\langle{\bf B}_{12}\rangle\oplus\langle{\bf B}_{23}\rangle\oplus\langle{\bf B}_{1(2,3)}\rangle$. Finally, consider all of $\langle{\bf U}_1\rangle$, for which Figure \ref{fig:venn1} identifies the basis as the matrix $[{\bf B}_{123}, {\bf B}_{12}, {\bf B}_{13}, {\bf B}_{1(2,3)}, {\bf B}_{1c}]$, thus completing the decomposition of $\langle{\bf U}_1\rangle$ into $5$ disjoint subspaces, $\langle{\bf U}_{1}\rangle=\langle{\bf B}_{123}\rangle\oplus\langle{\bf B}_{12}\rangle\oplus\langle{\bf B}_{23}\rangle\oplus\langle{\bf B}_{1(2,3)}\rangle\oplus\langle{\bf B}_{1c}\rangle$. Similar decompositions apply to $\langle{\bf U}_2\rangle$ and $\langle{\bf U}_3\rangle$ as well.

The description thus far is similar to set-theoretic decompositions into disjoint sets, as one might represent through disjoint regions in a Venn's diagram. This brings us to the most interesting aspect of the $3$-subspace decomposition, highlighted as the yellow regions with dashed boundaries in Figure \ref{fig:venn1}. The  subspaces corresponding to these three regions, namely $\langle {\bf B}_{1(2,3)}\rangle$, $\langle {\bf B}_{2(1,3)}\rangle$, and $\langle {\bf B}_{3(1,2)}\rangle$ are only \emph{pairwise} independent, and the span of the union of any two of them contains the third. In fact, it is always possible to choose the basis matrices  such that ${\bf B}_{1(2,3)}+{\bf B}_{2(1,3)} = {\bf B}_{3(1,2)}$, which will simplify the construction of the coding scheme. Thus,  Figure \ref{fig:venn1} shows $10$ subspaces, including the $3$ subspaces highlighted in yellow, and if we exclude any one of the $3$ yellow subspaces, the remaining $9$ are independent spaces. Mathematically,
\begin{align}
&\langle[{\bf U}_1, {\bf U}_2, {\bf U}_3]\rangle\notag\\
&=\langle{\bf B}_{1(2,3)}\rangle\oplus \langle{\bf B}_{2(1,3)}\rangle\oplus\langle{\bf B}_{1c}\rangle\oplus\langle{\bf B}_{2c}\rangle\oplus\langle{\bf B}_{3c}\rangle\oplus\langle{\bf B}_{12}\rangle\oplus\langle{\bf B}_{23}\rangle\oplus\langle{\bf B}_{13}\rangle\oplus\langle{\bf B}_{123}\rangle\\
&=\langle{\bf B}_{2(1,3)}\rangle\oplus\langle {\bf B}_{3(1,2)}\rangle\oplus\langle{\bf B}_{1c}\rangle\oplus\langle{\bf B}_{2c}\rangle\oplus\langle{\bf B}_{3c}\rangle\oplus\langle{\bf B}_{12}\rangle\oplus\langle{\bf B}_{23}\rangle\oplus\langle{\bf B}_{13}\rangle\oplus\langle{\bf B}_{123}\rangle\\
&=\langle{\bf B}_{3(1,2)}\rangle\oplus\langle {\bf B}_{1(2,3)}\rangle\oplus\langle{\bf B}_{1c}\rangle\oplus\langle{\bf B}_{2c}\rangle\oplus\langle{\bf B}_{3c}\rangle\oplus\langle{\bf B}_{12}\rangle\oplus\langle{\bf B}_{23}\rangle\oplus\langle{\bf B}_{13}\rangle\oplus\langle{\bf B}_{123}\rangle
\end{align}
 
\section{Proof of Lemma \ref{lemma:decomposition}: Decomposition of $\langle {\bf U}_1\rangle, \langle {\bf U}_2\rangle, \langle {\bf U}_3\rangle$} \label{proof:decomposition}

Let us begin by informally summarizing the key facts that are used extensively  in this section. 
\begin{enumerate}
\item A matrix $M$ forms a \emph{basis} of the column-space of a matrix $U$, if and only if $\langle U\rangle \subset \langle M\rangle$ and the number of columns of $M$ is equal to $\rk(U)$. Note that a basis matrix must have full column-rank, i.e., all its columns are linearly independent, and it has only as many columns as needed to span $\langle U\rangle$, i.e., $\rk(U)$ columns.
\item If $A\in \mathbb{F}_q^{d\times a}$ and $B\in \mathbb{F}_q^{d\times b}$ are basis matrices (i.e., they each have full column-rank) and $\langle B \rangle \subset \langle A \rangle$, then there exists a matrix $C \in \mathbb{F}_q^{d\times (a-b)}$ such that $[B,C]$ is a basis of $\langle A \rangle$. It follows that $\langle C \rangle \subset \langle A \rangle$. Let us call $C$ the complement of $B$ in $A$ and denote it as $C = A\backslash B$. Note that such $C$ is not unique, and one feasible choice of such $C$ follows from the Steinitz Exchange Lemma, which produces a $C$ that is a submatrix of $A$. Other feasible choices of $C$ can be constructed as follows. Denote $C^{\footnotesize{\mbox{\footnotesize old}}}$ as the choice from the Steinitz Exchange Lemma. Other feasible choices can be constructed as $C^{\footnotesize{\mbox{\footnotesize new}}} = C^{\footnotesize{\mbox{\footnotesize old}}}R+BR'$, where $R\in\mathbb{F}_q^{(a-b)\times (a-b)}$, $R'\in\mathbb{F}_q^{b\times (a-b)}$ and $R$ is invertible. To see this, first note that $C^{\footnotesize{\mbox{\footnotesize new}}}$ has the same size as $C^{\footnotesize{\mbox{\footnotesize old}}}$. Then note that any $v\in \langle A \rangle$ can be represented as $v = Br_b+C^{\footnotesize{\mbox{\footnotesize old}}}r_c$ because $[B, C^{\footnotesize{\mbox{\footnotesize old}}}]$ is a basis of $\langle A \rangle$. It follows that $v = B(r_b-R'R^{-1}r_c)+C^{\footnotesize{\mbox{\footnotesize new}}}R^{-1}r_c$, which implies that $v\in \langle [B, C^{\footnotesize{\mbox{\footnotesize new}}}] \rangle$. 
\item Also recall the dimension formula  \eqref{eq:dimension_formula}, i.e., $\rk(M_1)+\rk(M_2)=\rk(M_1\cap M_2)+\rk([M_1,M_2])$.
\end{enumerate}
We will now construct the  $10$ bases that are mentioned in Lemma \ref{lemma:decomposition}, collectively referred to as $\mathcal{B}$.
\begin{align}
	&\mathcal{B} = \{ {\bf B}_{123},{\bf B}_{12},{\bf B}_{13},{\bf B}_{23},{\bf B}_{1(2,3)},{\bf B}_{2(1,3)},{\bf B}_{3(1,2)}, {\bf B}_{1c},{\bf B}_{2c},{\bf B}_{3c}\} \notag
\end{align}
First, let us define the compact notation, $b_* \triangleq \rk({\bf B}_*)$ where
\begin{align*}
	* \in \{1,2,3,12,13,23,1(2,3),2(1,3),3(1,2),1c,2c,3c\}.
\end{align*}
For example, $b_{1(2,3)}\triangleq \rk({\bf B}_{1(2,3)})$. Since the vector space is $d$-dimensional, and ${\bf B}_*$ are basis matrices, it follows that  the size of ${\bf B}_*$ is $d\times b_*$. The construction now proceeds as follows.

\begin{enumerate}[wide, labelindent=1em, labelwidth=!, leftmargin =3em,  label={Step \arabic*}: ]
	\item ${\bf B}_{123} = {\bf U}_{123}$.
	\item ${\bf B}_{12} = {\bf U}_{12} \backslash {\bf B}_{123}$.
	\item ${\bf B}_{13} = {\bf U}_{13} \backslash {\bf B}_{123}$.
	\item ${\bf B}_{23} = {\bf U}_{23} \backslash {\bf B}_{123}$.
\end{enumerate}
These four steps are direct applications of the Steinitz Exchange Lemma, which also guarantees that properties {\it (P1)} - {\it (P4)} are satisfied.  Next let us prove that {\it (P5)} - {\it (P7)} are also satisfied. Consider $(P5)$. It follows from the construction that $[{\bf B}_{123},{\bf B}_{12},{\bf B}_{13}]$ spans $\langle [{\bf U}_{12},{\bf U}_{13}] \rangle$ because it explicitly contains the bases for both spaces,  but we wish to show that it is itself a \emph{basis}, i.e., it has full column-rank. Now, since $[{\bf B}_{123},{\bf B}_{12},{\bf B}_{13}]$ has $b_{123}+b_{12}+b_{13}$ columns and $\rk([{\bf U}_{12},{\bf U}_{13}]) = \rk({\bf U}_{12})+\rk({\bf U}_{13})-\rk({\bf U}_{123}) = (b_{123}+b_{12})+(b_{123}+b_{13})-b_{123} = b_{123}+b_{12}+b_{13}$, it follows that $[{\bf B}_{123},{\bf B}_{12},{\bf B}_{13}]$ has full column rank. Thus, {\it (P5)} is satisfied. {\it (P6)} and {\it (P7)} are similarly proved by symmetry. We continue the construction of $\mathcal{B}$.
\begin{enumerate}[wide, labelindent=1em, labelwidth=!, leftmargin =3em,  label={Step \arabic*}: ]
\setcounter{enumi}{4}
	\item ${\bf B}_{1(2,3)} = {\bf U}_{1(2,3)} \backslash [{\bf B}_{123},{\bf B}_{12},{\bf B}_{13}]$.
	\item ${\bf B}_{2(1,3)} = {\bf U}_{2(1,3)} \backslash [{\bf B}_{123},{\bf B}_{12},{\bf B}_{23}]$.
	\item ${\bf B}_{3(1,2)} = {\bf U}_{3(1,2)} \backslash [{\bf B}_{123},{\bf B}_{13},{\bf B}_{23}]$.
	\item ${\bf B}_{1c} = {\bf U}_1 \backslash [{\bf B}_{123},{\bf B}_{12},{\bf B}_{13},{\bf B}_{1(2,3)}]$.
	\item ${\bf B}_{2c} = {\bf U}_2 \backslash [{\bf B}_{123},{\bf B}_{12},{\bf B}_{23},{\bf B}_{2(1,3)}]$.
	\item ${\bf B}_{3c} = {\bf U}_3 \backslash [{\bf B}_{123},{\bf B}_{13},{\bf B}_{23},{\bf B}_{3(1,2)}]$.
\end{enumerate}
Again, Steps 5-10 are applications of the Steinitz Exchange Lemma, which implies that $[{\bf B}_{123},{\bf B}_{12}$, ${\bf B}_{13},{\bf B}_{1(2,3)},$ ${\bf B}_{1c}]$ is a basis of $\langle {\bf U}_1 \rangle$, $[{\bf B}_{123},{\bf B}_{12},{\bf B}_{23}$, ${\bf B}_{2(1,3)},{\bf B}_{2c}]$ is a basis of $\langle {\bf U}_2 \rangle$, and $[{\bf B}_{123},{\bf B}_{13}$, ${\bf B}_{23},{\bf B}_{3(1,2)},{\bf B}_{3c}]$ is a basis of $\langle {\bf U}_3 \rangle$. Furthermore, $[{\bf B}_{123},{\bf B}_{12},{\bf B}_{13},{\bf B}_{23},{\bf B}_{2(1,3)},{\bf B}_{3(1,2)},{\bf B}_{2c},{\bf B}_{3c}]$ is a basis of $\langle[{\bf U}_2, {\bf U}_3]\rangle$ because it spans $\langle [{\bf U}_2, {\bf U}_3] \rangle$ by construction, and has full column-rank because
\begin{align}
&\rk([{\bf U}_2,{\bf U}_3]) \\
&= \rk({\bf U}_2)+\rk({\bf U}_3) - \rk({\bf U}_{23})\\
&=(b_{123}+b_{12}+b_{23}+b_{2(1,3)}+b_{2c}) +(b_{123}+b_{13}+b_{23}+b_{3(1,2)}+b_{3c}) -(b_{123}+b_{23})\\
& = b_{123}+b_{12}+b_{13}+b_{23}+b_{2(1,3)}+b_{3(1,2)}+b_{2c}+b_{3c}
\end{align}
 which happens to be the number of columns of $[{\bf B}_{123},{\bf B}_{12},{\bf B}_{13},{\bf B}_{23},{\bf B}_{2(1,3)},{\bf B}_{3(1,2)},{\bf B}_{2c},{\bf B}_{3c}]$. Thus,  {\it (P14)} - {\it (P16)} are satisfied. 

Next let us show that {\it (P17)} - {\it (P19)} are satisfied. Consider {\it (P17)}, i.e., we wish to show that ${\bf B}_{17}\triangleq [{\bf B}_{123},{\bf B}_{12},{\bf B}_{13},{\bf B}_{23}$, ${\bf B}_{1(2,3)},{\bf B}_{2(1,3)},{\bf B}_{1c},{\bf B}_{2c},{\bf B}_{3c}]$ is a basis for $\langle [{\bf U}_1, {\bf U}_2,{\bf U}_3]\rangle$. First let us show that $\langle [{\bf U}_1, {\bf U}_2,{\bf U}_3]\rangle$ is contained in the span of ${\bf B}_{17}$. From {\it (P11), (P12)} note that the basis for $\langle{\bf U}_1\rangle$ is explicitly contained in ${\bf B}_{17}$, and so is the basis for $\langle{\bf U}_2\rangle$. Then, noting that $\langle {\bf B}_{3(1,2)} \rangle \subset \langle [{\bf U}_1,{\bf U}_2] \rangle$ by its construction in Step 7, it follows from {\it (P13)} that $\langle {\bf U}_3\rangle$ is also contained in the span of ${\bf B}_{17}$. Thus, $\langle [{\bf U}_1, {\bf U}_2,{\bf U}_3]\rangle$ is contained in the column-span of ${\bf B}_{17}$. Next let us show that ${\bf B}_{17}$ has only as many columns as $\rk([{\bf U}_1,{\bf U}_2, {\bf U}_3])$, so it must be a basis.
\begin{align}
&\rk([{\bf U}_1,{\bf U}_2,{\bf U}_3])\\
&= \rk([{\bf U}_1,{\bf U}_2]) + \rk({\bf U}_3) - \rk({\bf U}_{3(1,2)})\\
&=(b_{123}+b_{12}+b_{13}+b_{23}+b_{1(2,3)}+b_{2(1,3)}+b_{1c}+b_{2c}) +(b_{123}+b_{13}+b_{23}+b_{3(1,2)}+b_{3c}) \notag \\
&~~~~-(b_{123}+b_{13}+b_{23}+b_{3(1,2)})\\
&=b_{123}+b_{12}+b_{13}+b_{23} +b_{1(2,3)}+b_{2(1,3)} +b_{1c}+b_{2c}+b_{3c}
\end{align}
which is the number of columns of ${\bf B}_{17}$. 
Thus, {\it (P17)} is satisfied, and by symmetry {\it (P18)} and {\it (P19)} are satisfied as well.

At this point, only {\it (P20)} remains to be shown. It is worthwhile to note that we always have, 
 \begin{align}\label{eq:3b_equal}
 	b_{1(2,3)} = b_{2(1,3)} = b_{3(1,2)}.
 \end{align}
This is because properties {\it (P17)}-{\it (P19)} together imply that $b_{1(2,3)}+b_{2(1,3)} = b_{1(2,3)}+b_{3(1,2)} = b_{2(1,3)}+b_{3(1,2)}$ and thus the three components must be equal. If $b_{1(2,3)} = b_{2(1,3)} = b_{3(1,2)}=0$, then {\it (P20)} can be neglected. Otherwise, we continue the process from Step 11.
\begin{enumerate}[wide, labelindent=1em, labelwidth=!, leftmargin =3em,  label={Step \arabic*}: ]
\setcounter{enumi}{10}
	\item Since $\langle {\bf B}_{3(1,2)} \rangle \subset \langle [{\bf U}_1,{\bf U}_2] \rangle$ as noted above, let us uniquely represent ${\bf B}_{3(1,2)}$ in the basis of $\langle [{\bf U}_1,{\bf U}_2] \rangle$ according to {\it (P14)} as,
\end{enumerate}
	\begin{align}
		&{\bf B}_{3(1,2)} = {\bf B}_{123}{\bf R}_1 + {\bf B}_{12}{\bf R}_2 + {\bf B}_{13}{\bf R}_3 + {\bf B}_{23}{\bf R}_4 + {\bf B}_{1(2,3)}{\bf R}_5+{\bf B}_{1c}{\bf R}_6+{\bf B}_{2(1,3)}{\bf R}_7+{\bf B}_{2c}{\bf R}_8
	\end{align}
where ${\bf R}_1$ to ${\bf R}_8$ are ${\Bbb F}_q$ matrices with appropriate sizes. In particular, from \eqref{eq:3b_equal} it follows  that ${\bf R}_5$ and ${\bf R}_7$ are square matrices. A key goal in the remainder of the proof will be to show that ${\bf R}_5$ and ${\bf R}_7$ are invertible.

First we claim that ${\bf R}_6$ and ${\bf R}_8$ must be zero matrices. We prove this by contradiction. Suppose ${\bf R}_6$ is not the zero matrix,  say its first column is a non-zero vector ${\bf r}$, then ${\bf B}_{1c}{\bf r} \not={\bf 0}$ will lie in $\langle [{\bf U}_2, {\bf U}_3]\rangle$. However, by construction, $\rk({\bf B}_{1c}\cap[{\bf U}_2, {\bf U}_3]) = \rk({\bf B}_{1c}\cap{\bf U}_1\cap[{\bf U}_2, {\bf U}_3]) = \rk({\bf B}_{1c}\cap{\bf U}_{1(2,3)}) = 0$, meaning that $\langle {\bf B}_{1c}\rangle$ and $\langle [{\bf U}_2,{\bf U}_3] \rangle$ are independent spaces. This completes the proof by contradiction, confirming that ${\bf R}_6$ is a zero matrix. Similar argument is true for ${\bf R}_8$ due to symmetry. Thus, we have
	\begin{align}
		&{\bf B}_{3(1,2)} = {\bf B}_{123}{\bf R}_1 + {\bf B}_{12}{\bf R}_2 + {\bf B}_{13}{\bf R}_3 + {\bf B}_{23}{\bf R}_4 + {\bf B}_{1(2,3)}{\bf R}_5+{\bf B}_{2(1,3)}{\bf R}_7.\label{eq:check}
	\end{align}
We now claim that ${\bf R}_5$ and ${\bf R}_7$ have full column rank, i.e., they are invertible square matrices. The proof is by contradiction as well. Suppose ${\bf R}_5$ does not have full column rank, then there exists a non-zero vector `${\bf a}$' such that ${\bf R}_5{\bf a} = {\bf 0}$, which then implies that 
\begin{align} \label{eq:R5R7}
	&\underbrace{({\bf B}_{3(1,2)}-{\bf B}_{123}{\bf R}_1 - {\bf B}_{13}{\bf R}_3 -{\bf B}_{23}{\bf R}_4)}_{\in \langle {\bf U}_3 \rangle} {\bf a}  = \underbrace{({\bf B}_{12}{\bf R}_2 +{\bf B}_{2(1,3)}{\bf R}_7)}_{\in \langle {\bf U}_2 \rangle}{\bf a} \triangleq {\bf b}.
\end{align}
Since ${\bf B}_{3(1,2)},{\bf B}_{123},{\bf B}_{13},{\bf B}_{23}$ are disjoint submatrices of the basis matrix for $\langle{\bf U}_{3(1,2)}\rangle$ according to {\it (P10)}, they are linearly independent by construction. It follows that,
\begin{enumerate}
\item  $({\bf B}_{3(1,2)}-{\bf B}_{123}{\bf R}_1 - {\bf B}_{13}{\bf R}_3 -{\bf B}_{23}{\bf R}_4)$ has full column-rank equal to $b_{3(1,2)}$. This is because if on the contrary, there exists a non-zero vector ${\bf z}$ such that $({\bf B}_{3(1,2)}-{\bf B}_{123}{\bf R}_1 - {\bf B}_{13}{\bf R}_3 -{\bf B}_{23}{\bf R}_4){\bf z}={\bf 0}$, then ${\bf B}_{3(1,2)}{\bf z}\in\langle[{\bf B}_{123},{\bf B}_{13},{\bf B}_{23}]\rangle$. Since ${\bf B}_{3(1,2)}$ has full column rank and ${\bf z}\neq {\bf 0}$, this means $\langle{\bf B}_{3(1,2)}\rangle$ has non-trivial intersection with $\langle[{\bf B}_{123},{\bf B}_{13},{\bf B}_{23}]\rangle$, contradicting their linear independence.
\item $({\bf B}_{3(1,2)}-{\bf B}_{123}{\bf R}_1 - {\bf B}_{13}{\bf R}_3 -{\bf B}_{23}{\bf R}_4){\bf a}\triangleq {\bf b}\neq {\bf 0}$, because of the previous observation and because ${\bf a}$ is a non-zero vector.
\item ${\bf b} \not \in \langle [{\bf B}_{123},{\bf B}_{23}] \rangle = \langle {\bf U}_{23} \rangle$, because if ${\bf b}  \in \langle [{\bf B}_{123},{\bf B}_{23}] \rangle$ then the non-zero vector ${\bf B}_{3(1,2)}{\bf a}={\bf b}+{\bf B}_{123}{\bf R}_1{\bf a}+{\bf B}_{13}{\bf R}_3{\bf a}+{\bf B}_{23}{\bf R}_4{\bf a}\in\langle[{\bf B}_{123},{\bf B}_{23}, {\bf B}_{13}]\rangle$, which is a contradiction because ${\bf B}_{3(1,2)},{\bf B}_{123},{\bf B}_{13},{\bf B}_{23}$ are linearly independent.
\item ${\bf b}\in\langle{\bf U}_3\rangle$. This follows from \eqref{eq:R5R7}. Specifically, since ${\bf B}_{3(1,2)}, {\bf B}_{123}, {\bf B}_{13}, {\bf B}_{23}$ are all submatrices of the basis matrix for $\langle{\bf U}_3\rangle$ according to {\it (P13)}, and ${\bf b}$ is their linear combination, this implies that ${\bf b}\in\langle{\bf U}_3\rangle$. 
\item   ${\bf b}\in\langle{\bf U}_2\rangle$. This also follows from \eqref{eq:R5R7} by similar reasoning.
\item From enumerated items 4 and 5 above, we have, ${\bf b}\in \langle {\bf U}_2 \rangle$ and ${\bf b}\in \langle{\bf U}_3 \rangle$, and thus ${\bf b} \in \langle {\bf U}_2 \rangle\cap\langle {\bf U}_3 \rangle=\langle {\bf U}_{23} \rangle$, which contradicts item 3.
\end{enumerate}

The contradiction proves the desired result that ${\bf R}_5$ has full column rank, i.e., it is an invertible square matrix. Similarly we can prove that ${\bf R}_7$ has full column rank, also an invertible square matrix. The last three steps hinge on this property.
\begin{enumerate}[wide, labelindent=1em, labelwidth=!, leftmargin =3em,  label={Step \arabic*}: ]
\setcounter{enumi}{11}
	\item Redefine ${\bf B}_{3(1,2)}$ as
	\begin{align}
		{\bf B}_{3(1,2)}^{\mbox{\footnotesize new}} = {\bf B}_{3(1,2)}^{\mbox{\footnotesize old}}-{\bf B}_{123}{\bf R}_1-{\bf B}_{13}{\bf R}_3 - {\bf B}_{23}{\bf R}_4.
	\end{align}
	\item Redefine ${\bf B}_{2(1,3)}$ as
	\begin{align}
		{\bf B}_{2(1,3)}^{\mbox{\footnotesize new}} = {\bf B}_{2(1,3)}^{\mbox{\footnotesize old}}{\bf R}_7+{\bf B}_{12}{\bf R}_2.
	\end{align}
	\item Redefine ${\bf B}_{1(2,3)}$ as
	\begin{align}
		{\bf B}_{1(2,3)}^{\mbox{\footnotesize new}} = {\bf B}_{1(2,3)}^{\mbox{\footnotesize old}}{\bf R}_5.
	\end{align}
\end{enumerate}
Since ${\bf R}_5$ and ${\bf R}_7$ are invertible square matrices, it follows by {\it (P6)} and {\it (P7)} that ${\bf B}_{1(2,3)}^{\mbox{\footnotesize new}}$, ${\bf B}_{2(1,3)}^{\mbox{\footnotesize new}}$ and ${\bf B}_{3(1,2)}^{\mbox{\footnotesize new}}$ are also feasible choices in Steps 5,6, and 7. Thus, {\it (P8)}-{\it (P19)} are still satisfied after ${\bf B}_{1(2,3)}$, ${\bf B}_{2(1,3)}$ and ${\bf B}_{3(1,2)}$ are replaced with ${\bf B}_{1(2,3)}^{\mbox{\footnotesize new}},{\bf B}_{2(1,3)}^{\mbox{\footnotesize new}},{\bf B}_{3(1,2)}^{\mbox{\footnotesize new}}$, respectively. However, because of the last three steps  and \eqref{eq:check}, {\it (P20)} is now satisfied as well with   ${\bf B}_{1(2,3)}^{\mbox{\footnotesize new}},{\bf B}_{2(1,3)}^{\mbox{\footnotesize new}},{\bf B}_{3(1,2)}^{\mbox{\footnotesize new}}$, i.e., 
\begin{align}
{\bf B}_{3(1,2)}^{\mbox{\footnotesize new}}={\bf B}_{2(1,3)}^{\mbox{\footnotesize new}}+{\bf B}_{1(2,3)}^{\mbox{\footnotesize new}}.
\end{align}
This concludes the proof of Lemma \ref{lemma:decomposition}.
$\hfill\square$

\section{Comparison of Lemma \ref{lemma:decomposition} to the Channel Decomposition of \cite{wang2017degrees}} \label{app:comparison}
Reference \cite[Chapter 3]{wang2017degrees} explores the DoF of a $3$ user MIMO broadcast channel where the transmitter has $m$ antennas, and the $k^{th}$ receiver has $n_k$ antennas, $k\in[3]$. The channel is specified by $Y_1 = H_1X+Z_1$, $Y_2 = H_2X+Z_2$ and $Y_3 = H_3X+Z_3$, where $X\in \mathbb{C}^{m\times 1}$ denotes the input of the channel and $Y_k \in \mathbb{C}^{n_k \times 1},k\in[3]$ denotes the output of the broadcast channel at the $k^{th}$ receiver. $H_k\in \mathbb{C}^{n_k\times m}, k\in[3]$ denotes the channel matrix between the transmitter and the $k^{th}$ receiver. $Z_1, Z_2$ and $Z_3$ are independent Gaussian noise vectors with zero mean and identity covariance matrix. There are independent messages desired by various subsets of receivers. As apparent from the high level description, the overall $3$ user MIMO BC DoF question does not allow any direct mapping to our $3$ user LCBC capacity question, e.g., the LCBC formulation has no notion of channel matrices, all users receive the same broadcast symbols,  whereas the MIMO BC problem has no notion of side-information or linear computations.

What connects the two problems is that they both involve a  decomposition of $3$ subspaces. In the LCBC, the $3$ subspaces of interest are $\langle{\bf U}_1\rangle, \langle{\bf U}_2\rangle, \langle{\bf U}_3\rangle$ as in Lemma \ref{lemma:decomposition}. In the MIMO BC the corresponding subspaces are $\langle N_1\rangle, \langle N_2\rangle, \langle N_3\rangle$, defined as the null spaces of the channel matrices $H_1^T$, $H_2^T$, $H_3^T$, respectively. The decompositions parallel each other very closely. Intuitively, even though the context surrounding these subspaces is quite different in each problem, the subspaces are similar mathematical objects, so  it makes sense that they should have similar properties, e.g., similar decompositions. The following table establishes a one-to-one correspondence of the subspace decompositions in the two settings. The  notation  in the right column of the table follows the definitions in \cite{wang2017degrees}.

\begin{table}[h]
\begin{center}
\begin{tabular}{c|c|c}
\toprule
& Lemma 1 & \cite{wang2017degrees} \\ \hline
Field & $\mathbb{F}_q$ & $\mathbb{C}$ \\ \hline
Dimension of universe & $d$ & $m$ \\ \hline
3 subspaces to decompose & $\langle {\bf U}_1 \rangle , \langle {\bf U}_2 \rangle , \langle {\bf U}_3 \rangle $ & $\langle N_1 \rangle , \langle N_2 \rangle , \langle N_3 \rangle $ \\ \hline
\multirow{11}{*}{Corresponding Bases} & ${\bf B}_{123}$ & $\emptyset$ \\ 
{} & ${\bf B}_{12}$ & $V_3$ \\ 
{} & ${\bf B}_{13}$ & $V_2$ \\ 
{} & ${\bf B}_{23}$ & $V_1$ \\ 
{} & ${\bf B}_{1(2,3)}$ & $V_{23X}$ \\ 
{} & ${\bf B}_{2(1,3)}$ & $V_{13X}$ \\ 
{} & ${\bf B}_{3(1,2)}$ & $V_{12X}$ \\ 
{} & ${\bf B}_{1c}$ & $V_{23R}$ \\ 
{} & ${\bf B}_{2c}$ & $V_{13R}$ \\ 
{} & ${\bf B}_{3c}$ & $V_{12R}$ \\ 
{} & $\emptyset$ & $V_{123}$ \\ \bottomrule
\end{tabular}
\end{center}
\caption{Correspondence between Lemma \ref{lemma:decomposition} and the channel decomposition in \cite{wang2017degrees} }
\label{tab:comparison}
\end{table}
A distinction is apparent in the first and last rows of the table. According to the first row, \cite{wang2017degrees} assumes that the three subspaces have empty intersection, whereas Lemma \ref{lemma:decomposition} accounts for this space with the basis representation ${\bf B}_{123}$. On the other hand, according to the last row, Lemma \ref{lemma:decomposition} assumes the complement of the span of three subspaces is empty, whereas \cite{wang2017degrees} accounts for this space with the basis representation $V_{123}$. This distinction arises mainly because in the LCBC setting the complement of span of the three spaces is uninteresting (data dimensions that are neither known nor desired by any user), whereas in the MIMO BC the intersection of the three nullspaces is similarly uninteresting (transmit dimensions that are nulled at every receiver), and each problem naturally eliminates the uninteresting spaces for simplicity. The distinction is not  significant, however, since the omitted spaces can be trivially included. A bit more significant distinction that is not apparent from the table is that as an additional feature, Lemma \ref{lemma:decomposition} chooses the basis matrices ${\bf B}_{1(2,3)}, {\bf B}_{2(1,3)}, {\bf B}_{3(1,2)}$ carefully to satisfy the matrix-sum condition ${\bf B}_{1(2,3)}+{\bf B}_{2(1,3)}={\bf B}_{3(1,2)}$, which facilitates code design in the LCBC. Such an explicit specialization of bases is not considered by the corresponding construction in \cite{wang2017degrees}. Setting aside these finer  distinctions, it is quite remarkable that the subspace decompositions in \cite{wang2017degrees} and Lemma \ref{lemma:decomposition}, obtained independently in different contexts, turn out to be in one-to-one correspondence. The  one-to-one correspondence constitutes strong evidence of the fundamental significance of the decomposition, as well as a verification of the same concept from two perspectives.

While the two decompositions are intuitively similar, there are several underlying technical details that prevent the direct application of the decomposition in \cite{wang2017degrees} to the LCBC problem. Note that the proof in \cite[Sec. 3]{wang2017degrees} relies on the existence of an orthogonal complement, i.e., a linear subspace that is both orthogonal and complementary to a given linear subspace. The existence of an orthogonal complement is guaranteed over $\mathbb{C}$, but not over $\mathbb{F}_q$. For instance, self-orthogonality is a prominent theme in error correction code design over finite fields. Removing the requirement of orthogonality and just using  any complement space instead does not automatically resolve the issue, because the complement space needs to be chosen carefully to achieve the correct alignment of spaces. The orthogonality of the complement space helps to achieve the desired alignment in the proof of \cite[Sec. 3]{wang2017degrees}. Over $\mathbb{F}_q$, since we are not guaranteed orthogonal complements, this choice is non-trivial). This is important because the alignment aspect of subspaces (any two subspaces contain the third) is  what makes the subspace decomposition non-trivial. Furthermore, the proof of correctness of the subspace decomposition in \cite[Sec. 3]{wang2017degrees} applies to \emph{almost all} spaces, since the argument relies on the values taken by ranks \emph{almost surely}. Over arbitrary $\mathbb{F}_q$, an `almost-surely' guarantee is not meaningful. Indeed, the proof of correctness of the decomposition is shown for \emph{all} realizations in Lemma \ref{lemma:decomposition}.\footnote{It is noteworthy that the proof of  Lemma \ref{lemma:decomposition} also extends to the field of complex numbers. The proof  is based on linear independence/dependence of subspaces which holds over both $\mathbb{F}_q$ and $\mathbb{C}$.}

Let us introduce a simple setup to further illustrate these points. Consider the following three (complex) channel matrices $H_1,H_2,H_3$.
\begin{align}
	H_1 = \begin{bmatrix}
		1&0&0\\0&1&0
	\end{bmatrix},~
	H_2 = \begin{bmatrix}
		0&0&1
	\end{bmatrix},~
	H_3 = \begin{bmatrix}
		1&1&0\\0&0&1
	\end{bmatrix}.
\end{align}
Denote $\mathcal{N}(A)$ as the nullspace of $A$, i.e., the set of $X$ such that $AX=0$. Let $N_k, k\in \{1,2,3\}$ be a basis (written in columns vectors) of $\mathcal{N}(H_k)$, i.e.,
\begin{align}
	N_1 =  \begin{bmatrix}
		0\\0\\1
	\end{bmatrix},~~
	N_2 =  \begin{bmatrix}1&0\\0&1\\0&0\end{bmatrix},~~
	N_3 =  \begin{bmatrix}1\\1\\0\end{bmatrix}.
\end{align}
Let $\langle A \rangle$ denote the linear subspace spanned by the columns of $A$. For example, $\langle N_1 \rangle = \mathcal{N}(H_1)$. $A\cap B$ denotes a basis that spans the subspace  $\langle A \rangle \cap \langle B \rangle$. In addition, if $\langle A \rangle$ is a subspace of $\langle U \rangle$, then  let $A^{\bot}_U$ denote a basis of the intersection of $\mathcal{N}(A^T)$ with $\langle U \rangle$. It follows that $A^T A^{\bot}_U = {\bf 0}$ and $\rk(A)+\rk(A^{\bot}_U) = \rk(U)$. Note that for $A$ defined in $\mathbb{C}$, $[A, A^{\bot}_U]$  spans $\langle U \rangle$. Thus, $A^{\bot}_U$ is the `orthogonal complement' of $A$ (within the subspace $\langle U\rangle$). In particular, $\langle A \rangle$ and $\langle A^\bot_U \rangle$ have no non-trivial intersection. However, this is not always true in finite fields, e.g, $A = [1,1]^T \in \mathbb{F}_2^{2\times 1}$, and $A^\bot = [1,1]^T = A$. 

Using the same notation as in \cite[Sec. 3]{wang2017degrees}, let us define
\begin{align}
	&H_{123} \triangleq \begin{bmatrix}
		H_1\\H_2\\H_3
	\end{bmatrix}=
	\begin{bmatrix}
		1&0&0\\0&1&0 \\0&0&1\\1&1&0\\0&0&1
	\end{bmatrix}, ~ H_{12} \triangleq \begin{bmatrix}
		H_1\\H_2
	\end{bmatrix},~
	H_{13} \triangleq \begin{bmatrix}
		H_1\\H_3
	\end{bmatrix},~
	H_{23} \triangleq \begin{bmatrix}
		H_2\\H_3
	\end{bmatrix}.
\end{align}

With these definitions, let us apply the decomposition method in \cite[Sec. 3.4.6]{wang2017degrees} on $N_1, N_2, N_3$. A summary of the steps is given below.
\begin{enumerate}
	\item Find $V_1 = N_2 \cap N_3$ as a basis of $\mathcal{N}(H_{23})$.
	\item Find $V_2 = N_1\cap N_3$ as a basis of $\mathcal{N}(H_{13})$.
	\item Find $V_3 = N_1\cap N_2$ as a basis of $\mathcal{N}(H_{12})$.
	\item Find $V_{13}$ as a basis of the orthogonal complement of $\langle[V_1,V_3]\rangle$ within $\langle N_2 \rangle$.
	\item Find $V_{23}$ as a basis of the orthogonal complement of $\langle[V_2,V_3]\rangle$ within $\langle N_1 \rangle$.
	\item Find $V_{12}$ as a basis of the orthogonal complement of $\langle[V_1,V_2]\rangle$ within $\langle N_3 \rangle$.
	\item Find independent bases $V_{13X},V_{12X}, V_{12R}, V_{13R}, V_{23R}$ by (3.43) - (3.47) of \cite{wang2017degrees}.
\end{enumerate}
 
\noindent By Steps 1-3,
\begin{align}
	&V_1 = N_2\cap N_3 = \begin{bmatrix}1&1&0\end{bmatrix}^T, \\
	&V_2 = N_1\cap N_3 = [~],\\
	&V_3 = N_1\cap N_2 = [~].
\end{align}
At this point, note that the next step, which is to construct $V_{13}$ does not work. According to \cite{wang2017degrees} (3.38),
\begin{align}
V_{13} = (H_{123}^T H_{123})^{-1} H_{123}^T (H_{123}[V_1,V_3])^\bot_{H_{123}N_2},
\end{align}
but $(H_{123}^T H_{123})$ is not invertible although $H_{123}$ has full column rank $3$. This is because the orthogonal complement of $H_{123}$ has nontrivial intersection with itself. This can happen in $\mathbb{F}_q$ but not in $\mathbb{C}$.

Alternatively, if we use the implicit definition, \cite[(3.37)]{wang2017degrees}, we may avoid the inversion of $H_{123}^TH_{123}$, but a similar problem will again emerge. \cite[(3.37)]{wang2017degrees}  requires that
\begin{align}
	&H_{123}V_{13} = (H_{123}[V_1,V_3])^\bot_{H_{123}N_2},
\end{align}
which is a basis of the orthogonal complement of $\langle H_{123}[V_1,V_3] \rangle$ within the subspace $\langle H_{123}N_2 \rangle$.
Note that 
\begin{align}
	H_{123}[V_1,V_3] = \begin{bmatrix}
		1&1&0&0&0
	\end{bmatrix}^T
\end{align}
and 
\begin{align}
	H_{123}N_2 = \begin{bmatrix}
		1&1&0&0&0\\1&0&0&1&0
	\end{bmatrix}^T.
\end{align}
It is readily verified that the only solution to $(H_{123}[V_1,V_3])^\bot_{H_{123}N_2} = [1,1,0,0,0]^T$. Therefore, we obtain that $H_{123}V_{13} = [1,1,0,0,0]^T$. This gives us $V_{13} = [1,1,0]^T = V_1$, which is linearly \emph{dependent} on $V_1$. However, $V_{13}$ is required to be linearly independent of $V_1$ in Step 4.

Next let us consider Step 7. At a high level, the motivation of Step 7 is that the three spaces $V_{13}, V_{12}$ and $V_{23}$ are not independent in general and therefore a finer decomposition is needed. In \cite{wang2017degrees},  $6$ subspaces are introduced, namely $V_{13X}, V_{13R}, V_{12X}, V_{12R}, V_{23X}, V_{23R}$, so that $[V_{13X}, V_{13R}]$ spans $\langle V_{13} \rangle$, $[V_{12X}, V_{12R}]$ spans $\langle V_{12} \rangle$, and $\langle [V_{23X}, V_{23R}] \rangle$ spans $\langle V_{12} \rangle$. Note that these subspaces identified by the algorithm have such properties that 
\begin{itemize}
	\item $[V_{13X}, V_{13R}, V_{12X}, V_{12R}, V_{23R}]$ are independent and span $\langle [V_{13}, V_{12}, V_{23}] \rangle$. Besides, $[V_{**X}, V_{**R}]$ spans $\langle V_{**} \rangle$, for $** \in \{13,12,23\}$.
	\item  In addition, $V_{23X}$ is linearly representable by $[V_{13X}, V_{12X}]$, i.e., $\langle V_{23X} \rangle \subset \langle [V_{13X}, V_{12X}] \rangle$. Also, $\langle V_{12X} \rangle \subset \langle [V_{13X}, V_{23X}] \rangle$ and $\langle V_{13X} \rangle \subset \langle [V_{12X}, V_{23X}] \rangle$.
	\item $V_{13X}, V_{12X}, V_{23X}$ are aligned in a way such that $\langle H_{1}V_{13X}\rangle = \langle H_{1}V_{12X} \rangle$, $\langle H_{2}V_{12X}\rangle = \langle H_{2}V_{23X} \rangle$ and $\langle H_{3}V_{13X}\rangle = \langle H_{3}V_{23X} \rangle$.
\end{itemize}
The critical alignment is the second one, i.e., we need $V_{13X}, V_{12X}, V_{23X}$ such that each one is contained in the span of the other two. Let us see what happens if we replace the `orthogonal complement' space (which may not exist over $\mathbb{F}_q$) with any `complement' space (which do exist over $\mathbb{F}_q$). The following toy example shows that simply replacing `orthogonal complement' with `any complement' may not work. Suppose we are given,
\begin{align}
	N_1 = \begin{bmatrix}
		1 & 0 \\ 0 & 1 \\ 0 & 0 
	\end{bmatrix},~~
	N_2 = \begin{bmatrix}1&0\\1&1\\0&1\end{bmatrix},~~
	N_3 = \begin{bmatrix}1\\1\\1\end{bmatrix},
\end{align}
with entries all defined in $\mathbb{F}_2$.
It follows by definition that,
\begin{align}
	&V_1 = N_2 \cap N_3 = [~], ~ V_2 = N_1\cap N_3 = [~],~ V_3 = N_1\cap N_2 = \begin{bmatrix}
		1 \\ 1 \\0
	\end{bmatrix}.
\end{align}
Next, say we choose the complements (not necessarily orthogonal) as,
\begin{align} \label{eq:V131223}
	V_{13} = \begin{bmatrix}
		0 \\ 1 \\ 1
	\end{bmatrix}, ~~
	V_{12} = \begin{bmatrix}
		1 \\ 1 \\ 1
	\end{bmatrix}, ~~
	V_{23} = \begin{bmatrix}
		0 \\ 1 \\ 0
	\end{bmatrix},
\end{align}
so that $[V_1, V_3, V_{13}]$ span $\langle N_2 \rangle$, $[V_1, V_2, V_{12}]$ span $\langle N_3 \rangle$, and $[V_2, V_3, V_{23}]$ span $\langle N_1 \rangle$. Such an attempt to translate Steps $1$ -- $6$  to the finite field case does not work because now we see that $[V_1, V_2, V_3, V_{13}, V_{12}, V_{23}]$ are not linearly independent. In particular, with this choice there is no non-trivial $V_{13X}, V_{23X}, V_{12X}$ so that any one is contained in the span of the other two. What this shows is that the complement spaces $V_{13},V_{12},V_{23}$ need to be chosen carefully. The proof in \cite{wang2017degrees} does not face this problem, because orthogonal spaces and complement spaces are compatible for complex numbers and therefore calculations of orthogonal complements help to identify the appropriate $V_{13}, V_{12}, V_{23}$. Over $\mathbb{F}_q$ this does not work. Fortunately, there does exist a solution to $[V_{13}, V_{12}, V_{23}]$ so that such non-trivial $V_{13X}, V_{23X}, V_{12X}$ can be found (see Steps $11$--$14$ in Appendix \ref{proof:decomposition} for the  details of this key aspect of the proof in the general case). Indeed, with a more careful choice of subspaces we have,
\begin{align} \label{eq:V131223_refined}
	V_{13} = \begin{bmatrix}
		0 \\ 1 \\ 1
	\end{bmatrix}, ~~
	V_{12} = \begin{bmatrix}
		1 \\ 1 \\ 1
	\end{bmatrix}, ~~
	V_{23} = \begin{bmatrix}
		1 \\ 0 \\ 0
	\end{bmatrix},
\end{align}
\begin{align} 
	V_{13X} = \begin{bmatrix}
		0 \\ 1 \\ 1
	\end{bmatrix}, ~~
	V_{12X} = \begin{bmatrix}
		1 \\ 1 \\ 1
	\end{bmatrix}, ~~
	V_{23X} = \begin{bmatrix}
		1 \\ 0 \\ 0
	\end{bmatrix},
\end{align}
and then
\begin{align}
	V_{13R} = V_{12R} = V_{23R} = [~].
\end{align}
Thus, following our proof we successfully found the $V_{13X}, V_{13R}, V_{12X}, V_{12R}, V_{23X}, V_{23R}$ that satisfy the desired (first two) properties described in Step 7.

\bibliographystyle{IEEEtran}
\bibliography{3CBC}

\end{document}